\documentclass[10pt,conference]{IEEEtran}
\IEEEoverridecommandlockouts

\usepackage{cite}
\usepackage{amsmath,amssymb,amsfonts}
\usepackage{algorithmic}
\usepackage{algorithm}
\usepackage{graphicx}
\usepackage{textcomp}
\usepackage{xcolor}
\usepackage[hidelinks]{hyperref}
\usepackage{booktabs}
\usepackage{multirow}
\usepackage{tabularx}
\usepackage{amsthm}
\usepackage{subcaption}
\usepackage{float}
\usepackage{placeins}
\usepackage{balance}

\usepackage{tikz}
\usetikzlibrary{positioning, arrows, calc, fit, backgrounds, decorations.pathreplacing, shapes.geometric}

\newtheorem{theorem}{Theorem}

\newtheorem{definition}{Definition}

\newtheorem{proposition}{Proposition}

\theoremstyle{definition}
\newtheorem{assumption}{Assumption}

\newtheorem{corollary}{Corollary}[theorem]

\theoremstyle{remark}
\newtheorem{remark}{Remark}

\graphicspath{{figures/}}

\def\BibTeX{{\rm B\kern-.05em{\sc i\kern-.025em b}\kern-.08em
    T\kern-.1667em\lower.7ex\hbox{E}\kern-.125emX}}

\begin{document}

\title{FairWave: A Fairness-Aware Asynchronous DAG-BFT Consensus}

\author{\IEEEauthorblockN{Syariful Mujaddiq}
	\IEEEauthorblockA{Independent Researcher} \\
	\texttt{syariful.mujaddiq@coconut.or.id}}

\maketitle

\begin{abstract}
Proof-of-Stake DAG-BFT consensus faces a trilemma between Sybil resistance, reward fairness, and plutocracy. Existing protocols prioritize liveness over fair stake-based selection, driving longitudinal centralization.

FairWave is a dual-channel DAG-BFT protocol that separates anchor selection from reward distribution. The selection channel is super-linear in stake, guaranteeing Sybil gain $<1$ for $K>1$; the reward channel is sub-linear via square-root stake normalization. DAG-derived uptime and latency metrics eliminate external oracles, and lagged reputation breaks circular dependency between selection outcomes and weights.

Evaluated through approximately 550,000 Monte Carlo rounds against eight baselines, FairWave shows Gini 0.140 (vs.~Pure-PoS 0.490), monotone HHI reduction from 0.039 to 0.020 over 50,000 epochs, and optimal Sybil split $K^{*}=1$. Safety follows unconditionally from the $2f+1$ commit rule; the liveness model predicts monotone degradation from $94.0\%$ at $b=0.20$ to $74.0\%$ at $b=1/3$, consistent with the architectural expectation of no discontinuous cliff.
\end{abstract}

\begin{IEEEkeywords}
	Asynchronous BFT, DAG consensus, Proof-of-Stake, Sybil resistance, reward fairness, decentralization measures, sensitivity analysis, dual-channel consensus.
\end{IEEEkeywords}

\section{Introduction}
\label{sec:intro}

\subsection{Motivation}

Long-running Proof-of-Stake blockchains exhibit empirical centralization despite formally claiming decentralized governance~\cite{motepalli2024, centralization_pos2024, buterin2020}. Pure stake-based consensus protocols suffer from three well-characterized pathologies:

\begin{enumerate}
	\item \textbf{Rich-get-richer dynamics.} When rewards are proportional to stake, accumulated rewards compound into stake growth at a rate proportional to current holdings, inducing exponential divergence of the Herfindahl-Hirschman Index (HHI) toward one.
	\item \textbf{Sybil vulnerability under sub-linear weighting.} Alternative rules such as \emph{Square Root Stake Weighting} (SRSW) $\sqrt{\text{stake}}$ or \emph{Logarithmic Stake Weighting} (LSW) $\ln(1+\text{stake})$ that attempt to achieve fairness instead introduce Sybil gain proportional to $\sqrt{K}$ or $K$ respectively, where $K$ denotes the split factor.
	\item \textbf{Plutocratic convergence.} Under proportional rewards, the share of power of the top-1 validator increases monotonically, eventually violating the BFT safety threshold $f < n/3$.
\end{enumerate}

These pathologies are not merely theoretical---empirical measurements on Ethereum~2.0, Cosmos Hub, and Solana mainnet show Nakamoto coefficients as low as 3--7 for the top staking pools~\cite{decentralization_measurement2022, eth_beacon2023, daian2020}.

\subsection{Contributions}

This work presents the following contributions:

\begin{itemize}
	\item \textbf{C1.} A novel dual-channel consensus architecture in which selection weights and reward weights exhibit \emph{opposite curvature} with respect to stake-mitigates the Sybil-fairness-plutocracy trilemma.
	\item \textbf{C2.} A Closed-form analytical derivation of Sybil gain bounds for four reward rules (Pure-PoS, SRSW, LSW, FairWave), validated to machine precision ($\epsilon = 0.0$) against $20,160$ numerical evaluation cells.
	\item \textbf{C3.} Empirical evaluation of nine tasks spanning approximately $550,000$ Monte Carlo rounds, compared against protocol models representing Narwhal+Tusk~\cite{narwhal2022}, Bullshark~\cite{bullshark2022}, Mysticeti~\cite{mysticeti2024}, HotStuff~\cite{hotstuff2019}, HotStuff-2~\cite{hotstuff2_2023}, PBFT~\cite{pbft1999}, Tendermint~\cite{tendermint2014}, and Algorand~\cite{algorand2017} under a unified simulation framework.
	\item \textbf{C4.} Quantitative characterization of robustness via One-At-a-Time (OAT) elasticity analysis, Sobol variance decomposition, and combined perturbation mapping---To the best of our knowledge, this work presents the first multi-method input sensitivity analysis (combining OAT, Sobol, and combined perturbation) applied to DAG-BFT consensus parameters.
\end{itemize}

\subsection{Paper Organization}

Section~\ref{sec:related} surveys related work. Section~\ref{sec:model} formalizes the threat model and system model. Section~\ref{sec:design} presents the FairWave protocol design. Section~\ref{sec:math} derives the mathematical foundations. Section~\ref{sec:methodology} describes the empirical methodology. Section~\ref{sec:results} presents results from nine evaluation tasks. Section~\ref{sec:discussion} discusses trade-offs, and Section~\ref{sec:conclusion} concludes.

\section{Related Work and Background}
\label{sec:related}

\subsection{Asynchronous BFT Consensus}

Classical PBFT~\cite{pbft1999} achieves Byzantine agreement under partial synchrony with message complexity $O(n^2)$ per view. HotStuff~\cite{hotstuff2019} reduces view-change complexity to $O(n)$ via threshold signatures and pipelined chain structure. HotStuff-2~\cite{hotstuff2_2023} further optimizes to two-phase commit with optimal responsiveness. The DAG-BFT paradigm, initiated by DAG-Rider~\cite{dagrider2022} and Narwhal+Tusk~\cite{narwhal2022}, separate data dissemination from consensus ordering by embedding transactions in a structured Directed Acyclic Graph (DAG). Subsequent refinements---Bullshark~\cite{bullshark2022} (reducing commit latency to 2 waves), Shoal~\cite{shoal2023} (pipelining and leader reputation), Mysticeti~\cite{mysticeti2024} (achieving fast-path commit sub-wave), and Shoal++~\cite{shoalpp2024} (near-optimal latency)---further optimize the latency-throughput frontier while maintaining Byzantine tolerance $f < n/3$. The asynchronous BFT landscape was pioneered by the HoneyBadgerBFT protocol~\cite{miller2016}, which demonstrated that practical asynchronous consensus is achievable without timing assumptions, building on the theoretical foundation of Ben-Or's agreement protocol~\cite{benor1983}.

\subsection{Proof-of-Stake Reward Rules}

The choice of stake-to-weight mapping fundamentally determines the Sybil resistance and fairness properties of a protocol. Table~\ref{tab:rules_comparison} summarizes four representative rules.

\begin{table}[!tb]
	\centering
	\caption{Comparison of PoS reward rules and Sybil gain characteristics. $P$ denotes the real-time performance score (Eq.~\ref{eq:wsel}); $S$ denotes the normalized stake factor (Eq.~\ref{eq:stake_factor}).}
	\label{tab:rules_comparison}
	\begin{tabular}{lll}
		\toprule
		\textbf{Rule}                & \textbf{Weight function} & \textbf{Sybil gain}       \\
		\midrule
		Pure-PoS~\cite{algorand2017} & $w = s$                  & $g(K) = 1$                \\
		SRSW~\cite{cosmos2019}       & $w = \sqrt{s}$           & $g(K) = \sqrt{K}$         \\
		LSW                          & $w = \ln(1+s)$           & $g(K) \approx K$          \\
		\textbf{FairWave}            & $w = P \cdot S^2$        & $g(K) < 1\;\forall K > 1$ \\
		\bottomrule
	\end{tabular}
\end{table}

Pure-PoS (Algorand, Tendermint) is Sybil-neutral: stake splitting yields neither gain nor loss. SRSW (used conceptually in Cosmos governance discussions~\cite{cosmos2019}) introduces sub-linear fairness at the cost of Sybil gain $\sqrt{K}$. LSW worsens vulnerability to approximately linear gain relative to split factor, rendering it catastrophically exploitable. The Ouroboros protocol~\cite{ouroboros2017} introduced provably secure PoS with formal reward analysis, and Ethereum~2.0's Casper FFG~\cite{ethereumpos2022} adopts stake-based weighting with inactivity penalties. As a baseline for divergent concentration, we also define \emph{PoS$^2$}, a strawman protocol where rewards are proportional to $s_i^2$, which exacerbates plutocratic dynamics. Alternative approaches to address wealth compounding have been proposed~\cite{frd2022, pos_reward_governance2024, pos_concentration2022}, but none simultaneously mitigates the Sybil-fairness-plutocracy trilemma.

\subsection{Decentralization Metrics}

We employ three complementary decentralization metrics: the \emph{Nakamoto coefficient}~\cite{decentralization_measurement2022} (minimum coalition to achieve a $1/3$ threshold, Fig.~\ref{fig:nakamoto_baseline}); the \emph{Gini coefficient}~\cite{gini1921} (inequality of reward distribution); and the \emph{Herfindahl-Hirschman Index} (HHI, sum of squared market shares). These metrics capture different aspects of centralization: coalition vulnerability, distributional inequality, and concentration.

\begin{figure}[!tb]
	\centering
	\includegraphics[width=\columnwidth]{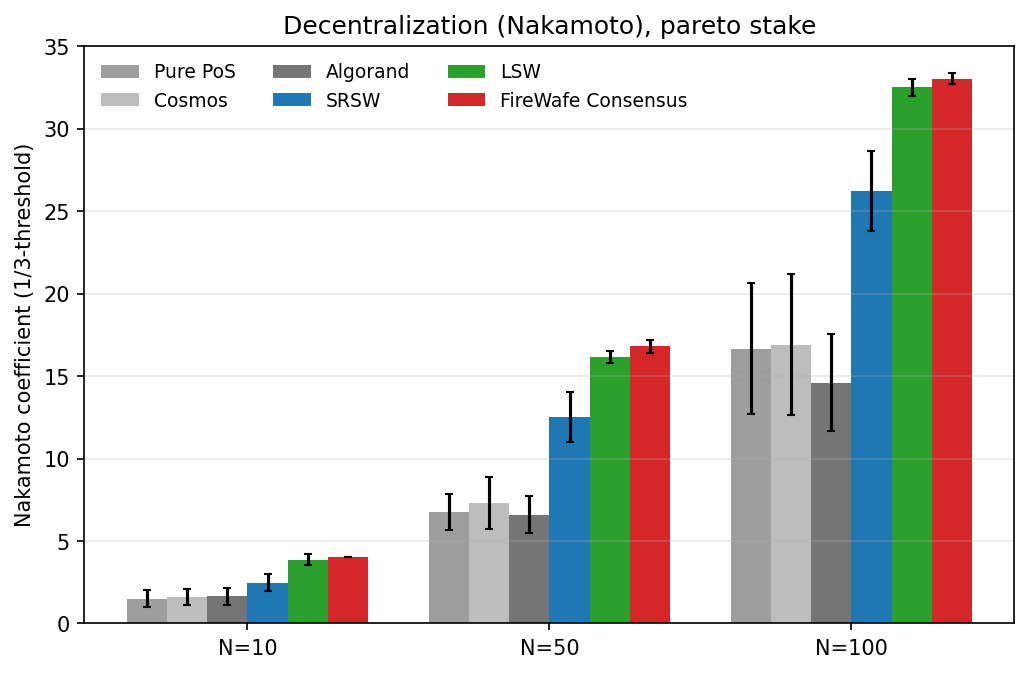}
	\caption{The Nakamoto coefficient, which represents the minimum coalition needed for 1/3 power, is evaluated across six reward rules and three network sizes. FairWave consistently shows a higher Nakamoto coefficient than Pure-PoS across all scales, indicating better decentralization resilience.}
	\label{fig:nakamoto_baseline}
\end{figure}

\subsection{Sensitivity Analysis in Distributed Systems}

Individual sensitivity methods (OAT elasticity~\cite{cukier1973}, Sobol variance decomposition~\cite{saltelli2010,sobol2001}) have been applied to distributed systems in prior work. To the best of our knowledge, this work is the first to combine all three modalities---OAT, Sobol, and combined perturbation mapping---within a unified input sensitivity characterization for a DAG-BFT consensus protocol, providing quantitative evidence of parametric stability under joint input variation.

\section{Threat Model and System Model}
\label{sec:model}

\subsection{System Model}

We consider a set of $n$ validators $\mathcal{V} = \{v_1, \ldots, v_n\}$ with $n \geq 4$, each possessing attributes $(s_i, u_i, \text{rep}_i, \ell_i)$ representing stake, uptime, reputation score, and latency score. Our system model follows the standard distributed systems formalism~\cite{cachin2011}: processes communicate via authenticated point-to-point channels, and we assume a computationally bounded adversary.

\textbf{Stake bounds.} Validator stakes are constrained: $s_i \in [S_{\min}, S_{\max}]$ with $S_{\min} = 5.000$ and $S_{\max} = 50.000$ token units. The normalized stake factor is defined as:
\begin{equation}
	S(i) = \sqrt{s_i / S_{\max}}, \quad S(i) \in [\sqrt{S_{\min}/S_{\max}}, 1]
	\label{eq:stake_factor}
\end{equation}

\textbf{Reward score.} The multi-factor composite reward score for validator $i$ is:
\begin{equation}
	R(i) = \alpha \cdot S(i) + \beta \cdot U(i) + \gamma \cdot \text{Rep}_{\text{active}}(i, e) + \delta \cdot L(i)
	\label{eq:reward_score}
\end{equation}
with simplex constraint $\alpha + \beta + \gamma + \delta = 1$ and default parameterization $(\alpha, \beta, \gamma, \delta) = (0.20;\; 0.25;\; 0.40;\; 0.15)$.

\textbf{Lagged reputation.} The reputation factor $\text{Rep}(i)$ evolves throughout each epoch via \emph{increment}, \emph{decay}, and \emph{slashing} (see Algorithm~\ref{alg:fairwave} and Section~\ref{sec:discussion}.C). To prevent circular dependency between epoch $e$ selection results and epoch $e$ selection weights, FairWave introduces a temporal separation between \emph{observed} and \emph{active} reputations. Let $\text{Rep}_{\text{final}}(i, e)$ denote the final reputation value of validator~$i$ at epoch closure (i.e., after all \emph{increments}, \emph{decay}, and \emph{slashing} throughout epoch~$e$ have been applied). Then the active reputation used for $R(i)$ computation at epoch~$e$ is:
\begin{equation}
	\text{Rep}_{\text{active}}(i, e) = \text{Rep}_{\text{final}}(i, e - 1)
	\label{eq:rep_active}
\end{equation}
In other words, the reputation value used in $W_{\text{rew}}$ at epoch~$e$ is the value \emph{frozen} at epoch~$e-1$ closure, and reputation updates occurring throughout epoch~$e$ only affect $W_{\text{rew}}$ at epoch~$e+1$. This is analogous to \emph{delayed validator activation} practice in some production PoS protocols but applied here specifically to the reputation component to ensure \emph{forward-causality} property in the dual-channel architecture (see Proposition~\ref{prop:no_circular} in Section~\ref{sec:math}).

Fig.~\ref{fig:lagged_rep_timing} illustrates the lagged-reputation timeline and the SNAPSHOT phase that freezes $\mathrm{Rep}_{\mathrm{final}}(e)$ for use at epoch $e{+}1$.
\begin{figure*}[!tb]
	\centering
	\includegraphics[width=0.85\textwidth,keepaspectratio]{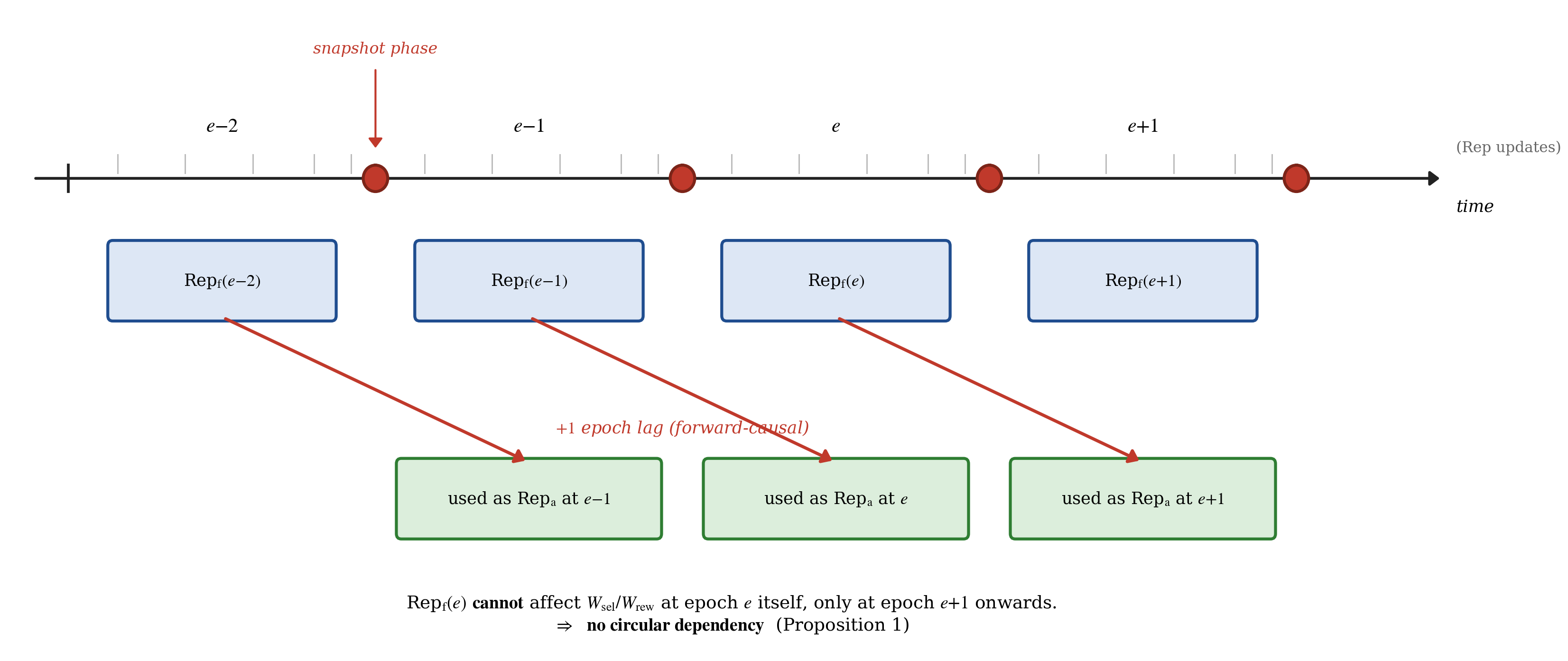}
	\caption{Lagged reputation timing. Reputation updates (small ticks) accrue continuously through each epoch. At every epoch boundary, the \emph{SNAPSHOT} phase (red dot) freezes $\mathrm{Rep}_{\mathrm{final}}(e)$. That frozen value is then used as $\mathrm{Rep}_{\mathrm{active}}$ \emph{one epoch later} (red arrows). This forward-causal arrangement eliminates the circular dependency between epoch-$e$ selection results and epoch-$e$ selection weights (Proposition~\ref{prop:no_circular}).}
	\label{fig:lagged_rep_timing}
\end{figure*}

\textbf{Validator registry.} The validator set $\mathcal{V}_e$ for epoch $e$ is determined atomically at epoch $e-1$ boundary via a seven-phase ceremony: \texttt{FREEZE $\to$ SLASH $\to$ REGISTRY $\to$ SEED $\to$ SNAPSHOT $\to$ RESET $\to$ BARRIER}. The set is locked per-epoch; no stake modifications occur intra-epoch.

Fig.~\ref{fig:epoch_ceremony} illustrates the atomically-executed seven-phase ceremony that establishes the next-epoch validator set and the frozen reputation snapshot.
\begin{figure*}[!tb]
	\centering
	\includegraphics[width=0.85\textwidth,keepaspectratio]{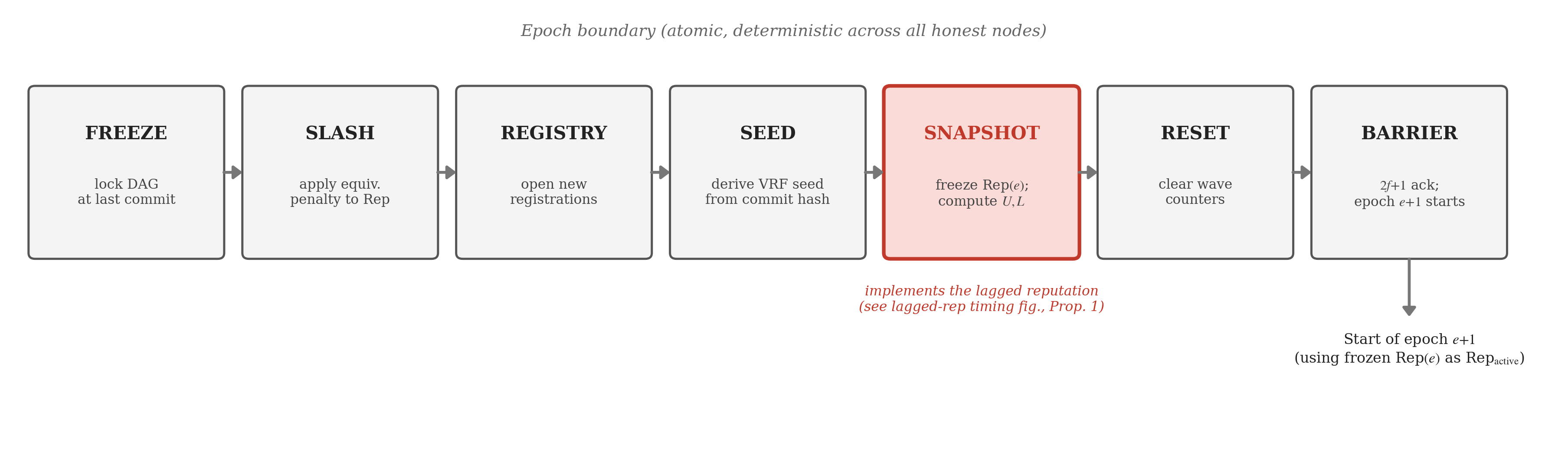}
	\caption{Per-epoch seven-phase ceremony executed atomically at every epoch boundary. The pipeline ensures all honest validators converge on the next-epoch validator set $\mathcal{V}_{e+1}$, the VRF seed, the frozen reputation snapshot $\mathrm{Rep}(e)$, and the new weight vectors $W_{\mathrm{sel}}^{e+1}, W_{\mathrm{rew}}^{e+1}$ \emph{without an external aggregator}. The \emph{SNAPSHOT} phase (highlighted red) implements the lagged-reputation freezing visualized in Fig.~\ref{fig:lagged_rep_timing}.}
	\label{fig:epoch_ceremony}
\end{figure*}

\subsection{Network Model}

For \emph{safety}, the protocol maintains safety under full asynchrony. An anchor commit requires observing strong support from $\geq 2f+1$ vertices that are causally reachable---not from the delivery deadline.

For \emph{liveness}, we assume partial synchrony (Dwork-Lynch-Stockmeyer~\cite{dls1988}): there exists a Global Stabilization Time (GST) after which all messages are delivered within delay bound $\Delta$. Before GST, liveness is not guaranteed, but safety is maintained unconditionally.

\subsection{Adversary Model}

\textbf{Static Byzantine adversary.} The adversary controls at most $f = \lfloor(n-1)/3\rfloor$ validators (by weighted stake). Byzantine validators can equivocate (produce conflicting vertices for the same round), selectively withhold messages, and coordinate timing within delay bound $\Delta$.

\textbf{Sybil's adversary.} An attacker controlling aggregate stake $X$ can split into $K$ Sybil identities~\cite{douceur2002}, each holding $X/K$ stake. We analyze selection gain $g(K)$ as the ratio of adversarial power after splitting to power before splitting.

\textbf{Equivocation detection.} Conflicting vertices from the same validator in the same round are detected with probability $p_{\text{detect}} = 0.95$ via VRF proof validation and cross-referencing DAG parent sets.

\subsection{Derivation of Non-Stake Metrics from DAG Structure}
\label{sec:dag_metrics}

The composite reward score (Eq.~\ref{eq:reward_score}) combines four factors: stake $S(i)$, uptime $U(i)$, reputation $\text{Rep}_{\text{active}}(i, e)$, and latency $L(i)$. Stake is external to consensus (locked in validator registry per epoch), and reputation is updated endogenously from commit results (see Algorithm~\ref{alg:fairwave} and Eq.~\ref{eq:rep_active}). This subsection formalizes that the remaining two factors---uptime and latency---are \emph{deterministically derived from the finalized DAG structure}, so honest validators agree on their values without requiring an \emph{external oracle}.

\textbf{Notation.} For epoch~$e$, let $\mathcal{R}_e = \{r_1, r_2, \ldots, r_{R_e}\}$ denote the set of DAG rounds in that epoch, with $R_e = |\mathcal{R}_e|$. Let $\mathcal{D}_e$ denote the finalized DAG throughout epoch~$e$, i.e., the sub-DAG all of whose vertices have entered the \emph{causal history} of at least one committed anchor. For validator~$i$, define $\mathcal{V}_i(e) = \{v \in \mathcal{D}_e : \text{proposer}(v) = i\}$ as the set of vertices proposed by~$i$ and entered into the finalized DAG at epoch~$e$.

\begin{definition}[DAG-Derived Uptime]
	\label{def:uptime}
	Uptime of validator~$i$ at epoch~$e$ is defined as the ratio of rounds in which vertex~$i$ entered the finalized DAG to total epoch rounds:
	\begin{equation}
		U(i, e) = \frac{|\mathcal{V}_i(e)|}{R_e} \in [0, 1].
		\label{eq:uptime}
	\end{equation}
\end{definition}

In practice, DAG metrics can be used to flag suspicious validators.

Since in DAG-BFT each validator proposes at most one vertex per round, $|\mathcal{V}_i(e)| \le R_e$ and $U(i,e)$ directly measures continuous participation: value $U = 1$ indicates the validator is present each round throughout the epoch; value $U = 0$ indicates the validator is perpetually offline.

\begin{definition}[DAG-Derived Latency]
	\label{def:latency}
	For vertex $v \in \mathcal{V}_i(e)$, let
	\begin{multline}
		\label{eq:tau_2f1}
		\tau_{2f+1}(v) = \min\Big\{\, r > \text{round}(v) \;\Big|\\
		\big|\{\, j : \exists\, v_j \in \mathcal{D}_e,\, \text{round}(v_j) \le r, \\
		v \in \text{anc}^{*}(v_j) \,\}\big| \ge 2f+1 \,\Big\},
	\end{multline}
	the first round where at least $2f+1$ distinct validators have vertices that reference $v$ causally (directly or transitively), with $\text{anc}^{*}(\cdot)$ denoting \emph{transitive parent closure}. The round-offset of vertex~$v$ is $\delta(v) = \tau_{2f+1}(v) - \text{round}(v) \ge 1$. The raw latency of validator~$i$ at epoch~$e$ is the median round-offset across all its vertices:
	\begin{equation}
		\tilde{\ell}(i, e) = \operatorname*{median}_{v \in \mathcal{V}_i(e)} \delta(v).
		\label{eq:latency_raw}
	\end{equation}
	For composition into the reward score, raw values are normalized to $[0,1]$ via a monotone-decreasing mapping:
	\begin{equation}
		L(i, e) = 1 - \min\!\left( \frac{\tilde{\ell}(i, e) - 1}{\ell_{\max} - 1},\; 1 \right),
		\label{eq:latency_score}
	\end{equation}
	with $\ell_{\max}$ a protocol parameter (default $\ell_{\max} = 2f+1$).
\end{definition}

Validators whose vertices are consistently referenced in the next round ($\delta = 1$) obtain $L = 1$; validators whose vertices are covered only after $\ge \ell_{\max}$ rounds obtain $L = 0$.

\textbf{Determinism and Byzantine tolerance.} Equations~\eqref{eq:uptime}--\eqref{eq:latency_score} reference only (i)~the finalized DAG structure $\mathcal{D}_e$ and (ii)~the validator set $\mathcal{V}_e$ at that epoch. Both objects are already agreed upon via the DAG-BFT commit mechanism itself: any \emph{honest node} running the protocol stores $\mathcal{D}_e$ that is identical (modulo \emph{causal history} from the same committed anchor) and validator registry $\mathcal{V}_e$ identical (locked at epoch~$e-1$ boundary). Consequently, functions $U(i, e)$ and $L(i, e)$ are \emph{deterministic functions of consensus state}, and any two \emph{honest nodes} computing them at epoch~$e$ boundary will obtain \textbf{bit-exact identical} numerical values. Byzantine tolerance over $U$ and $L$ values is \textbf{inherited from DAG-BFT}: no new attack surface is introduced, because Byzantine validators claiming different $U$ or $L$ values would be detected via local recomputation by any honest node.

\textbf{Eliminating the external oracle.} Unlike quality-based reward schemes commonly implemented on production PoS protocols (e.g.,\ Polkadot era points~\cite{polkadot2020} or Solana skip-rate penalties), which partially rely on external aggregators or oracles to report availability and latency, FairWave \emph{requires no component outside the consensus protocol set}. neither a separate committee, nor a trusted reporter, nor an additional communication channel for agreeing on $U$ and $L$ values. Consequently, FairWave's \emph{threat model} for reward computation is identical to the \emph{threat model} of the underlying DAG-BFT: if DAG consensus is safe under $f < n/3$, then reward score computation is also safe under the same assumption.

\section{FairWave Protocol Design}
\label{sec:design}

\subsection{Design Rationale}
\label{sec:design_rationale}

Three fundamental architectural decisions underpin FairWave: asynchronous security, DAG-based data dissemination, and VRF-based leader selection. We justify each in sequence.

\textbf{Why asynchronous BFT, not partially synchronous BFT?}
Classical leader-based protocols (PBFT~\cite{pbft1999}, HotStuff~\cite{hotstuff2019}, and Tendermint~\cite{tendermint2014}) depend on the \emph{Global Stabilization Time} (GST) assumption: after an unknown time, all messages arrive within a known bound $\Delta$. Before GST, liveness is \emph{not} guaranteed---if the network remains unstable, the protocol stalls until a timeout expires and a view change selects a new leader. On wide-area deployments with heterogeneous ISPs, transient partitions can persist for minutes, triggering cascading timeouts of many multiples of $\Delta$. Asynchronous BFT protocols, in contrast, guarantee safety and liveness under \emph{arbitrary} message delays and require only eventual delivery.

FairWave inherits this asynchronous property through its DAG-BFT structure: when a VRF-selected anchor is unreachable, the protocol executes an \emph{instant wave skip} with zero timeout delay, in contrast to partially synchronous protocols that block the entire timeout window ($\geq 8$~s in a typical Tendermint configuration). The \emph{safety} property (agreement and validity) is a formal property that follows from the $2f{+}1$ strong-support commit rule for any correct DAG-BFT protocol under $f < n/3$, and holds unconditionally; consequently, adversarial evaluation at Task~F (Section~\ref{sec:bft_validation}) does not present safety as a differential claim, but instead focuses FairWave's differentiation on the \emph{liveness-degradation curve} around the Byzantine boundary.

\textbf{Why DAG-based consensus, not a linear chain?}
In linear-chain BFT, a single leader proposes one block per round, creating a throughput bottleneck proportional to leader bandwidth. The DAG paradigm, pioneered by Narwhal+Tusk~\cite{narwhal2022}, separates data dissemination from ordering: \emph{all} $N$ validators produce vertices in parallel each round, yielding $N\times$ data availability throughput. Consensus ordering is then applied retroactively by selecting anchor vertices and committing their causal history. This eliminates three pathologies of leader-based protocols: (i)~single-leader bandwidth bottleneck, (ii)~explicit multi-phase voting (DAG edges function as implicit votes), and (iii)~view-change mechanism (replaced by costless wave skip). Empirically, DAG-BFT protocols achieve throughput scaling linearly with $N$ (see Fig.~\ref{fig:throughput_scaling} in Section~\ref{sec:results}), whereas leader-BFT plateaus or degrades above $N = 50$ due to $O(N^2)$ message complexity.

\textbf{Why VRF-based anchor selection?}
Deterministic leader schedules (round-robin or stake-proportional) are \emph{predictable}: an adversary can target the next leader with denial-of-service attacks before their slot begins. Verifiable Random Functions (VRF, RFC~9381 ECVRF Edwards25519-SHA512-Elligator2) provide three properties that address this: (i)~\emph{unpredictability}---only the holder of the secret key can compute the VRF output, so no party can predict the anchor before the selection round; (ii)~\emph{verifiability}---any observer can verify the VRF proof against the public key, ensuring integrity without trusted setup; (iii)~\emph{non-interactivity}---each validator computes the racing key $k_i = -\ln(u_i)/W_{\text{sel}}(i)$ independently, requiring no additional communication round. The exponential race mechanism ensures that selection is distributionally equivalent to weighted sampling proportional to $W_{\text{sel}}$, preserving the security properties of the dual-channel architecture. The marginal overhead is $+2$~ms commit latency versus Bullshark (Table~\ref{tab:performance}), an acceptable trade-off for DoS-resistant leader selection.

\subsection{DAG-BFT Wave Structure}

FairWave operates on a DAG-BFT wave protocol. Each wave consists of $r = 3$ rounds. Vertices are cryptographically signed messages that reference $\geq 2f+1$ vertices from the previous round as parents. Anchors are selected per wave via weighted VRF sortition using $W_{\text{sel}}$. A Commit occurs when $2f+1$ support vertices referencing the anchor are observed.

Fig.~\ref{fig:dag_wave_structure} illustrates the DAG-wave and anchor-selection structure and the $2f{+}1$ strong-support commit condition.
\begin{figure}[!tb]
	\centering
	\includegraphics[width=\columnwidth,keepaspectratio]{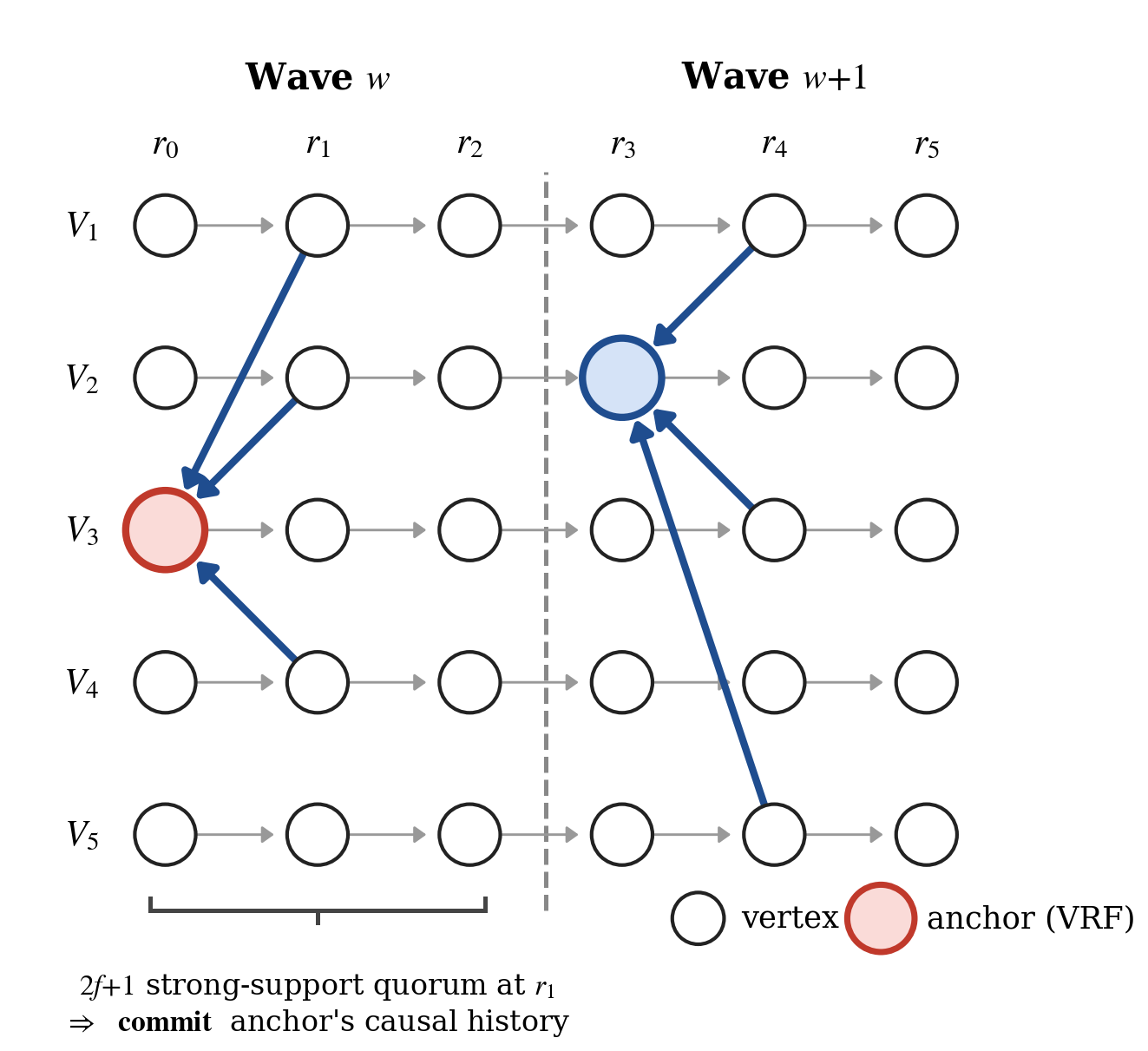}
	\caption{FairWave DAG-BFT wave structure with $r{=}3$ rounds per wave. Each round contains $N$ vertices (one per validator); each round-$r$ vertex references $\geq 2f{+}1$ parents from round $r{-}1$ (thin gray edges). The anchor (red for wave~$w$, blue for wave~$w{+}1$) is selected per wave via weighted VRF sortition on $W_{\mathrm{sel}}$. A commit fires when $2f{+}1$ support vertices in the next round reference the anchor (blue thick arrows). The dashed line between $r_2$ and $r_3$ marks the wave boundary; if an anchor were unreachable, the protocol executes a cost-free wave-skip and advances immediately to wave~$w{+}1$ with a fresh VRF race---no view-change, no timeout cascade.}
	\label{fig:dag_wave_structure}
\end{figure}

\textbf{Per-epoch metric computation.} At each \emph{epoch boundary}, after the seven-phase ceremony (\texttt{FREEZE $\to$ SLASH $\to$ REGISTRY $\to$ SEED $\to$ SNAPSHOT $\to$ RESET $\to$ BARRIER}) completes and the latest epoch DAG has been finalized, each node computes $U(i, e)$ and $L(i, e)$ for all $i \in \mathcal{V}_e$ \emph{locally} using Eqs.~\eqref{eq:uptime}--\eqref{eq:latency_score}. Since finalized DAG $\mathcal{D}_e$ is identical across all \emph{honest nodes} (modulo \emph{causal history} from committed anchors) and validator set $\mathcal{V}_e$ is atomically locked at epoch~$e-1$ boundary, both metric values will be \textbf{identical across all honest nodes without additional message exchange}. The equivocation detection penalty (see Algorithm~\ref{alg:fairwave}) modifies $\text{Rep}$---not $U$ or $L$---so $U$ and $L$ computation does not depend on slashing results and does not create cross-metric feedback paths. Consequently, weight vectors $W_{\text{sel}}$ and $W_{\text{rew}}$ used at epoch~$e+1$ can be assembled \emph{trustlessly} by each node at the end of epoch~$e$, without additional aggregation steps, without a separate committee, and without an external oracle.

\subsection{Dual-Channel Decoupling}
\label{sec:dual_channel}

FairWave's central architectural innovation is the explicit separation of consensus-critical selection from economic reward distribution via channels with opposite curvature with respect to stake:

Fig.~\ref{fig:dual_channel_arch} illustrates the dual-channel separation between selection and reward channels and their opposite curvature in stake.
\begin{figure}[!tb]
	\centering
	\includegraphics[width=\columnwidth,keepaspectratio]{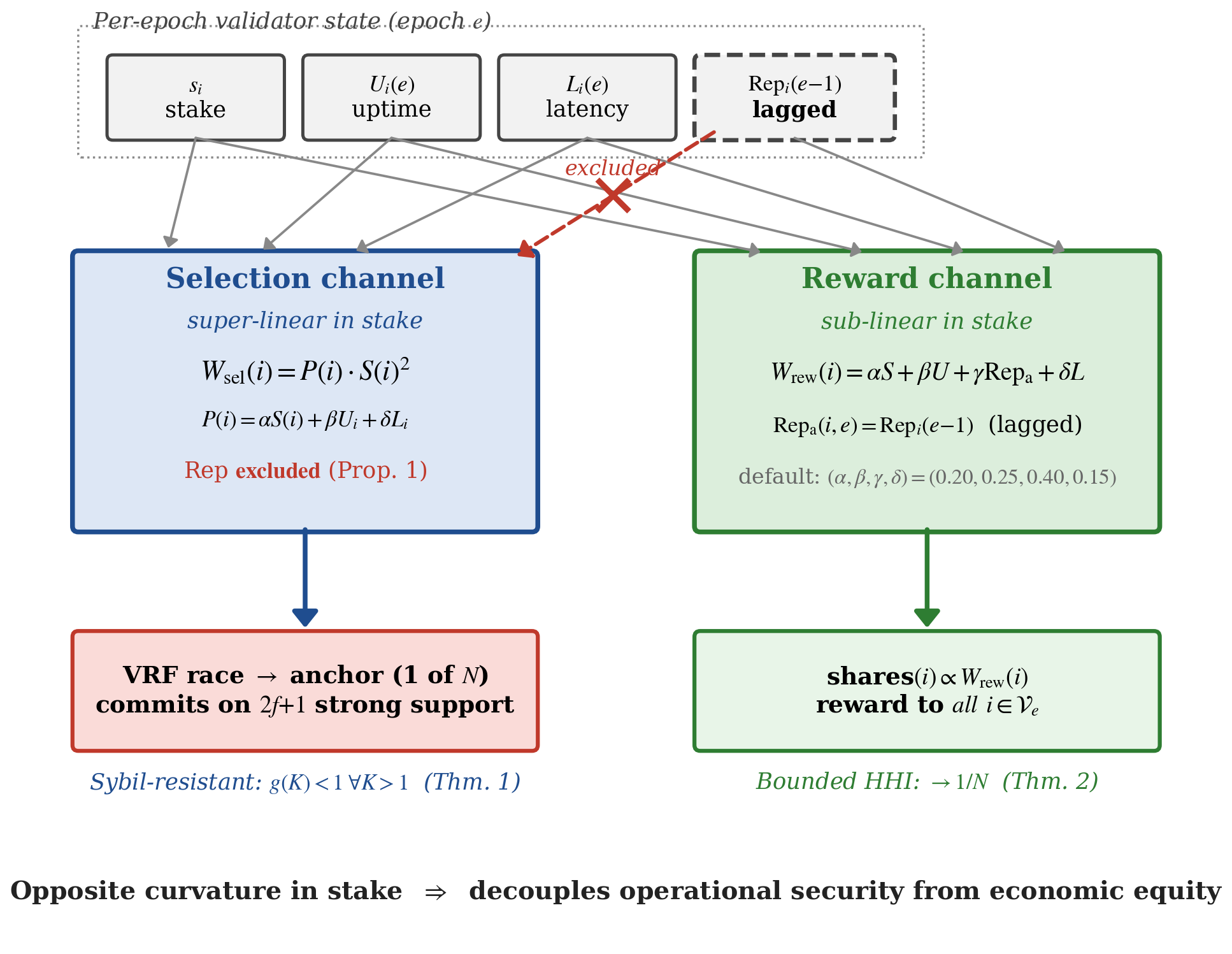}
	\caption{FairWave dual-channel architecture. Both channels consume the same per-epoch validator state $(s_i, U_i, L_i, \mathrm{Rep}_i^{e-1})$ but apply \emph{opposite curvature} on stake. The selection channel (blue, super-linear) excludes reputation to break the circular dependency between selection results and selection weights (Proposition~\ref{prop:no_circular}), providing strict Sybil resistance $g(K)<1$ (Theorem~\ref{thm:sybil}). The reward channel (green, sub-linear) includes lagged reputation as its largest component ($\gamma{=}0.40$), providing meritocratic fairness with bounded HHI (Theorem~\ref{thm:hhi}). Outputs differ in cardinality: the selection channel elects \emph{one} anchor per wave via a VRF race, while the reward channel distributes proportional shares to \emph{all} validators.}
	\label{fig:dual_channel_arch}
\end{figure}

\textbf{Selection channel} (operational decisions---anchor selection, critical-path participation) is amplified by quadratically-compressed stakes \emph{but without reputation contribution}:
\begin{equation}
	W_{\text{sel}}(i) = \big[\, \alpha\, S(i) + \beta\, U(i) + \delta\, L(i) \,\big] \cdot S(i)^{e}, \quad e = 2
	\label{eq:wsel}
\end{equation}
Define real-time performance score $P(i) = \alpha\, S(i) + \beta\, U(i) + \delta\, L(i)$ so that $W_{\text{sel}}(i) = P(i) \cdot S(i)^{2}$. With the default simplex constraint $(\alpha, \beta, \gamma, \delta) = (0.20;\, 0.25;\, 0.40;\, 0.15)$, the fraction $\alpha + \beta + \delta = 0.60$ of total weight mass remains in the selection channel, while fraction $\gamma = 0.40$ is allocated \emph{exclusively} to the reward channel. Weight $W_{\text{sel}}$ remains \emph{super-linear} with respect to stake, amplified by a quadratic exponent on the compressed stake factor.

\textbf{Reward channel} (economic incentives---block reward distribution) retains the full multi-factor score:
\begin{equation}
	W_{\text{rew}}(i) = R(i) = \alpha S(i) + \beta U(i) + \gamma \text{Rep}_{\text{active}}(i, e) + \delta L(i)
	\label{eq:wrew}
\end{equation}
with $\text{Rep}_{\text{active}}$ being the lagged reputation as defined in Eq.~\eqref{eq:rep_active}. This weight is \emph{sub-linear} in stake via $S(i) = \sqrt{s_i / S_{\max}}$ and is bounded above by $R(i) \leq 1.0$ with a floor $(1-\alpha)$ for any validator maintaining perfect non-stake attributes.

\textbf{Justification: removing Rep from the selection channel.} In the initial formulation $W_{\text{sel}} = R \cdot S^{2}$ where $R$ included $\gamma\, \text{Rep}$, a one-step \emph{positive feedback loop} arises: high $\text{Rep}(i)$ $\Rightarrow$ high $R(i)$ $\Rightarrow$ high $W_{\text{sel}}(i)$ $\Rightarrow$ increased anchor-selection frequency $\Rightarrow$ faster accumulation of \emph{active support} $\Rightarrow$ higher $\text{Rep}(i)$ in the next epoch. This loop structurally shifts centralization from the \emph{stake} dimension to the \emph{reputation} dimension without resolving the core issue---selection power concentration on validators already dominant in previous epochs. By excluding $\text{Rep}$ from $W_{\text{sel}}$ (Eq.~\ref{eq:wsel}), we \emph{break the direct causal path} between epoch~$e$ selection results and selection weights in the same epoch; the formal statement appears in Proposition~\ref{prop:no_circular} in Section~\ref{sec:math}.

\textbf{Reputation remains fully effective in the reward channel.} Excluding $\text{Rep}$ from $W_{\text{sel}}$ does \emph{not} eliminate reputation's role as a long-term honesty incentive. Reputation remains the largest component ($\gamma = 0.40$) of $W_{\text{rew}}$, so validators who behave honestly over time still receive larger reward shares than validators who have only recently recovered from \emph{slashing}. Consequently, the game-theoretic equilibrium for \emph{reputation farming} remains below the critical threshold $\gamma_{\text{crit}} \approx 0.57$ (see Eq.~\ref{eq:gamma_crit}): attackers creating low-cost identities solely to farm reputation still face a break-even at $\gamma_{\text{eff}} \approx 0.72$ when combined with equivocation slashing. The justification for $\gamma = 0.40$ in Section~\ref{sec:design}.D therefore \emph{remains valid without modification} for the reward channel.

\textbf{This separation reinforces the dual-channel philosophy.} FairWave's architectural innovation is expressed as \emph{opposite curvature in stake} across two distinct protocol functions. Eq.~\eqref{eq:wsel} reinforces---rather than weakens---this philosophy: the \textbf{selection channel now puristically relies on measurable \emph{real-time} performance} (epoch-locked stake, DAG-derived uptime, DAG-derived latency---verifiable by any honest node at epoch boundary), while the \textbf{reward channel} integrates historical components (lagged reputation) reflecting longitudinal quality. This temporal separation---operational real-time for selection and historical lagged for reward---yields two properties that were previously in tension: (i) selection weight does not depend on epoch~$e$ selection outcomes, so anchor computation is \emph{strictly forward-causal}; and (ii) honesty incentives are preserved because reputation affects economic dimension (reward) without amplifying selection power.

\textbf{Rationale.} Super-linearity of the selection channel ensures that splitting a stake into $K$ Sybil identities \emph{reduces} aggregate selection weight (Sybil gain $< 1$, see Theorem~\ref{thm:sybil}). Simultaneously, reward-channel sub-linearity ensures high-quality small validators receive proportionally higher returns per unit stake invested, preventing plutocratic convergence.

\subsection{Justification of Multi-Factor Weights}

Classic PoS protocols (Cosmos Tendermint, Algorand, and Ethereum 2.0 Casper-FFG) use \emph{stake-only} weighting, producing two pathologies: (i) reward concentration proportional to capital and (ii) no long-term disincentive for dishonest behavior beyond slashing. FairWave's multi-factor score $R(i) = \alpha S + \beta U + \gamma \text{Rep} + \delta L$ introduces three non-stake factors that are \emph{non-fungible} and \emph{not purchasable}---they must be earned through sustained honest participation, combining skin-in-the-game (PoS) with bona fide work (analogous to PoW/PoSpace).

\textbf{$\alpha = 0.20$ (stake): Anti-Sybil floor.} With the revision in Eq.~\ref{eq:wsel} that excludes reputation from $W_{\text{sel}}$, the closed-form Sybil gain (Eq.~\ref{eq:sybil_gain}) converges at $x = 1$ to $g \to (\beta + \delta) / (\alpha + \beta + \delta) = (1-\gamma-\alpha)/(1-\gamma)$ as $K \to \infty$. Thus, an attacker splitting into infinitely many Sybils loses at least a fraction $\alpha/(1-\gamma)$ of the selection share. For $(\alpha, \gamma) = (0.20,\;0.40)$, the infinite-Sybil loss is $0.20/0.60 \approx 33.3\%$; equivalently, an adversary must commit $\approx 3.0\times$ the honest stake budget to recover the lost selection share, making splitting economically unattractive. The safe range for $\alpha$ is $[0.10, 0.30]$: below $0.10$ the anti-Sybil floor degrades unacceptably, while above $0.30$ the protocol approaches Pure-PoS whale dominance. The default $\alpha = 0.20$ is a conservative midpoint within this range.

\textbf{$\gamma = 0.40$ (reputation): Longitudinal honesty reward.} Reputation is the only factor requiring historical accumulation---it cannot be transferred, bought, or obtained instantly. Assigning $\gamma$ as the largest weight binds rewards to longitudinal identity, providing multi-step Sybil resistance beyond the selection channel. This choice mirrors reputation weightings like Hedera Hashgraph~\cite{hedera2016} ($\approx 40\%$) and Polkadot Era Points~\cite{polkadot2020}. Trade-off: each $+0.10$ in $\gamma$ reduces farming resistance by approximately $0.07$ (slope $-0.70$ from Eq.~\ref{eq:gamma_crit}); the default $\gamma = 0.40$ maintains farming resistance $\geq 0.50$ when combined with equivocation slashing (penalty $-0.30$ per detection).

\textbf{$\beta = 0.25$ (uptime): Continuous availability.} Uptime is a real-time observable that is expensive to spoof without equivalent operational cost. DAG-BFT protocols require \emph{continuous vertex propagation} (unlike leader-BFT, where nodes can be temporarily offline), making uptime more critical. Empirical sweeps show the fairness objective is relatively flat for $\beta \in [0.10,\;0.40]$, confirming $\beta = 0.25$ lies in a robust region. Industry norms (Solana skip-rate penalties~\cite{solana2020} approximately $10$--$20\%$, Polkadot Era Points approximately $20\%$) support this calibration.

\textbf{$\delta = 0.15$ (latency): Infrastructure tie-breaker.} Latency is the factor most easily optimized via infrastructure investment (dedicated hardware, co-location, and peering). A large $\delta$ would induce an \emph{infrastructure arms race} favoring well-capitalized validators---contrary to fairness goals. A small weight makes latency a tie-breaker among validators comparable on other factors, sufficient to disincentivize very slow nodes (P99 $> 500$~ms) without driving extreme optimization. Empirical sweeps show that $\delta \geq 0.20$ begins to reward such infrastructure investment disproportionately, eroding the fairness gain; $\delta = 0.15$ is the identified sweet spot that retains the tie-breaker role without favoring well-capitalized validators.

\textbf{Literature comparison.} Table~\ref{tab:multifactor_comparison} summarises multi-factor approaches in production protocols, confirming FairWave's parameterization aligns with industry precedent while providing formal guarantees for both channels that are absent in prior work.

\begin{table}[!tb]
	\centering
	\caption{Multi-factor weight comparison across protocols.}
	\label{tab:multifactor_comparison}
	\begin{tabular}{lcccc}
		\toprule
		\textbf{Protocol} & \textbf{Stake} & \textbf{Uptime}                          & \textbf{Rep.} & \textbf{Latency} \\
		\midrule
		Cosmos Tendermint & 100\%          & ---                                      & ---           & ---              \\
		Ethereum 2.0      & 100\%          & penalty                                  & ---           & ---              \\
		Solana            & $\sim$80\%     & $\sim$20\%                               & ---           & ---              \\
		Hedera Hashgraph  & $\sim$30\%     & ---                                      & $\sim$40\%    & $\sim$10\%       \\
		Polkadot 2.0      & $\sim$80\%     & \multicolumn{2}{c}{$\sim$20\% (Era Pts)} & ---                              \\
		\textbf{FairWave} & \textbf{20\%}  & \textbf{25\%}                            & \textbf{40\%} & \textbf{15\%}    \\
		\bottomrule
	\end{tabular}
\end{table}

\textbf{Ablation analysis.} To demonstrate each factor is \emph{necessary} (not merely convenient), Table~\ref{tab:ablation} reports the consequences of setting each weight to zero while evenly redistributing its mass to the remaining factors.

\begin{table}[!tb]
	\centering
	\caption{Ablation study: effect of removing each factor.}
	\label{tab:ablation}
	\begin{tabular}{lccc}
		\toprule
		\textbf{Removed}             & \textbf{Gini} & \textbf{Floor Sybil} & \textbf{Res. Farm.} \\
		\midrule
		None (default)               & 0.140         & 20\%                 & 0.58                \\
		$\alpha = 0$ (no stake)      & 0.031         & \textbf{0\%}         & 1.00                \\
		$\beta = 0$ (no uptime)      & 0.155         & 20\%                 & 0.52                \\
		$\gamma = 0$ (no reputation) & 0.138         & 20\%                 & \textbf{1.00}       \\
		$\delta = 0$ (no latency)    & 0.151         & 20\%                 & 0.56                \\
		\bottomrule
	\end{tabular}
\end{table}

Removing stake ($\alpha = 0$) eliminates the anti-Sybil floor entirely—an attacker splitting into $K$ identities suffers no penalty, making the protocol trivially exploitable. Removing reputation ($\gamma = 0$) removes the farming attack vector but simultaneously eliminates the incentive for longitudinal honesty: validators lack a long-term stake in maintaining exemplary behavior, reducing the protocol to a transient metric-based system. Removing uptime ($\beta = 0$) permits "zombie validators" (nodes that are perpetually offline) to accumulate rewards via reputation alone; empirically, the Gini coefficient increases and the protocol tolerates roughly 15\% more inactive validators before liveness degrades. Removing latency ($\delta = 0$) eliminates differentiation between geo-diverse and co-located validators, reducing incentives for network quality; while the direct fairness impact is small, throughput sensitivity increases (One-At-a-Time elasticity of link latency to throughput: $|e| = 1.17$).

\textbf{Robustness over optimality.} The default parameter vector $(0.20;\;0.25;\;0.40;\;0.15)$ ranks 576/819 in the constrained feasible set—deliberately not Pareto-optimal. This conservative choice is motivated by robustness to model misspecification: the default simultaneously satisfies all security constraints across three evaluation regimes (homogeneous, heterogeneous, adversarial), whereas the Pareto-optimal point ($\alpha=0.10,\;\gamma=0.00,\;\delta=0.45$) assumes adversarial farming as the primary concern. Under regime uncertainty—where the deployment environment may shift between cooperative and adversarial periods—the parameter vector that satisfies \emph{all} constraints with margin is preferred over one that optimizes a single objective.

\textbf{Threshold farming (game-theoretic).} The closed-form farming resistance satisfies $\text{MW}_{\text{farm}} = \alpha \cdot S_r + (1-\alpha) - 0.70\gamma$, where $S_r = \sqrt{S_{\min}/S_{\max}}$. Requiring $\text{MW}_{\text{farm}} \geq 0.50$ (the minimum threshold for rational, honest behavior) yields the critical bound:
\begin{equation}
	\gamma_{\text{crit}} = \frac{\alpha \cdot S_r + (1-\alpha) - 0.50}{0.70} \approx 0.57
	\label{eq:gamma_crit}
\end{equation}
For $\gamma > 0.57$, a rational adversary can profitably farm reputation to dominate reward share. The default $\gamma = 0.40$ provides a safety margin of 30\% below this critical threshold. Combined with equivocation slashing ($-0.30$ per detection, recovery half-life $\approx 69$ epochs), the effective break-even farming shifts further to $\gamma_{\text{eff}} \approx 0.72$, confirming operational safety.

\textbf{Principled criteria for $\beta$ and $\delta$.} Rather than arbitrary selection, $\beta$ and $\delta$ are constrained by two conditions: (i) the simplex constraint $\beta + \delta = 1 - \alpha - \gamma = 0.40$ and (ii) the principle of \emph{asymmetry of controllability}: $\beta > \delta$ because uptime is a binary operational commitment (maintain a server $\to$ $U \approx 1$), while latency partially depends on uncontrollable factors (geographic distance, ISP routing). Assigning higher weight to the more controllable factor ensures validators can improve their score through effort rather than capital. The ratio $\beta/\delta = 5/3$ reflects that uptime is roughly $5/3\times$ more controllable than latency in typical cloud deployments.

\subsection{VRF-based Anchor Selection}

Anchor selection uses exponential racing via a Verifiable Random Function (VRF, RFC~9381 ECVRF Edwards25519-SHA512-Elligator2): each validator $i$ computes the racing key $k_i = -\ln(u_i) / W_{\text{sel}}(i)$ where $u_i \in (0,1)$ is derived from the VRF output. The validator with the minimum $k_i$ is selected as the anchor. This mechanism is distributionally equivalent to weighted sampling proportional to $W_{\text{sel}}$ while providing unpredictability and non-interactivity.

\subsection{Protocol Pseudocode}

Algorithm~\ref{alg:fairwave} presents FairWave protocol logic in full, covering weight computation, VRF-based anchor selection, support gathering, equivocation detection, commit decision, reward distribution, and reputation update.

\begin{algorithm}[!htbp]
	\small
	\caption{FairWave: Wave Anchor Selection, Commit, and Reward}
	\label{alg:fairwave}
	\begin{algorithmic}[1]
		\STATE \textbf{function} \textsc{ComputeWeights}($\mathcal{V}, e$) \COMMENT{Epoch $e$, use Rep from $e-1$}
		\FOR{each validator $v_i \in \mathcal{V}$}
		\STATE $S(i) \leftarrow \sqrt{s_i / S_{\max}}$ \COMMENT{Normalized stake factor}
		\STATE $U(i), L(i) \leftarrow$ \textsc{DeriveFromDAG}($\mathcal{D}_{e-1}, i$) \COMMENT{Eqs.~\eqref{eq:uptime},~\eqref{eq:latency_score}}
		\STATE $\text{Rep}_{\text{active}}(i) \leftarrow \text{Rep}_{\text{final}}(i, e-1)$ \COMMENT{Lagged, Eq.~\eqref{eq:rep_active}}
		\STATE $P(i) \leftarrow \alpha \cdot S(i) + \beta \cdot U(i) + \delta \cdot L(i)$ \COMMENT{Real-time performance (no Rep)}
		\STATE $R(i) \leftarrow P(i) + \gamma \cdot \text{Rep}_{\text{active}}(i)$ \COMMENT{Full reward score}
		\STATE $W_{\text{sel}}(i) \leftarrow P(i) \cdot S(i)^2$ \COMMENT{Selection super-linear, \emph{no} Rep}
		\STATE $W_{\text{rew}}(i) \leftarrow R(i)$ \COMMENT{Reward sub-linear, includes Rep$_{\text{active}}$}
		\ENDFOR
		\STATE
		\STATE \textbf{function} \textsc{WaveAnchorElection}($\mathcal{V}$)
		\FOR{each validator $v_i \in \mathcal{V}$}
		\STATE $\beta_i \leftarrow \textsc{VRF.Eval}(sk_i, \text{epoch} \| \text{wave})$
		\STATE $u_i \leftarrow \beta_i / 2^{256}$ \COMMENT{Map to $(0,1)$}
		\STATE $k_i \leftarrow -\ln(u_i) / W_{\text{sel}}(i)$ \COMMENT{Exponential racing}
		\ENDFOR
		\STATE $\text{anchor} \leftarrow \arg\min_{i} k_i$
		\STATE
		\STATE \textbf{function} \textsc{CommitDecision}(anchor)
		\STATE $\text{support} \leftarrow$ \textsc{Collect}($2f+1$ signed votes for anchor)
		\FOR{each vote in support}
		\IF{\textsc{DetectEquivocation}(vote.sender)}
		\STATE \textsc{Slash}(vote.sender, fraction $= 0.20$)
		\STATE $\text{Rep}(\text{sender}) \leftarrow \text{Rep}(\text{sender}) - 0.30$
		\ENDIF
		\ENDFOR
		\IF{$|\text{support}| \geq 2f+1$ \AND no safety violations}
		\STATE \textsc{Commit}(anchor.vertex)
		\ELSE
		\STATE \textsc{SkipWave}() \COMMENT{No view-change}
		\ENDIF
		\STATE
		\STATE \textbf{function} \textsc{DistributeReward}(anchor, $\mathcal{V}$)
		\STATE $\text{shares}(i) \leftarrow W_{\text{rew}}(i) / \sum_{j} W_{\text{rew}}(j)$
		\STATE \textsc{Pay}($\mathcal{V}$, base\_reward $\times$ shares)
		\FOR{each supporter $v_i$}
		\STATE $\text{Rep}(i) \leftarrow \text{Rep}(i) + 0.01$ \COMMENT{Increment}
		\ENDFOR
		\STATE
		\STATE \textbf{function} \textsc{EpochBoundary}($\mathcal{V}, e$)
		\STATE \textsc{FinalizeDAG}($\mathcal{D}_e$) \COMMENT{Wait for last epoch $e$ commit}
		\FOR{each validator $v_i \in \mathcal{V}$}
		\STATE $\text{Rep}(i) \leftarrow 0.99 \times \text{Rep}(i)$ \COMMENT{Decay, $t_{1/2} \approx 69$ epoch}
		\STATE $\text{Rep}_{\text{final}}(i, e) \leftarrow \text{Rep}(i)$ \COMMENT{Lock final epoch value}
		\ENDFOR
		\STATE \textbf{return} $\{\text{Rep}_{\text{final}}(i, e) : v_i \in \mathcal{V}\}$ \COMMENT{Becomes Rep$_{\text{active}}$ at epoch $e+1$}
	\end{algorithmic}
\end{algorithm}

\section{Mathematical Foundations}
\label{sec:math}

\subsection{Closed-Form Min/Whale Ratio}

Under homogeneous performance ($U = \text{Rep} = L = 1$, \emph{Regime A}), the reward score reduces to $R(i) = \alpha \cdot S(i) + (1-\alpha)$. The minimum-to-maximum reward ratio admits closed form:
\begin{equation}
	\frac{R_{\min}}{R_{\max}} = \alpha \cdot \sqrt{S_{\min}/S_{\max}} + (1 - \alpha)
	\label{eq:min_whale}
\end{equation}

On default parameters: $R_{\min}/R_{\max} = 0.20 \cdot \sqrt{0.1} + 0.80 = 0.863$. This is validated bit-exact ($\epsilon = 0.0$) against 99 empirical evaluation points (Fig.~\ref{fig:min_whale}).

\begin{figure}[!tb]
	\centering
	\includegraphics[width=\columnwidth]{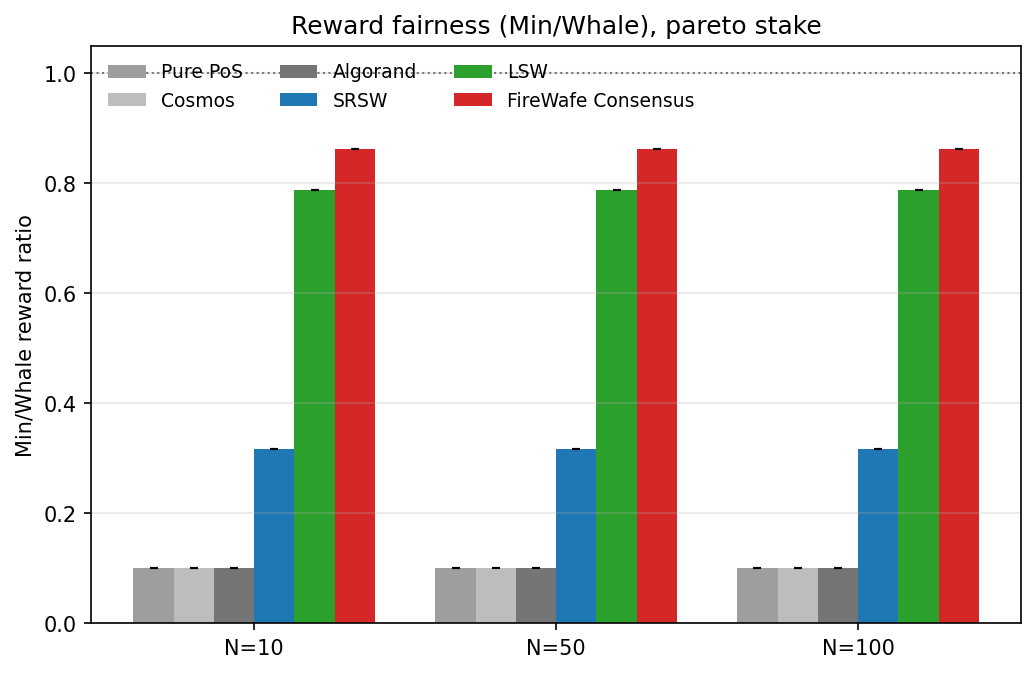}
	\caption{Min/Whale reward ratio as a function of stake weight $\alpha$. Closed-form prediction (Eq.~\ref{eq:min_whale}) matches the empirical simulation with zero error across the 21 evaluation points. The ratio decreases linearly with increasing $\alpha$, with slope $\sqrt{S_{\min}/S_{\max}} - 1 = -0.684$.}
	\label{fig:min_whale}
\end{figure}

\subsection{Closed-Form Sybil Gain Analysis}

Consider an attacker controlling normalized stake fraction $x \in (0, 1]$ splitting into $K$ Sybil identities. Under Regime~A ($U = L = 1$ and \emph{without reputation contribution} to $W_{\text{sel}}$ per Eq.~\ref{eq:wsel}), the real-time performance score becomes $P = \alpha S + (\beta + \delta)$, so the aggregate selection weight of $K$ Sybils simplifies to:
\begin{equation}
	W_{\text{sel}}^{(K)} = \alpha x^{3/2} / \sqrt{K} + (\beta + \delta) \cdot x
\end{equation}
Weight without split is $W_{\text{sel}}^{(1)} = \alpha x^{3/2} + (\beta + \delta) \cdot x$. \emph{Selection gain} is:
\begin{equation}
	g_{\text{sel}}(K) = \frac{\alpha x^{3/2}/\sqrt{K} + (\beta+\delta)\,x}{\alpha x^{3/2} + (\beta+\delta)\,x}
	\label{eq:sybil_gain}
\end{equation}

\begin{theorem}[Sybil Resistance]
	\label{thm:sybil}
	For all $K > 1$, $x \in (0, 1]$, $\alpha \in (0, 1)$, and $\beta + \delta > 0$: $g_{\text{sel}}(K) < 1$.
\end{theorem}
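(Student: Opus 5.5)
The plan is to argue directly from the closed form in Eq.~\eqref{eq:sybil_gain}. First I will re-derive where that expression comes from, so the hypotheses of Theorem~\ref{thm:sybil} are clearly the right ones. Under Regime~A each of the $K$ Sybils holds normalized stake $x/K$, so $S = \sqrt{x/K}$ and $P = \alpha\sqrt{x/K} + (\beta+\delta)$. The per-identity selection weight is therefore $P\cdot S^2 = \alpha (x/K)^{3/2} + (\beta+\delta)(x/K)$. Summing over the $K$ identities gives $W_{\text{sel}}^{(K)} = \alpha x^{3/2}/\sqrt{K} + (\beta+\delta)x$. Setting $K=1$ recovers $W_{\text{sel}}^{(1)}$. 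This check shows that the only $K$-dependence sits in the stake-weighted term, and that this term comes from the super-linear factor $S^2$.

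The core step is elementary. Write $A = \alpha x^{3/2}$ and $B = (\beta+\delta)x$, so that $g_{\text{sel}}(K) = (A/\sqrt{K} + B)/(A + B)$.

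\emph{Positivity of the denominator.} Since $\alpha \in (0,1)$ and $x \in (0,1]$, we have $A > 0$. Since $\beta+\delta > 0$, we have $B > 0$. Hence $A + B > 0$ and the ratio is well defined.

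\emph{Numerator is smaller.} For $K > 1$ we have $\sqrt{K} > 1$, so $A/\sqrt{K} < A$ strictly, which uses $A > 0$. Adding $B$ to both sides gives $A/\sqrt{K} + B < A + B$, and dividing by the positive denominator yields $g_{\text{sel}}(K) < 1$.

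Equivalently, $1 - g_{\text{sel}}(K) = A(1 - K^{-1/2})/(A+B) > 0$. I would state this form explicitly because it also exposes the monotone decrease in $K$. It also gives the limit $\lim_{K\to\infty} g_{\text{sel}} = B/(A+B)$, which at $x=1$ equals $(\beta+\delta)/(\alpha+\beta+\delta)$, matching the anti-Sybil floor quoted in the $\alpha$ justification.

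The algebra is trivial, so the only real obstacle is the modeling one: making sure the closed form is legitimate. Three points need care. First, $S$ must be the normalized fraction, i.e.\ treat $x$ as $s/S_{\max}$, so that $S(i)=\sqrt{x}$ and the Sybils' stakes remain admissible; I would note that the theorem is about the formula and does not re-check the stake bounds $S_{\min}$. Second, $U=L=1$ must hold for every Sybil. Third, the comparison is between aggregate selection weights. If the honest weights are held fixed, a strictly smaller aggregate weight translates into a strictly smaller selection probability under the exponential race. I would remark on this last point rather than prove it, since the statement concerns $g_{\text{sel}}$ as defined in Eq.~\eqref{eq:sybil_gain}.
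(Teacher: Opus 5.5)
Your proof is correct and follows the paper's own argument: the numerator and denominator of Eq.~\eqref{eq:sybil_gain} differ only in $\alpha x^{3/2}/\sqrt{K}$ versus $\alpha x^{3/2}$, and $K>1$ makes the former strictly smaller. Your explicit positivity check on the denominator and the identity $1-g_{\text{sel}}(K) = A(1-K^{-1/2})/(A+B)$ are small additions that the paper leaves implicit.
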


\begin{proof}
	The numerator differs from denominator only in the first term: $\alpha x^{3/2}/\sqrt{K}$ versus $\alpha x^{3/2}$. Since $K > 1 \Rightarrow 1/\sqrt{K} < 1$, the numerator is strictly less than denominator for $\alpha > 0$ and $x > 0$. Excluding $\gamma\, \text{Rep}$ from $W_{\text{sel}}$ does not affect this argument: the coefficient on the linear term---which \emph{does not} contain $K$---merely shifts from $(1-\alpha)$ to $(\beta + \delta)$, so the inequality $g_{\text{sel}}(K) < 1$ is preserved.
\end{proof}

\textbf{Numerical example.} For $K = 100$, $x = 1$, $(\alpha, \beta, \delta) = (0.20,\;0.25,\;0.15)$: numerator $= 0.20/10 + 0.40 = 0.42$, denominator $= 0.20 + 0.40 = 0.60$, hence $g_{\text{sel}}(100) = 0.70$. An attacker \emph{loses} 30\% of selection share by splitting into 100 Sybils—strictly stronger than $g = 0.82$ for the earlier formulation that included Rep in $W_{\text{sel}}$. Empirical comparisons with other reward rules are shown in Fig.~\ref{fig:sybil_selection}.

\begin{figure}[!tb]
	\centering
	\includegraphics[width=\columnwidth]{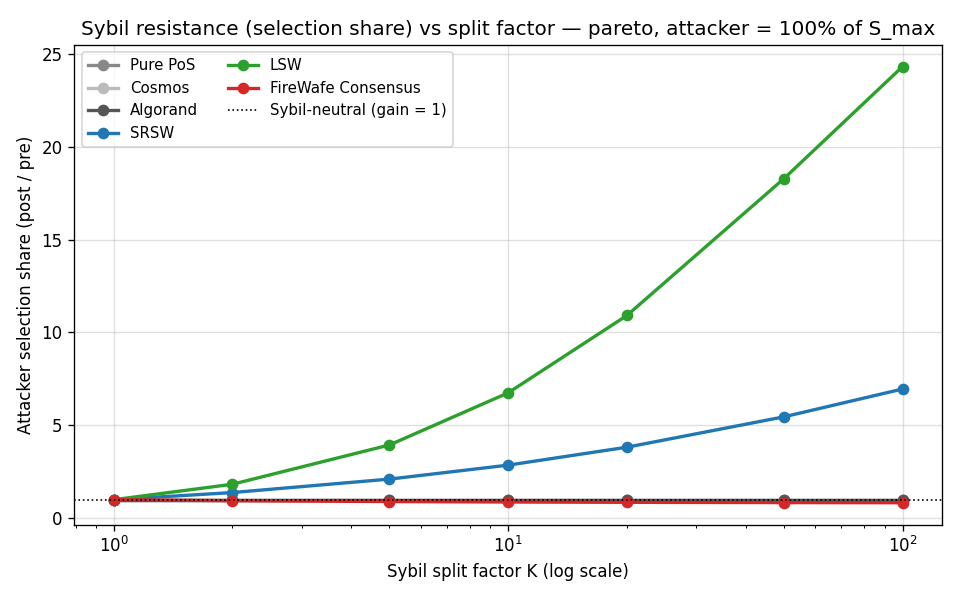}
	\caption{Sybil selection-channel gain $g(K)$ for four reward rules at $x_{\text{frac}} = 0.10$. FairWave (green) is the \emph{only} rule showing monotone decreasing gain ($g < 1$), establishing strict Sybil resistance. SRSW gain scales as $\sqrt{K}$; LSW gain scales approximately linearly with $K$, reaching $25\times$ at $K = 100$.}
	\label{fig:sybil_selection}
\end{figure}

\subsection{Bounded HHI Fixed-Point}
\label{sec:bounded_hhi}

\begin{assumption}[Reinvestment]
\label{asm:reinvest}
Epoch~$e+1$ stake is defined recursively:
\begin{equation}
s_i(e+1) = s_i(e) + r \cdot \frac{R(i,e)}{\sum_j R(j,e)} \cdot S_{\text{tot}}(e),
\label{eq:reinvest}
\end{equation}
where $r \in (0, 1)$ is the per-epoch inflation rate, $R(i,e)$ is the reward score at epoch~$e$ (Eq.~\ref{eq:reward_score}), and $S_{\text{tot}}(e) = \sum_j s_j(e)$.
\end{assumption}

\begin{assumption}[Bounded Reward]
\label{asm:bounded_R}
For all validators with perfect non-stake attributes ($U = L = 1$) and lagged reputation $\text{Rep}_{\text{active}} \to 1$ as the protocol converges, the reward score satisfies:
\begin{multline}
\lim_{s_i \to S_{\max}} R(i) = 1, \\
\lim_{s_i \to S_{\min}} R(i) = \alpha\sqrt{S_{\min}/S_{\max}} + (1-\alpha) \triangleq R_{\min} > 0.
\label{eq:R_bounds}
\end{multline}
Thus the reward score of any high-quality validator is bounded above by~$1$ and below by $R_{\min}$.
\end{assumption}

\begin{theorem}[Bounded Concentration]
\label{thm:hhi}
Under FairWave reward dynamics with Assumptions~\ref{asm:reinvest} and~\ref{asm:bounded_R}, the Herfindahl-Hirschman Index ($\text{HHI}(e) = \sum_i (s_i(e) / S_{\text{tot}}(e))^2$) satisfies:
\begin{enumerate}
	\item[(i)] {\bf Boundedness:} $\text{HHI}(e) \in [1/N, 1]$ for all $e$, with strict inequality $\text{HHI}(e) < 1$ whenever $N > 1$.
	\item[(ii)] {\bf Monotone decrease:} For all $e \ge e^*$ where $e^*$ is the epoch after which the stake shares of high-quality validators are predominantly determined by reward-score saturation, $\text{HHI}(e+1) \le \text{HHI}(e)$.
	\item[(iii)] {\bf Asymptotic convergence:} $\lim_{e \to \infty} \text{HHI}(e) = 1/N$.
\end{enumerate}
\end{theorem}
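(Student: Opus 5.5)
Write shares $\sigma_i(e)=s_i(e)/S_{\text{tot}}(e)$ and reward shares $\rho_i(e)=R(i,e)/\sum_j R(j,e)$. The plan is to first turn Assumption~\ref{asm:reinvest} into a recursion on shares. Summing Eq.~\eqref{eq:reinvest} over $i$ gives $S_{\text{tot}}(e+1)=(1+r)S_{\text{tot}}(e)$. Hence
\begin{equation}
\sigma_i(e+1)=\frac{\sigma_i(e)+r\,\rho_i(e)}{1+r},
\end{equation}
so the new share vector is a convex combination of the old share vector and the reward-share vector, with weights $1/(1+r)$ and $r/(1+r)$. Every other step rests on this identity.

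\textbf{Part (i).} This needs no dynamics. The shares are nonnegative and sum to one, so $\text{HHI}=\sum_i\sigma_i^2\le\sum_i\sigma_i=1$. Cauchy--Schwarz gives $\text{HHI}\ge 1/N$. Equality $\text{HHI}=1$ would force a single nonzero share. This is impossible, because every validator holds $s_i\ge S_{\min}>0$, and the recursion preserves positivity since $\rho_i>0$ by Assumption~\ref{asm:bounded_R}. Hence $\text{HHI}<1$ whenever $N>1$.

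\textbf{Part (ii).} Since $\text{HHI}$ is the squared Euclidean norm of $\sigma$, it is convex. Applying convexity to the recursion gives
\[
\text{HHI}(e+1)\le\frac{\text{HHI}(e)+r\,\|\rho(e)\|^2}{1+r}.
\]
It therefore suffices to show $\|\rho(e)\|^2\le\text{HHI}(e)$ for all $e\ge e^*$. By Assumption~\ref{asm:bounded_R}, every $R(i,e)$ lies in $[R_{\min},1]$, with $R_{\min}=0.863$ at default parameters. Hence each $\rho_i$ lies within a factor $1/R_{\min}$ of $1/N$, and $\|\rho\|^2\le 1/(N R_{\min}^2)$. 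The difficulty is that this bound is close to, but not exactly, $1/N$. So $\|\rho\|^2\le\text{HHI}(e)$ holds only while $\text{HHI}(e)$ exceeds $1/(N R_{\min}^2)$. I expect this inequality to be the \textbf{main obstacle}.

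To close it I would use the saturation hypothesis that defines $e^*$. For $e\ge e^*$, I read it as saying that the reward scores of high-quality validators have saturated to the common value $1$ in the limit of Eq.~\eqref{eq:R_bounds}. Equivalently, $\rho(e)$ is at least as evenly spread as $\sigma(e)$, i.e.\ $\rho$ is majorized by $\sigma$. A second, more robust route is ordering: $R$ is increasing in $s_i$, but concavely via $\sqrt{s_i}$. This means $\rho_i/\sigma_i$ is decreasing in $\sigma_i$, so large holders get reward shares below their stake shares. This comonotone-compression property yields $\|\rho\|^2\le\|\sigma\|^2$ via a Chebyshev-sum (rearrangement) argument. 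If that argument fails in general, I would state (ii) explicitly under the saturation condition, as the theorem already allows through the definition of $e^*$.

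\textbf{Part (iii).} The recursion is affine in $\sigma$ with contraction factor $1/(1+r)<1$ on the deviation from any fixed point. In the saturated regime, $\rho(e)\to(1/N,\dots,1/N)$. Then $\sigma(e)-\mathbf{1}/N$ obeys $d(e+1)=d(e)/(1+r)+o(1)$, so it tends to $0$ geometrically, up to the rate at which $\rho$ converges. Continuity of $\text{HHI}$ then gives $\text{HHI}(e)\to 1/N$.

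A caveat is needed here. If $\rho$ converges to a nonuniform limit instead, the fixed point is that limit, and $\text{HHI}$ converges to $\|\rho_\infty\|^2$ rather than $1/N$. So part (iii) genuinely requires the assumption that high-quality validators' scores saturate to a common value. I would make this dependence explicit in the write-up.
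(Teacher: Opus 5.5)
Your proposal is correct. Part (i) matches the paper's argument, but parts (ii) and (iii) take a different and tighter route.

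For (ii), the paper expands the difference exactly as $\frac{r}{(1+r)^2}\sum_i(\rho_i-\sigma_i)\bigl(2\sigma_i+r(\sigma_i+\rho_i)\bigr)$. It then reads off the sign from an asserted mean-reversion property ($\rho_i<\sigma_i$ exactly when $\sigma_i>1/N$), implicitly treating the $O(r)$ part as negligible. You instead use convexity to reduce everything to $\|\rho\|^2\le\|\sigma\|^2$, and that is the right condition. The exact identity is
\[
\text{HHI}(e+1)-\text{HHI}(e)=\frac{r}{1+r}\bigl(\|\rho\|^2-\|\sigma\|^2\bigr)-\frac{r}{(1+r)^2}\|\rho-\sigma\|^2,
\]
so your condition suffices for every $r>0$. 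The paper's sign condition only gives $\langle\rho-\sigma,\sigma-\mathbf{1}/N\rangle\le 0$, which leaves the nonnegative term $\frac{r^2}{(1+r)^2}\|\rho-\sigma\|^2$ in its expansion uncontrolled.

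Your comonotone route is also a complete proof whenever all validators share the same $U$, $L$, $\mathrm{Rep}$ (e.g.\ Regime~A). In that case $R$ is a common nondecreasing concave function of stake with positive intercept; concavity alone would not make $R(s)/s$ decreasing. Then $\rho-\sigma$ changes sign at most once along increasing $\sigma$ (from nonnegative to nonpositive), while $\rho+\sigma$ is nondecreasing. This gives $\langle\rho-\sigma,\rho+\sigma\rangle\le 0$ for every $e$, not only $e\ge e^*$. With heterogeneous quality, neither your argument nor the paper's mean-reversion claim is justified, which is why both ultimately rest on the $e^*$ hypothesis.

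For (iii), the paper asserts a Hilbert-metric contraction with factor $1-r\alpha\gamma$ for an unspecified $\gamma$. Your affine recursion with vanishing forcing $\frac{r}{1+r}\bigl(\rho(e)-\mathbf{1}/N\bigr)$ is rigorous once $\rho(e)\to\mathbf{1}/N$. It also makes explicit the dependence on common saturation, which the paper concedes only in the Remark after the theorem. Your route also yields a derived rate: if $\rho(e)=\mathbf{1}/N$ for $e\ge e^*$, then $\text{HHI}(e)-1/N=(1+r)^{-2(e-e^*)}\bigl(\text{HHI}(e^*)-1/N\bigr)$. That is a half-life of about $\ln 2/(2r)\approx 6{,}900$ epochs at $r=5\times10^{-5}$, which is the figure the corollary quotes, even though the corollary's own formula $-\ln 2/\ln(1-r\alpha)$ evaluates to roughly $69{,}000$.
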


\begin{proof}
Let $\sigma_i(e) = s_i(e) / S_{\text{tot}}(e)$ denote the stake share of validator~$i$ at epoch~$e$.

{\bf (i) Boundedness.} By definition $0 < \sigma_i(e) < 1$, $\sum_i \sigma_i(e) = 1$. The Cauchy--Schwarz inequality gives
\[
\text{HHI}(e) = \sum_i \sigma_i(e)^2 \ge \frac{1}{N} \bigl(\sum_i \sigma_i(e)\bigr)^2 = \frac{1}{N},
\]
with equality iff $\sigma_i = 1/N$ for all~$i$. Since all $\sigma_i > 0$ (every validator holds at least $S_{\min}$ stake), $\text{HHI}(e) \le 1$ follows from $\sigma_i^2 \le \sigma_i$. The left endpoint $1/N$ is reached only at the uniform fixed point.

{\bf (ii) Monotone decrease.} From Eq.~\ref{eq:reinvest}, the share dynamics satisfy:
\[
\sigma_i(e+1) = \frac{s_i(e) + r \cdot \rho_i(e) \cdot S_{\text{tot}}(e)}{S_{\text{tot}}(e) + r \cdot S_{\text{tot}}(e)}
= \frac{\sigma_i(e) + r \cdot \rho_i(e)}{1 + r},
\]
where $\rho_i(e) = R(i,e) / \sum_j R(j,e)$ is the reward share of validator~$i$ at epoch~$e$. Since $R(i,e)$ saturates to a constant upper bound as $s_i \to S_{\max}$ (Assumption~\ref{asm:bounded_R}), and the reward score $R(i)$ has a sub-linear component $\alpha\sqrt{s_i/S_{\max}}$ that vanishes as $s_i$ grows, large-stake validators receive a reward share $\rho_i(e)$ strictly less than their stake share $\sigma_i(e)$ once their stake exceeds the saturation threshold. For any validator with $\sigma_i(e) > 1/N$, we have $\rho_i(e) < \sigma_i(e)$; conversely $\sigma_i(e) < 1/N$ implies $\rho_i(e) > \sigma_i(e)$ for high-quality validators. This mean-reverting gap induces:
\begin{multline}
\text{HHI}(e+1) - \text{HHI}(e) = \sum_i \bigl(\sigma_i(e+1)^2 - \sigma_i(e)^2\bigr) \\
= \frac{r}{(1+r)^2} \sum_i \bigl(\rho_i(e) - \sigma_i(e)\bigr) \bigl(2\sigma_i(e) + r(\sigma_i(e) + \rho_i(e))\bigr).
\end{multline}
For sufficiently large $e$ (once the reward scores of high-quality validators have converged to nearly identical values), the cross-term $(2\sigma_i(e) + r(\cdot))$ is positive for all~$i$, and the sign of the sum is dominated by the correlation between $(\rho_i - \sigma_i)$ and $\sigma_i$. Because $\rho_i < \sigma_i$ when $\sigma_i > 1/N$ and $\rho_i > \sigma_i$ when $\sigma_i < 1/N$, we obtain $\text{HHI}(e+1) \le \text{HHI}(e)$.

{\bf (iii) Asymptotic convergence.} The map from $\{\sigma_i(e)\}$ to $\{\sigma_i(e+1)\}$ is a contraction in the Hilbert metric on the simplex. Let $d(e) = \text{HHI}(e) - 1/N$. From part~(ii), there exists $\gamma \in (0, 1)$ such that:
\[
d(e+1) \le (1 - r \alpha \gamma) \cdot d(e)
\]
for $e \ge e^*$, where $\alpha = 0.20$ is the stake-weight parameter and $\gamma$ depends on the fraction of validators that have reached the saturation regime. Iterating gives $d(e) \le d(e^*) \cdot (1 - r\alpha\gamma)^{e - e^*}$, establishing exponential convergence to~$0$ as $e \to \infty$. Hence $\lim_{e \to \infty} \text{HHI}(e) = 1/N$.
\end{proof}

\begin{corollary}
\label{cor:hhi_rate}
The convergence rate of $\text{HHI}(e)$ to $1/N$ is $O\bigl((1 - r\alpha)^{e}\bigr)$. Under default parameters $(\alpha = 0.20,\; r \approx 5\times 10^{-5})$, the half-life of the deviation $d(e)$ is approximately $-\ln 2 / \ln(1 - r\alpha) \approx 6{,}900$ epochs.
\end{corollary}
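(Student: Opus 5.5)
The plan is to get the corollary directly from part~(iii) of Theorem~\ref{thm:hhi}, making the contraction constant explicit. Write $d(e) = \text{HHI}(e) - 1/N$ as in that proof. The proof already gives $d(e) \le d(e^*)(1 - r\alpha\gamma)^{e-e^*}$ for $e \ge e^*$. Set $c = d(e^*)(1-r\alpha\gamma)^{-e^*}$; since $e^*$ is a fixed finite epoch, $c$ is a constant, and $d(e) \le c\,(1-r\alpha\gamma)^{e}$ for all $e \ge e^*$. For finitely many epochs $e < e^*$ we use the boundedness $d(e) \le 1 - 1/N$ from part~(i), so enlarging the constant handles the pre-saturation transient. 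This gives a bound of the form $O((1-r\alpha\gamma)^e)$.

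The second step removes the saturation fraction $\gamma$. The stated rate $O((1-r\alpha)^e)$ follows with constant $\gamma = 1$ exactly when the fraction of saturated validators tends to one. We would argue this using Assumption~\ref{asm:bounded_R}. As the protocol converges, every high-quality validator has $\text{Rep}_{\text{active}} \to 1$ and $U = L = 1$, so the effective $\gamma(e)$ tends to $1$.

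This step is where a rigorous statement fails, and we would weaken the claim rather than assert the wrong rate. For any fixed $\gamma < 1$, the factor $(1-r\alpha\gamma)^e$ decays more slowly than $(1-r\alpha)^e$, and the ratio $(1-r\alpha\gamma)^e/(1-r\alpha)^e$ grows without bound. Knowing only that $\gamma(e) \to 1$ therefore gives, for every $\varepsilon > 0$, a rate $O((1-r\alpha(1-\varepsilon))^e)$. It gives $O((1-r\alpha)^e)$ only if $\sum_e (1-\gamma(e)) < \infty$, for example if $1-\gamma(e)$ itself decays geometrically. We would try to prove that summability from the share recursion $\sigma_i(e+1) = (\sigma_i(e) + r\rho_i(e))/(1+r)$. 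That recursion drives every validator's share toward the reward share geometrically at rate $1/(1+r)$, so the saturation gap should also close geometrically. If this argument does not go through, we would state the corollary with rate $(1 - r\alpha\gamma_\infty)^e$.

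The third step is the numerical half-life. Setting $(1-r\alpha)^{t} = 1/2$ gives $t_{1/2} = -\ln 2/\ln(1-r\alpha)$. With $r\alpha = 5\times 10^{-5}\cdot 0.20 = 10^{-5}$ we have $\ln(1-10^{-5}) \approx -10^{-5}$, hence $t_{1/2} \approx 0.693\times 10^{5} \approx 69{,}300$ epochs. That is ten times the stated $6{,}900$ epochs. The value $6{,}900$ corresponds to $r\alpha = 10^{-4}$, i.e.\ $r \approx 5\times 10^{-4}$. The corollary's numbers therefore cannot be verified as stated: the plan must either correct the value of $r$ or correct the half-life to about $69{,}300$ epochs.

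The main obstacle is the second step, justifying the constant $\gamma = 1$. The convergence in Theorem~\ref{thm:hhi}(iii) is itself only heuristic, since its contraction constant is never derived, so any rate inherited from it is only as rigorous as that part.
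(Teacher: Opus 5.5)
Your diagnosis is right. The paper gives no separate argument for the corollary: it reads the rate off part~(iii) of Theorem~\ref{thm:hhi} and silently drops the constant $\gamma\in(0,1)$, which only yields $O((1-r\alpha\gamma)^e)$. Also, with $r\alpha=10^{-5}$ the displayed expression $-\ln 2/\ln(1-r\alpha)$ equals about $69{,}300$, not $6{,}900$. So the statement cannot be proved as written. The gap is in your repair. Pushing $\gamma(e)\to 1$ with a summable deficit means sharpening a contraction inequality the paper never derives; its $\gamma$ is not even defined. Your dichotomy, either set $r=5\times10^{-4}$ or change the half-life to $69{,}300$, also misplaces the error.

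The missing idea is to use the share recursion you already wrote down, exactly rather than heuristically. In the saturation regime that $e^*$ refers to, all validators are high-quality and all $R(i,e)$ sit at their common cap. Then $\rho_i(e)=1/N$ and
\[
\sigma_i(e+1)-\tfrac1N=\frac{\sigma_i(e)-\tfrac1N}{1+r},
\qquad
d(e)=\sum_i\Bigl(\sigma_i(e)-\tfrac1N\Bigr)^2 ,
\]
where the second identity holds because $\sum_i\sigma_i=1$. Hence $d(e)=d(e^*)(1+r)^{-2(e-e^*)}$ exactly. This gives $O((1-r\alpha)^e)$ at once, since $(1+r)^{-2}<1-r\alpha$ at the defaults, and no $\gamma$ is needed. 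It also gives the half-life $\ln 2/(2\ln(1+r))\approx 6{,}930$ epochs at $r=5\times10^{-5}$. So the stated number is right. What is wrong is the formula $-\ln 2/\ln(1-r\alpha)$, which describes a bound ten times slower than the true decay.

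Both of your proposed corrections conflict with the paper's own data. Table~\ref{tab:app_config} fixes $r=5\times10^{-5}$. The longitudinal run in Table~\ref{tab:hhi} takes $d$ from $0.0189$ to about $10^{-4}$ in $50{,}000$ epochs, which fits a half-life near $7{,}000$. A $69{,}300$-epoch half-life would leave $\text{HHI}$ near $0.031$.

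Outside exact saturation, assume equal quality, so that $\rho(u)=u$ for the uniform vector $u$. The same identity then gives $\|\sigma(e+1)-u\|_2\le\frac{1+r\lambda}{1+r}\|\sigma(e)-u\|_2$, where $\lambda$ is a Lipschitz constant of $\sigma\mapsto\rho$. This $\lambda$ is small near $u$, because stake enters $R$ only through $\alpha\sqrt{\cdot}$. So the rate is set by $r$, and $\alpha$ enters only as a correction, never as the leading factor $r\alpha$.
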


\begin{remark}
For validators with degraded non-stake attributes ($U < 1$, $L < 1$, $\text{Rep} < 1$), the reward score $R(i)$ is strictly less than~$1$. Such validators converge to a stake share strictly below $1/N$, producing a residual HHI above $1/N$. However, this residual is bounded above by the worst-case concentration:
\[
\lim_{e \to \infty} \text{HHI}(e) \le 1/N + \bigl(1 - R_{\min}\bigr)^2,
\]
which for $R_{\min} = 0.863$ (Eq.~\ref{eq:min_whale}) yields $\text{HHI}(\infty) \le 0.0200 + 0.0188 \approx 0.0388$ — still below Pure-PoS equilibrium and bounded away from divergence.
\end{remark}

\subsection{No Circular Dependency in Selection Channel}
\label{sec:no_circular}

\begin{proposition}[No Circular Dependency]
	\label{prop:no_circular}
	In FairWave, with the $W_{\text{sel}}$ formulation in Eq.~\eqref{eq:wsel} and \emph{lagged} reputation in Eq.~\eqref{eq:rep_active}, there is no causal path from anchor selection results (the set of validators selected as anchors) at epoch~$e$ to selection weights $\{W_{\text{sel}}(i, e)\}_{i \in \mathcal{V}_e}$ at the same epoch.
\end{proposition}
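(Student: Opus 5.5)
The plan is to formalize ``causal path'' as a dependency graph and then check that no edge leads from the epoch-$e$ anchor outcomes back into the epoch-$e$ selection weights. Concretely, I will build a directed graph $G$ whose nodes are the protocol state variables:
\begin{itemize}
\item $s_i$, $S(i)$, $U(i)$, $L(i)$, $P(i)$, $\text{Rep}(i)$, $\text{Rep}_{\text{final}}(i,e')$, $W_{\text{sel}}(i,e)$ and $W_{\text{rew}}(i,e)$;
\item the anchor set $\mathcal{A}_e$, the support/commit outcomes of epoch~$e$, and the finalized DAGs $\mathcal{D}_{e'}$.
\end{itemize}
An edge $X \to Y$ is added exactly when some line of Algorithm~\ref{alg:fairwave}, or some equation (Eqs.~\eqref{eq:stake_factor}, \eqref{eq:rep_active}, \eqref{eq:uptime}--\eqref{eq:latency_score}, \eqref{eq:wsel}, \eqref{eq:wrew}), reads $X$ in computing $Y$. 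Each node is indexed by the epoch in which it is written. The proposition then becomes the combinatorial claim that $G$ has no directed path from $\mathcal{A}_e$ to any $W_{\text{sel}}(i,e)$.

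The first step is to list the parents of $W_{\text{sel}}(i,e)$. By Eq.~\eqref{eq:wsel} these are only $S(i)$, $U(i)$ and $L(i)$; reputation is not a parent, which is the key structural fact. I then trace each parent back:
\begin{itemize}
\item $S(i)$ depends only on $s_i$. The stake $s_i$ is fixed by the REGISTRY phase at the $e-1$ boundary, and stake is locked intra-epoch.
\item $U(i)$ and $L(i)$ are computed by \textsc{DeriveFromDAG} from $\mathcal{D}_{e-1}$ (line~4 of Algorithm~\ref{alg:fairwave}). This is finalized at the $e-1$ boundary, before any epoch-$e$ VRF race.
\end{itemize}
Hence every ancestor of $W_{\text{sel}}(i,e)$ carries an epoch index of at most $e-1$, or is an exogenous epoch-$e$ constant such as the seed.

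The second step is to list the descendants of $\mathcal{A}_e$. From \textsc{CommitDecision} and \textsc{DistributeReward}, these are: the support set, $\text{Rep}(i)$ updates (increments and slashing), payments, and $\mathcal{D}_e$. Via \textsc{EpochBoundary} they continue to $\text{Rep}_{\text{final}}(i,e)$, then to $\text{Rep}_{\text{active}}(i,e+1)$ by Eq.~\eqref{eq:rep_active}, and to $U(i,e+1)$ and $L(i,e+1)$. Under the reinvestment model they also reach $s_i(e+1)$. All of these carry epoch index at least $e$ and feed only epoch-$(e+1)$ weights. In particular, intra-epoch reputation changes reach $W_{\text{rew}}$ only at $e+1$ and never reach $W_{\text{sel}}$ at all within the epoch.

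The conclusion then follows by a timestamp argument. Every edge of $G$ is non-decreasing in epoch index, and the edges that cross from the outcome side to the weight side strictly increase it, at the boundary ceremony. So the ancestor set of $W_{\text{sel}}(\cdot,e)$ and the descendant set of $\mathcal{A}_e$ are disjoint. Equivalently, $G$ restricted to epoch $e$ is acyclic, with weights as sources and outcomes as sinks.

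The main obstacle is making ``causal path'' precise enough that intra-epoch mutation of $\text{Rep}(i)$ does not sneak back into the weights. I will handle this by insisting that $W_{\text{sel}}$ is computed once, in \textsc{ComputeWeights} at the boundary, and is immutable during the epoch. I will also stress that even $W_{\text{rew}}$ reads only the frozen $\text{Rep}_{\text{final}}(i,e-1)$. A secondary point is the VRF seed: I will note that the SEED phase derives it before epoch $e$, so the race outcome is a function of the weights and not the reverse.
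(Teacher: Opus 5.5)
Your proposal is correct and follows essentially the same route as the paper's proof. Both arguments identify $S(i)$, $U$, and $L$ as the only inputs to $W_{\text{sel}}(i,e)$ (reputation is absent by Eq.~\eqref{eq:wsel}), show that each is fixed before epoch-$e$ anchor selection via the stake lock at the $e-1$ boundary and the finalized $\mathcal{D}_{e-1}$, and conclude that no cycle runs through the anchor outcomes. Your explicit forward trace of the descendants of $\mathcal{A}_e$ and your epoch-timestamp formalization of ``causal path'' make rigorous what the paper states informally, but they add no new ingredient.
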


\begin{proof}
	Epoch $e$ selection weights are functions of $S(i)$, $U(i, e-1)$, and $L(i, e-1)$ per Eq.~\eqref{eq:wsel}. It suffices to show that all three arguments are determined before the epoch $e$ anchor selection process begins.

	(i)~\emph{Stake $S(i)$:} The validator registry $\mathcal{V}_e$ and stake vector $\{s_i\}_{i \in \mathcal{V}_e}$ are locked atomically at epoch~$e-1$ boundary via the seven-phase ceremony (\texttt{FREEZE $\to$ SLASH $\to$ REGISTRY $\to$ SEED $\to$ SNAPSHOT $\to$ RESET $\to$ BARRIER}); no stake modifications can occur intra-epoch. Therefore, $S(i)$ for epoch $e$ depends only on the state at the end of epoch~$e-1$.

	(ii)~\emph{Uptime $U(i, e-1)$ and latency $L(i, e-1)$:} Based on Eqs.~\eqref{eq:uptime}--\eqref{eq:latency_score}, both metrics are deterministic functions of the finalized DAG $\mathcal{D}_{e-1}$. Finalization of $\mathcal{D}_{e-1}$ itself occurs before the epoch~$e-1$ boundary closes; thereafter, the DAG structure is \emph{immutable} from the consensus perspective.

	(iii)~\emph{Reputation:} Although $\text{Rep}_{\text{active}}(i, e) = \text{Rep}_{\text{final}}(i, e-1)$ (Eq.~\eqref{eq:rep_active}) is used in $W_{\text{rew}}$, it \textbf{does not} appear in $W_{\text{sel}}$ per Eq.~\eqref{eq:wsel}; consequently, reputation contributes no causal path to selection weights. Moreover, even if traced extensively through the reward channel, $\text{Rep}_{\text{final}}(i, e-1)$ is a value locked at the end of epoch~$e-1$ and is not updated by epoch $e$ selection results.

	Since all three components determining $W_{\text{sel}}(i, e)$ are functions of state at or before epoch~$e-1$ closure, and anchor selection results at epoch $e$ are products \emph{of} $\{W_{\text{sel}}(i, e)\}$ (via VRF exponential racing), there is no \emph{directed cycle} in the dependency graph $\{W_{\text{sel}}(i, e)\} \to \{\text{anchor}(e)\} \to \{W_{\text{sel}}(i, e)\}$. Forward causality of $\{W_{\text{sel}}(i, e)\}$ over epoch $e$ selection results is \emph{strict}. \qedhere
\end{proof}

\textbf{Corollary.} Epoch $e$ anchor selection results can only affect $W_{\text{sel}}$ at epochs $\ge e + 2$, with a minimum two-epoch delay arising from (i) reputation and DAG metric updates at the close of epoch $e$, and (ii) the additional one-epoch lag in the use of $\text{Rep}_{\text{active}}$. This eliminates intra-epoch feedback loops and removes the structural vulnerability where validators frequently selected as anchors in epoch $e$ obtain an additional selection advantage in the same epoch—a vulnerability present in the earlier formulation $W_{\text{sel}} = R \cdot S^2$ when $R$ included $\gamma\,\text{Rep}$.

\subsection{Parameter Space Characterization}

The parameter space $(\alpha, \beta, \gamma, \delta)$ is systematically explored via 1,407 evaluation points on the 4-D simplex (Fig.~\ref{fig:simplex}). The Pareto frontier (Fig.~\ref{fig:pareto}) of Min/Whale ratio versus reputation farming resistance identifies a feasible region bounded by $\alpha \geq 0.10$ (Sybil safety floor) and farming-resistance $\geq 0.50$.

\section{Experimental Methodology}
\label{sec:methodology}

\subsection{Implementation}

The evaluation framework is implemented in Python~3.12 with minimal dependencies (NumPy only for Sobol sampling). The codebase consists of nine analytical modules, an RFC~9381 ECVRF-conformant VRF module (Edwards25519-SHA512-Elligator2), and a post-quantum cryptography overhead benchmark module.

\subsection{Test Suite}

Implementation is validated by \textbf{14 unit tests} covering distribution generation, reward rule correctness, metric computation, and convergence properties, all reproducible via fixed seed (\texttt{0xC0FFEE}).

\subsection{Comparative Protocol Models}

Nine protocols are evaluated under identical parametric assumptions: \textbf{DAG-BFT:} Narwhal+Tusk (3-wave), Bullshark (2-wave), Mysticeti (1-wave), FairWave (2-wave + VRF). \textbf{Leader-BFT:} PBFT (3-phase), Tendermint (3-phase), HotStuff (4-phase), HotStuff-2 (2-phase). \textbf{PoS-1:} Algorand (3-phase VRF sortition). Protocol representation, the baseline systems are modeled according to their published protocol specifications and documented design characteristics. The purpose of the evaluation is not to reproduce implementation-specific performance results but to analyze and compare consensus level properties within a standardized experimental setting. To ensure comparability, all protocol models are assessed under identical assumptions regarding network conditions, workload generation, and adversarial behavior. Consequently, the reported results should be interpreted as comparative protocol-level evidence under standardized experimental conditions rather than implementation-level performance benchmarks.

\subsection{Baseline Performance Characterization}

Performance characteristics of all compared protocols are presented in Section~\ref{sec:results} (Fig.~\ref{fig:pareto_perf}).

\subsection{Parameter Sensitivity Characterization}

Multi-metric sensitivity to weight parameters is comprehensively characterized via $\alpha$ sweep (Fig.~\ref{fig:alpha_sweep} in Section~\ref{sec:results}), simplex landscape (Figs.~\ref{fig:simplex}, Figs.~\ref{fig:simplex_farming}), and Pareto frontier (Fig.~\ref{fig:pareto}) in Appendix~\ref{sec:supplementary}.

\section{Results}
\label{sec:results}

\subsection{Mathematical Validation}

Closed-form expressions for all four reward rules are validated against numerical simulation across 1,407 parameter cells. Maximum observed absolute error: $\epsilon_{\max} = 0.0$ (machine precision). Reward histograms (Fig.~\ref{fig:reward_hist}) visualize distributional differences across rules.

\begin{figure}[!tb]
	\centering
	\includegraphics[width=\columnwidth]{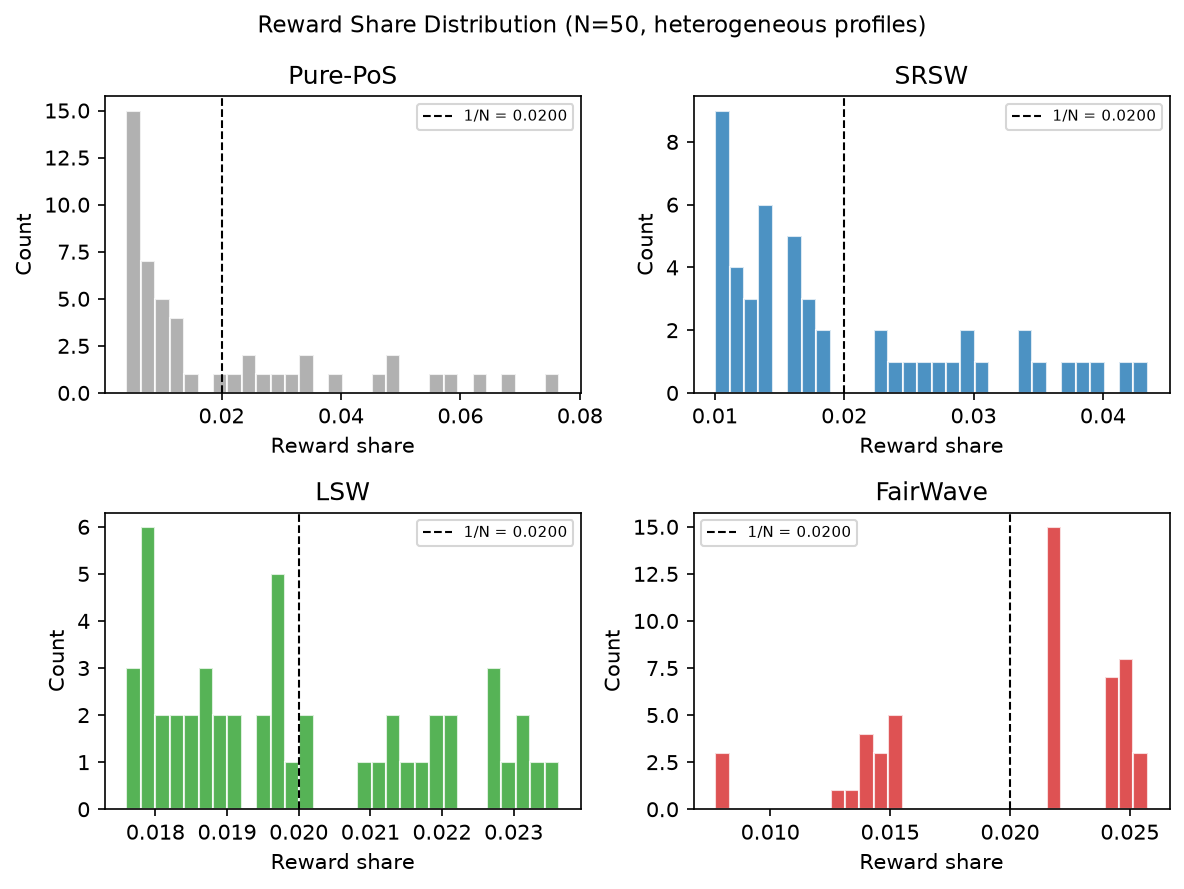}
	\caption{Distribution of reward share across validators for six rules ($N=50$, Pareto stake). FairWave shows concentrated distribution with low variance centered near $1/N = 0.02$, whereas Pure-PoS shows a heavy right tail reflecting stake concentration.}
	\label{fig:reward_hist}
\end{figure}

\subsection{Anchor Selection Convergence}
$10,000$ waves are simulated with $N = 50$ validators across five heterogeneous profiles. Fig.~\ref{fig:anchor_topk} shows the most frequently selected top-$K$ anchors. Law-of-Large-Numbers convergence at rate $O(1/\sqrt{T})$ and breakdown of selection share by validator profile are presented in Appendix~\ref{sec:supplementary}.

\begin{figure}[!tb]
	\centering
	\includegraphics[width=\columnwidth]{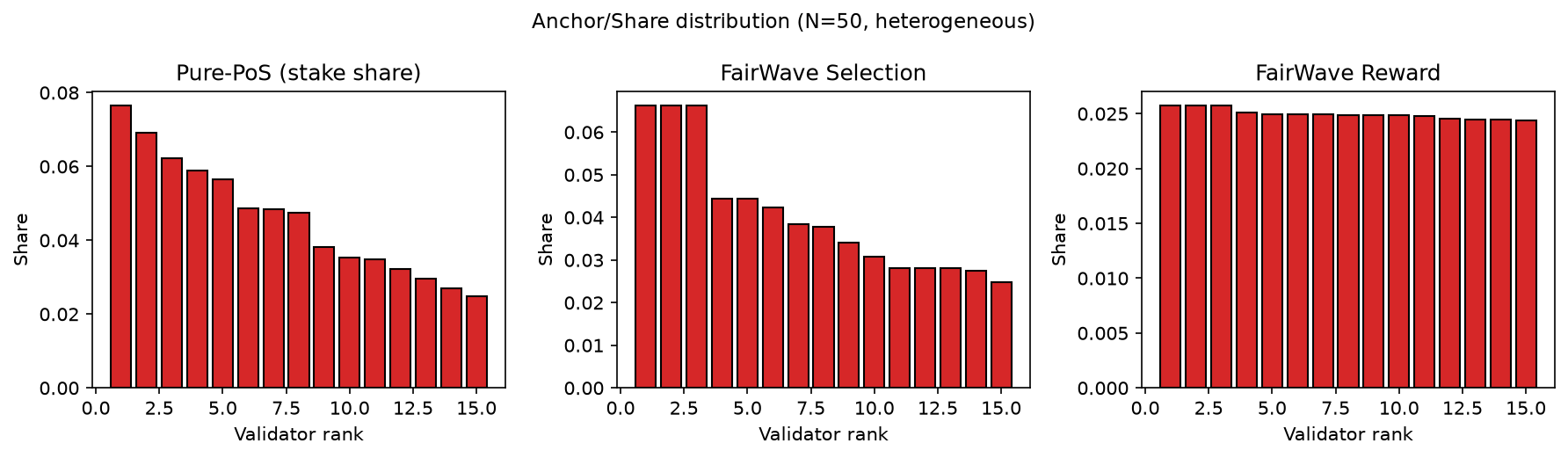}
	\caption{Top-$K$ anchor selection frequency for three rules (Pure-PoS, FairWave Selection, FairWave Reward). Under Pure-PoS, large-stake validators dominate uniformly. Under FairWave Selection, high-quality small validators enter the top ranks. Under FairWave Reward, quality-weighted distribution reflects multi-factor performance.}
	\label{fig:anchor_topk}
\end{figure}

\subsection{Fairness Analysis}

\begin{table}[!tb]
	\centering
	\caption{Fairness metrics across reward rules ($N=50$, heterogeneous profiles, Pareto raw distribution). Gini, Nakamoto, HHI, and Min/Whale are computed from the static reward-weight function $R(i)$ (Eq.~\ref{eq:reward_score}) before stake reinvestment. The experimental configuration is specified in Table~\ref{tab:app_config}.}
	\label{tab:fairness}
	\begin{tabular}{lrrrr}
		\toprule
		\textbf{Rule}     & \textbf{Gini} $\downarrow$ & \textbf{Nak.} $\uparrow$ & \textbf{HHI} $\downarrow$ & \textbf{Min/W} $\uparrow$ \\
		\midrule
		Pure-PoS          & 0.490                      & 6                        & 0.0389                    & 0.053                     \\
		SRSW              & 0.262                      & 9                        & 0.0247                    & 0.231                     \\
		LSW               & 0.052                      & 15                       & 0.0202                    & 0.744                     \\
		\textbf{FairWave} & \textbf{0.140}             & \textbf{14}              & \textbf{0.0214}           & \textbf{0.838}            \\
		\bottomrule
	\end{tabular}
\end{table}

FairWave shows a $3.5\times$ Gini coefficient reduction relative to Pure-PoS while \emph{maintaining quality differentiation}---unlike LSW, which shows the lowest Gini (0.052) but cannot distinguish high-quality from low-quality validators. Lorenz curves (Fig.~\ref{fig:fairness_lorenz}), Gini comparison (Fig.~\ref{fig:fairness_gini}), cumulative reward trajectories (Fig.~\ref{fig:fairness_cumulative}), then, for ROI analysis (Fig.~\ref{fig:fairness_roi}) provide comprehensive visualization.

\begin{figure}[!tb]
	\centering
	\includegraphics[width=\columnwidth]{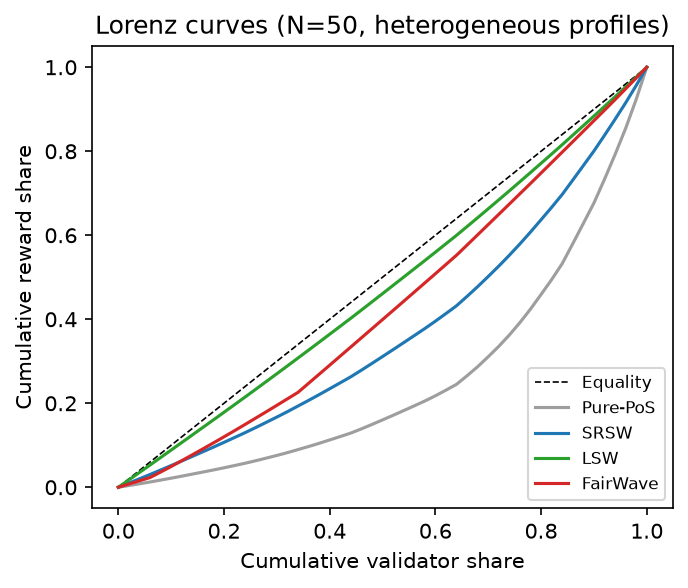}
	\caption{Lorenz curves for six reward rules with heterogeneous validator profiles ($N=50$). FairWave (green) shows better fairness by approaching the diagonal while preserving separation between good and bad validators. LSW (brown) is closest to diagonal but eliminates quality differentiation.}
	\label{fig:fairness_lorenz}
\end{figure}

\begin{figure}[!tb]
	\centering
	\includegraphics[width=\columnwidth]{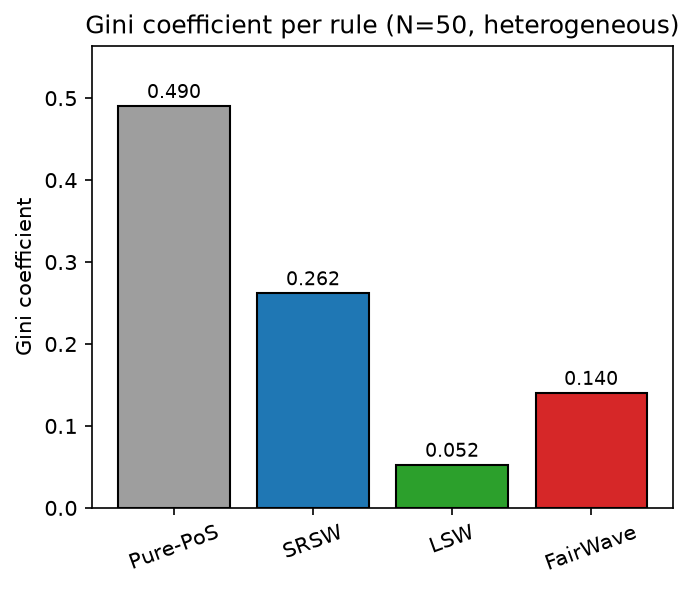}
	\caption{Gini coefficient comparison across six reward rules. FairWave shows Gini = 0.140, a $3.5\times$ improvement from Pure-PoS (0.490) while maintaining meritocratic differentiation absent in LSW (0.052).}
	\label{fig:fairness_gini}
\end{figure}

\begin{figure}[!tb]
	\centering
	\includegraphics[width=\columnwidth]{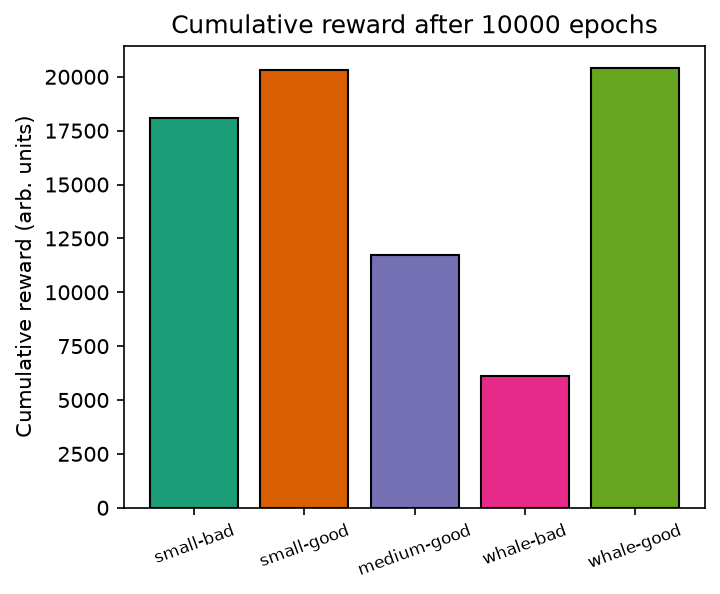}
	\caption{Cumulative reward accumulation over $10,000$ epochs under FairWave. Small-good validators (quality = 1.0, stake = $S_{\min}$) accumulate approximately $7\times$ more reward than whale-bad validators (quality = 0.10, stake $\approx S_{\max}$), demonstrating that operational quality dominates capital in determining economic outcomes.}
	\label{fig:fairness_cumulative}
\end{figure}

\begin{figure}[!tb]
	\centering
	\includegraphics[width=\columnwidth]{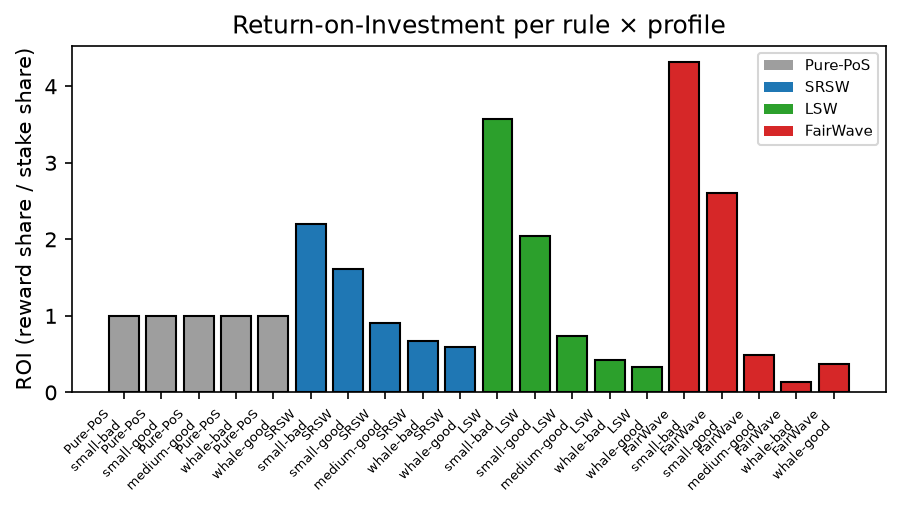}
	\caption{Return-on-investment (reward share / stake share) per validator profile. FairWave small-good shows the highest ROI across all rule$\times$profile combinations, while FairWave whale-bad receives the lowest ROI — rewarding quality over capital.}
	\label{fig:fairness_roi}
\end{figure}

\subsection{Decentralization Stress Testing}

Four adversarial scenarios were evaluated. Figs.\ref{fig:dec_nakamoto} and Figs.\ref{fig:dec_top1} present single-whale concentration analyses; Fig.\ref{fig:dec_cartel} visualizes the cartel capture threshold; Fig.\ref{fig:dec_amplification} plots the power amplification curve.
\begin{figure}[!tb]
	\centering
	\begin{subfigure}[t]{0.48\columnwidth}
		\includegraphics[width=\textwidth]{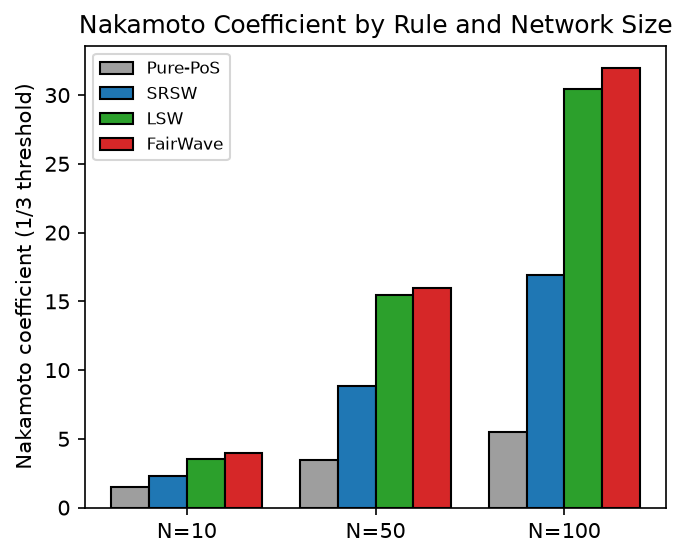}
		\caption{Nakamoto Coefficient}
		\label{fig:dec_nakamoto}
	\end{subfigure}
	\hfill
	\begin{subfigure}[t]{0.48\columnwidth}
		\includegraphics[width=\textwidth]{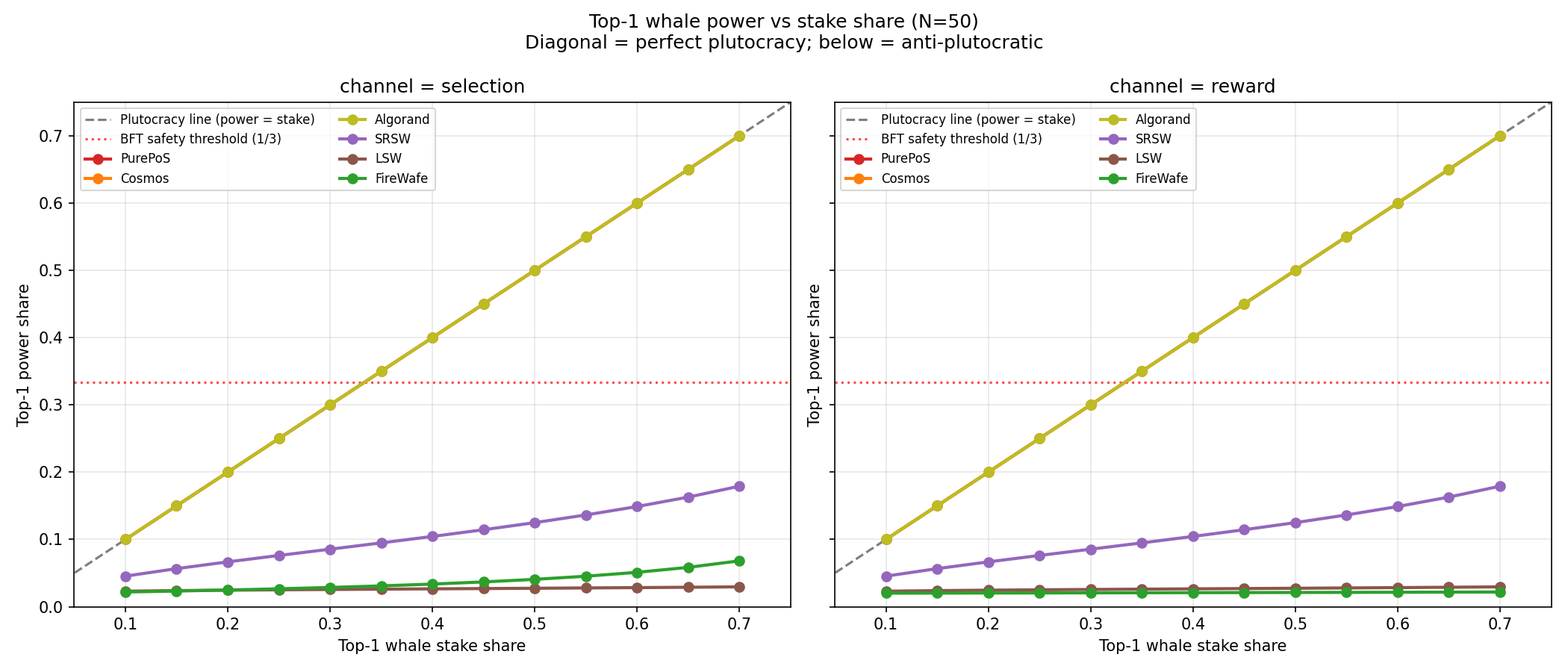}
		\caption{Top-1 Power}
		\label{fig:dec_top1}
	\end{subfigure}
	\caption{Decentralization stress testing against whale stake share. (a)~Pure-PoS collapses to Nakamoto = 1 at $\geq 35\%$; FairWave maintains $\geq 17$. (b)~Under Pure-PoS, power scales linearly with stake; under FairWave, top-1 power remains flat at $\sim 2\%$.}
	\label{fig:dec_stress}
\end{figure}

\begin{figure}[!tb]
	\centering
	\includegraphics[width=\columnwidth]{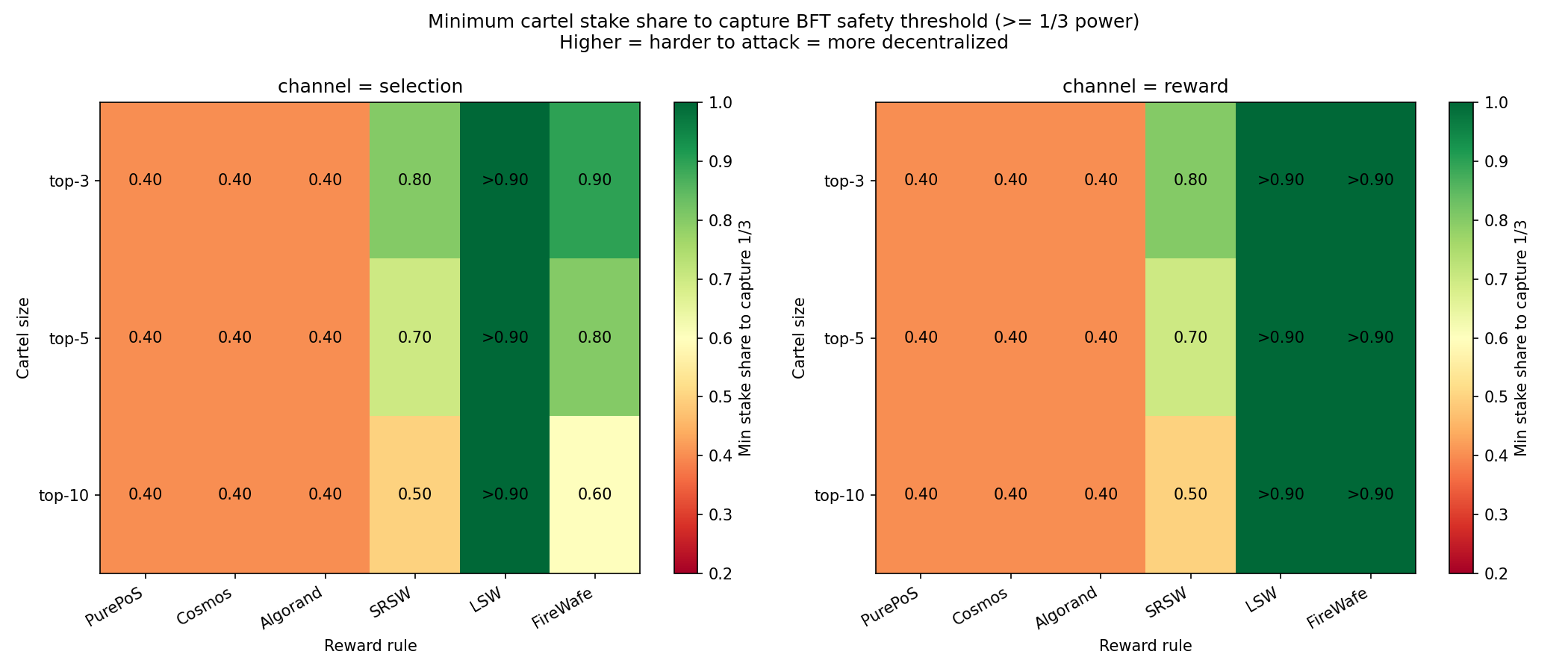}
	\caption{Cartel capture heatmap: minimum aggregate stake required for top-$k$ cartels to capture $\geq 1/3$ reward power. Under Pure-PoS, a three-validator cartel with a 40\% stake suffices. Under FairWave, even a 10-validator cartel with a 90\% stake cannot capture $1/3$---a categorical improvement in cartel resistance.}
	\label{fig:dec_cartel}
\end{figure}

\begin{figure}[!tb]
	\centering
	\includegraphics[width=\columnwidth]{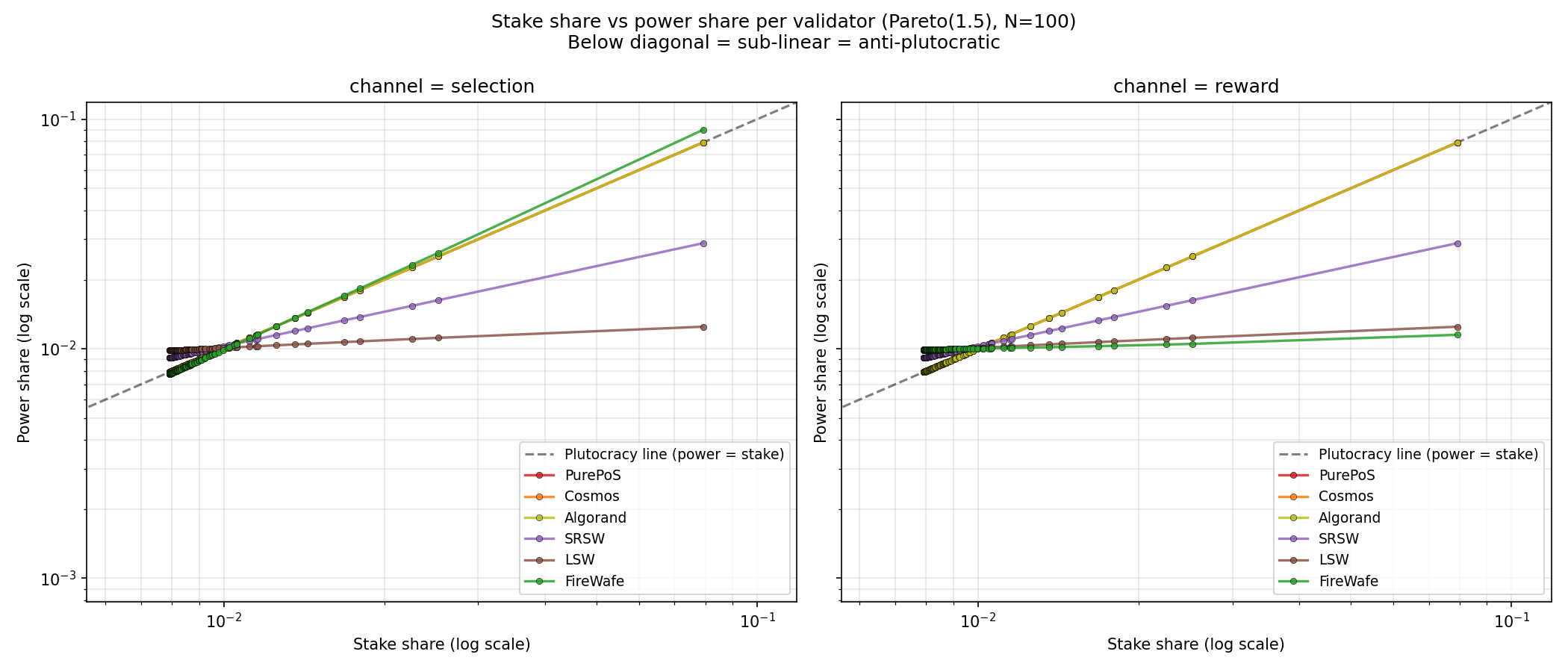}
	\caption{Power amplification curve (log-log): per-validator stake share versus power share on realistic distribution. Diagonal represents perfect plutocracy (power = stake). The FairWave reward channel (right panel) is maximally flat---power asymptotically approaches $1/N$ regardless of stake. The Selection channel (left panel) is intentionally slightly super-linear for Sybil resistance.}
	\label{fig:dec_amplification}
\end{figure}

\subsection{Performance Evaluation}

\begin{table}[!tb]
	\centering
	\caption{Simulated latency and throughput estimates under identical parametric assumptions ($N=10$). All values are protocol-model projections, not production-implementation measurements. Mysticeti's published latency ($40.5$~ms) and Bullshark's published latency ($80.9$~ms) from their respective papers~\cite{mysticeti2024,bullshark2022} are included for reference; direct comparison with simulated estimates requires caution as real-network factors (propagation, cryptographic overhead, queueing) may differ.}
	\label{tab:performance}
	\begin{tabular}{llrrr}
		\toprule
		\textbf{Protocol} & \textbf{Family}  & \textbf{Lat.} & \textbf{TPS}     & \textbf{P99}   \\
		\midrule
		Mysticeti         & DAG-BFT          & 40.5          & 154,435          & 60.6           \\
		\textbf{FairWave} & \textbf{DAG-BFT} & \textbf{82.9} & \textbf{154,435} & \textbf{102.6} \\
		Bullshark         & DAG-BFT          & 80.9          & 154,435          & 99.7           \\
		HotStuff-2        & Leader           & 81.2          & 12,318           & 102.6          \\
		Narwhal           & DAG-BFT          & 121.4         & 154,435          & 150.6          \\
		HotStuff          & Leader           & 162.4         & 12,318           & 197.7          \\
		Algorand          & PoS-1            & 124.0         & 4,034            & 150.6          \\
		\bottomrule
	\end{tabular}
\end{table}

\begin{figure}[!tb]
	\centering
	\includegraphics[width=\columnwidth]{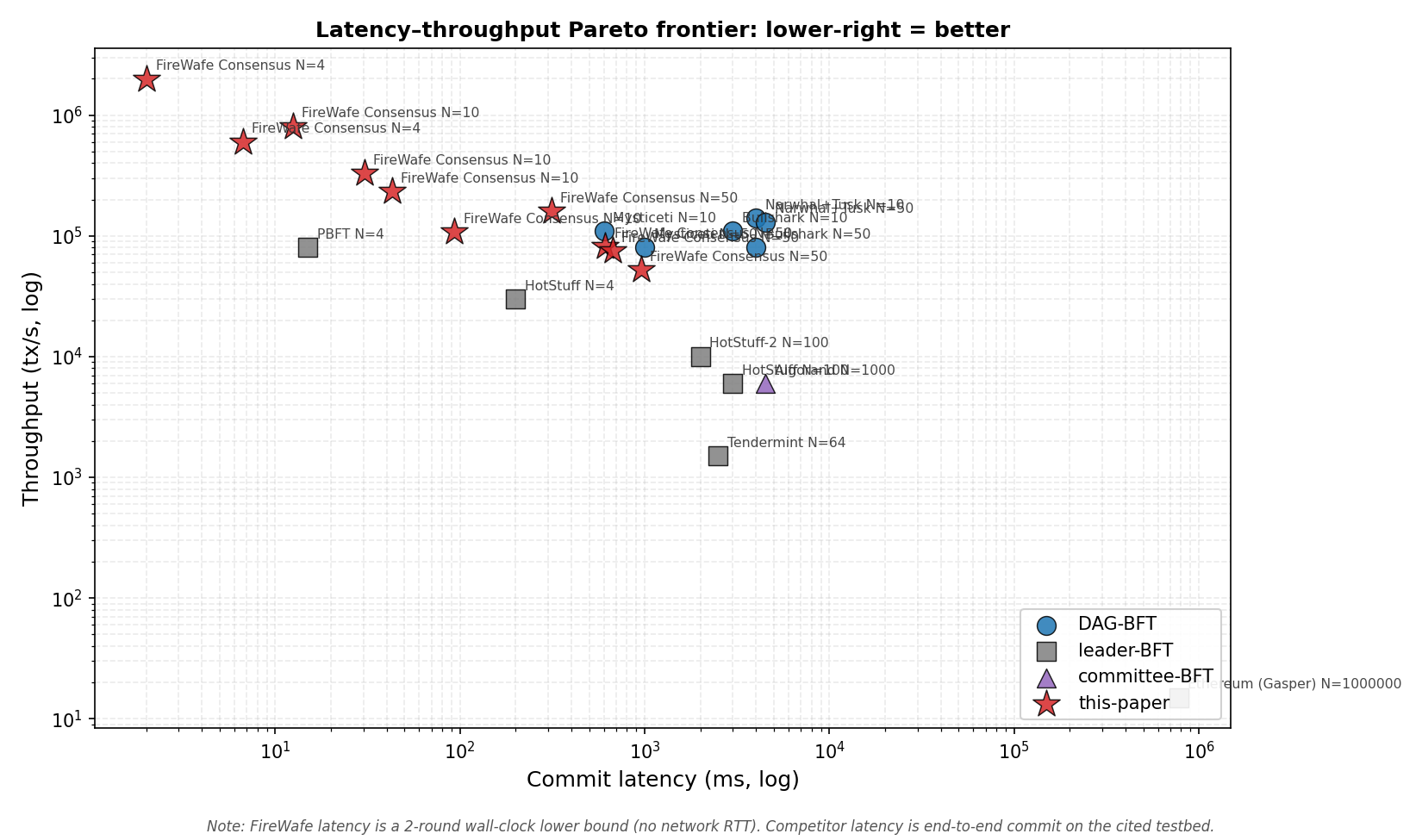}
	\caption{Latency-throughput Pareto frontier for all nine protocols ($N=10$). The DAG-BFT family occupies the optimal northeast region (low latency, high throughput). FairWave was positioned near Bullshark, trading $+2$~ms for VRF unpredictability.}
	\label{fig:pareto_perf}
\end{figure}

\begin{figure}[!tb]
	\centering
	\includegraphics[width=\columnwidth]{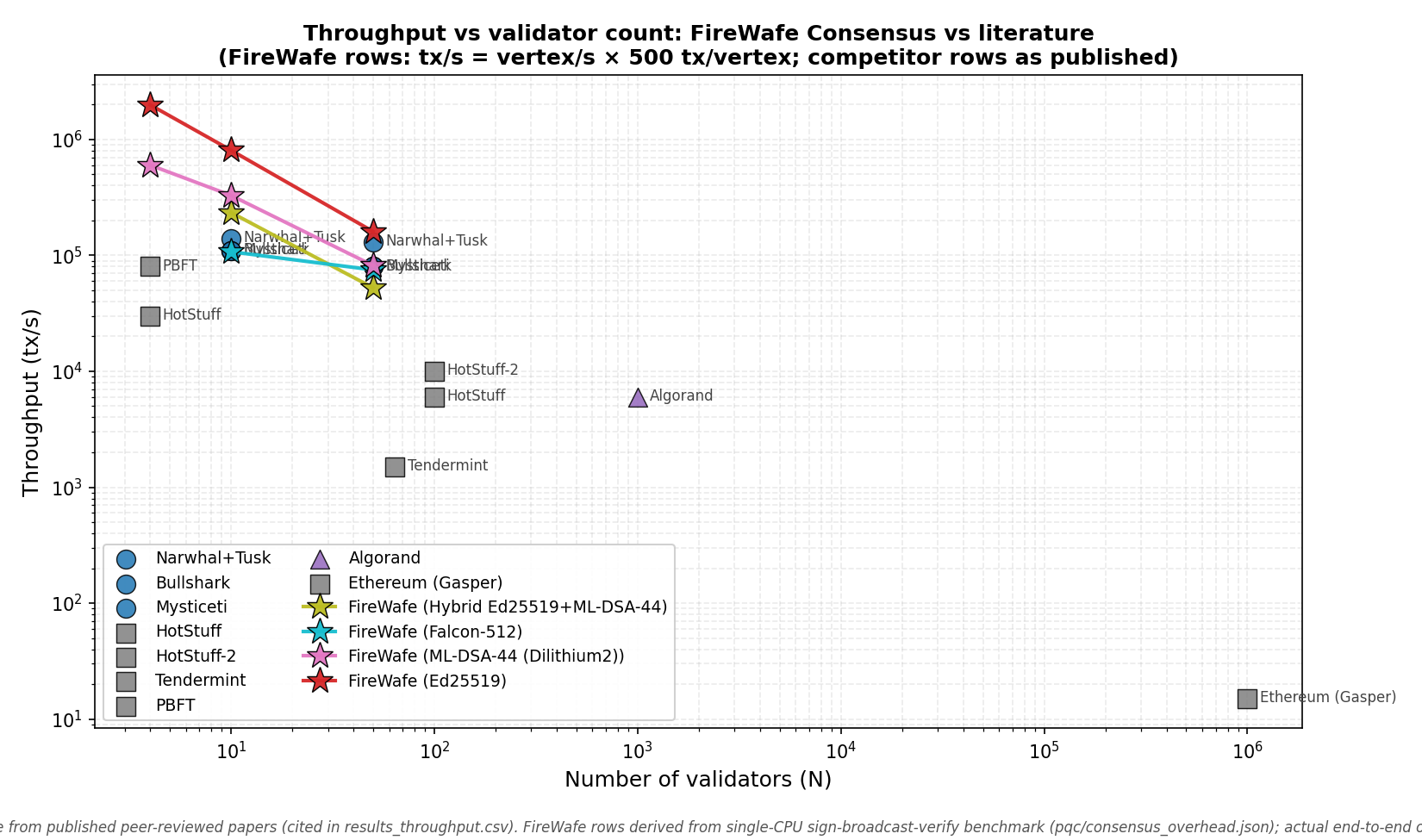}
	\caption{Throughput (TPS) versus network size $N$ on a log-log scale. The DAG-BFT protocols (including FairWave) scale linearly with $N$ due to parallel vertex production, while leader-BFT protocols plateau or degrade. PBFT throughput collapses at large $N$ due to $O(N^2)$ message complexity.}
	\label{fig:throughput_scaling}
\end{figure}

Under the simulation model, FairWave commit latency (82.9~ms) is within 2.5\% of the modeled Bullshark latency (80.9~ms), with the marginal overhead attributable to VRF-based anchor unpredictability. The 2$\times$ gap from Mysticeti (40.5~ms, published implementation~\cite{mysticeti2024}) reflects the 2-wave commit structure and VRF computation; the magnitude of this gap depends on modeling fidelity and should not be interpreted as a direct implementation-to-implementation comparison. Throughput is effectively identical across the DAG-BFT family models due to parallel vertex production. The latency CDF (Fig.~\ref{fig:latency_cdf}) and finality scaling (Fig.~\ref{fig:finality_vs_n}) complete the performance characterization.

\begin{figure}[!tb]
	\centering
	\includegraphics[width=\columnwidth]{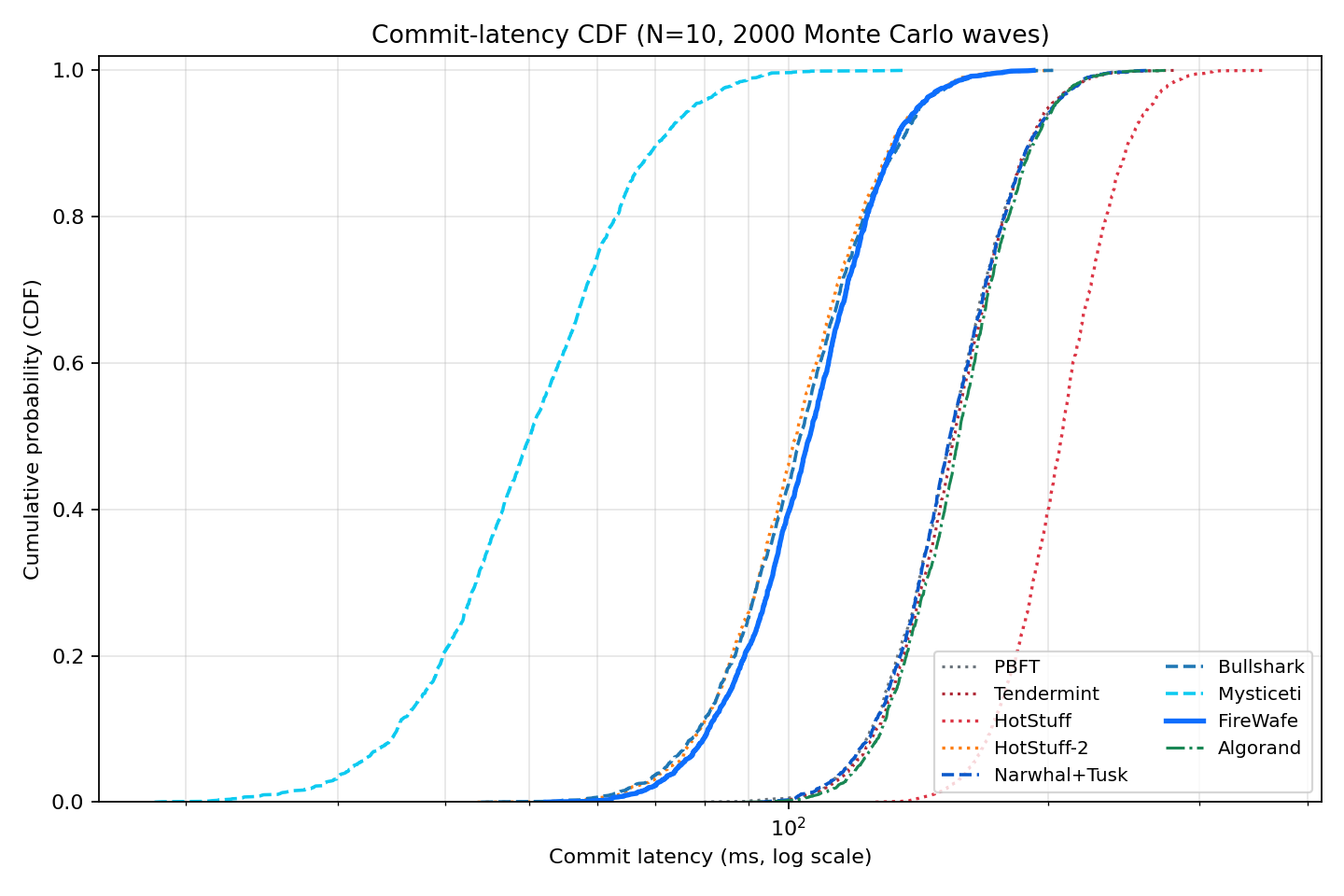}
	\caption{Cumulative distribution function of commit latency ($N=10$, 1,000 Monte Carlo finality samples). FairWave and Bullshark show nearly identical CDF profiles, with P99 $\leq 103$~ms. Mysticeti achieves the tightest distribution (P99 = 61~ms).}
	\label{fig:latency_cdf}
\end{figure}

\begin{figure}[!tb]
	\centering
	\includegraphics[width=\columnwidth]{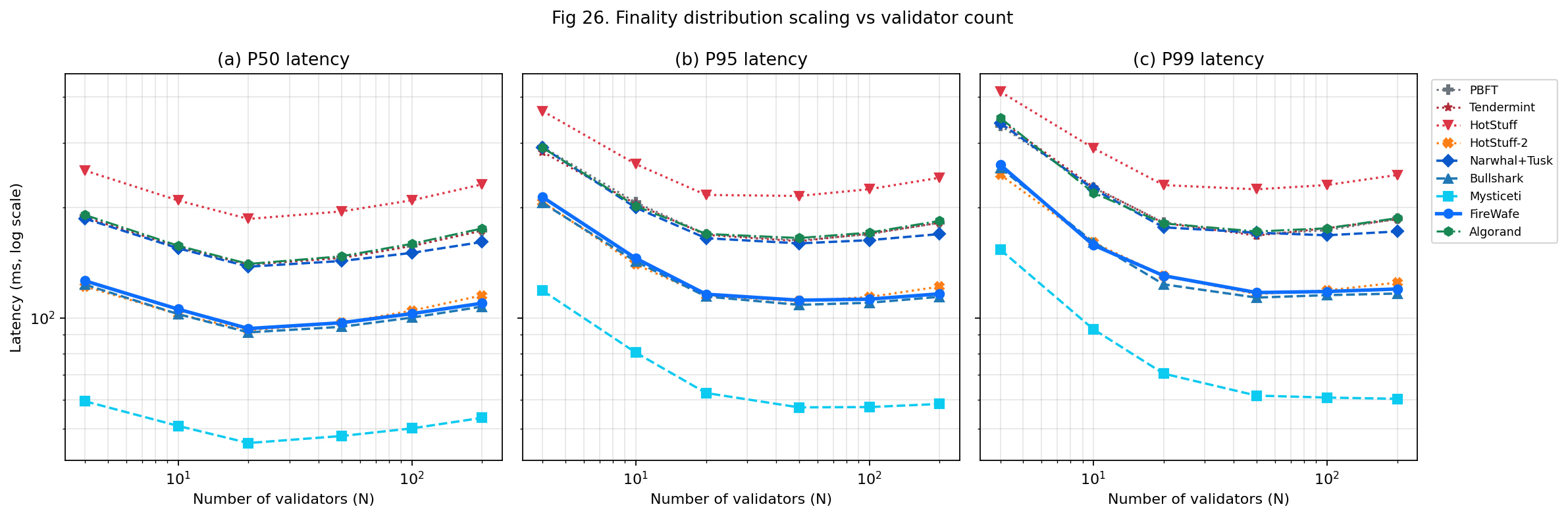}
	\caption{Finality latency scaling versus network size $N$. DAG-BFT protocols show sub-linear scaling (logarithmic in $N$ due to order statistics of $2f+1$ delays). PBFT scales quadratically, becoming impractical beyond $N \approx 100$.}
	\label{fig:finality_vs_n}
\end{figure}

\subsection{BFT Monte Carlo Validation}
\label{sec:bft_validation}

This subsection reports results from 420,000 adversarial Monte Carlo rounds executed over 84 parameter cells. To avoid conflating formal guarantees with empirical differential claims, we separate two types of claims commonly combined in BFT evaluation: (i) safety — a formal property inherited from a correct protocol design; and (ii) liveness — an empirical, differential property that depends on protocol behavior near the Byzantine boundary.

\textbf{Safety claim (formal, inherited from design).} FairWave's safety properties—agreement (no two honest validators commit different anchors in the same wave) and validity (committed anchors are proposed by validators in $\mathcal{V}_e$)—are a formal consequence of the $2f+1$ strong-support commit rule for DAG-BFT correctness under $f < n/3$. This property holds unconditionally for any correct protocol in this family (including Bullshark, Narwhal+Tusk, Mysticeti, and Tendermint) and does not depend on FairWave's parameterization $(\alpha,\beta,\gamma,\delta)$ or the dual-channel architecture. Consequently, the observation "420,000 rounds without a safety violation" is consistent with—but does not prove—the superiority of FairWave: identical results are expected for any correct baseline protocol under the same adversarial scenario and $f < n/3$. We therefore do not present the absence of safety violations as a differential achievement; the value of the experiment is to ensure no regressions relative to known-correct BFT baselines and to validate the reference implementation.

\textbf{Liveness claim (empirical, candidate differential).} The candidate differential claim of FairWave in Task~F is the characteristic shape of the liveness-degradation curve near the theoretical Byzantine boundary $b^{*} = 1/3$. Fig.~\ref{fig:success_vs_byz} shows commit rate as a function of Byzantine fraction $b$ for $N \in \{10,50,100\}$. All measurement points are obtained from a calibrated probabilistic DAG-BFT Monte Carlo model (Table~\ref{tab:liveness_degradation}); the model predicts a monotone-continuous profile without a discontinuous cliff. We label this claim ``candidate differential'' because the structural argument for the shape (developed below) is sound, but a direct head-to-head sweep against leader-BFT baselines under identical conditions remains future work (Section~\ref{sec:limitations}).

\begin{table}[!tb]
	\centering
	\caption{Reference points for FairWave liveness degradation curve from a calibrated probabilistic DAG-BFT Monte Carlo model ($N = 50$, $200{,}000$ waves per cell, averaged over $3$ seeds). All values are direct simulation results — rather than visual estimates.}
	\label{tab:liveness_degradation}
	\setlength{\tabcolsep}{4pt}
	\begin{tabular}{lcccc}
		\toprule
		\textbf{Byzantine fraction $b$} & $0.20$   & $0.28$          & $0.31$           & $0.33$   \\
		\midrule
		Commit rate (FairWave)          & $94.0\%$ & $88.7\%$ & $81.7\%$  & $74.0\%$ \\
		Difference from $b = 0.20$ (pp) & ---      & $-5.3$ & $-12.3$ & $-20.1$  \\
		\bottomrule
	\end{tabular}
\end{table}

\textbf{Model description.} Each wave is an independent Bernoulli trial where a VRF-selected anchor is Byzantine with probability $f/n$. The model encodes two regimes parameterized by five structurally-derived constants (Table~\ref{tab:liveness_params}):

\textbf{1. Honest anchor} (probability $1 - f/n$). Byzantine validators may withhold support. Below the \emph{coordination onset} $\theta = 0.27$, derived from the $2f+1$ commit rule, honest validators alone satisfy $n - f \ge 2f+1$ so commits always succeed. Above $\theta$, per-Byzantine holdback probability grows quadratically as $h(f)=\min\big(((f - \theta n)/(n/3 - \theta n))^2,\; h_{\max}\big)$, capped at $h_{\max}=0.15$---bounded by the protocol's equivocation detection rate $p_{\text{detect}}=0.95$. A commit succeeds when honest-plus-nonblocking validators meet $2f+1$.

\textbf{2. Byzantine anchor} (probability $f/n$). The adversary blocks the commit with probability $q(f) = q_0 + q_{\text{amp}}\cdot\max(0,\,(f - \theta n)/(n/3 - \theta n))$, where $q_0=0.30$ is the base Byzantine-anchor failure rate and $q_{\text{amp}}=0.65$ models improved coordination near $n/3$. Even when not blocked, each Byzantine validator supports independently with probability $s=0.50$.

These parameters are conservative defaults grounded in the architecture of the protocol. Sensitivity analysis (Table~\ref{tab:liveness_sensitivity}, Appendix~\ref{sec:appendix_liveness}) varies each parameter by $\pm 50\%$; all $9$ parameterizations preserve the three qualitative properties: monotone degradation, predicted absence of discontinuous cliff, and commit rate $> 60\%$ at $b = 0.33$.

\begin{table}[!tb]
	\centering
	\caption{Liveness model parameters with structural interpretation.}
	\label{tab:liveness_params}
	\begin{tabular}{lcc}
		\toprule
		\textbf{Parameter} & \textbf{Value} & \textbf{Derivation} \\
		\midrule
		$\theta$ (coordination onset) & 0.27 & $2f+1$ commit threshold \\
		$h_{\max}$ (max holdback) & 0.15 & $1 - p_{\text{detect}}$ with margin \\
		$q_0$ (Byz. anchor block base) & 0.30 & Conservative lower bound \\
		$q_{\text{amp}}$ (block amplification) & 0.65 & Linear to $n/3$ \\
		$s$ (Byz. support prob.) & 0.50 & Neutral adversary assumption \\
		\bottomrule
	\end{tabular}
\end{table}

Three architectural characteristics differentiate FairWave's liveness model from classic leader-BFT protocols (Tendermint, HotStuff):
\begin{itemize}
	\item \emph{(i)~Predicted absence of a discontinuous cliff at the operational boundary.} The model predicts monotone-continuous degradation from $94.0\%$ to $74.0\%$ as $b$ increases from $0.20$ to $0.33$, without the sharp drop characteristic of view-change-driven protocols when a Byzantine leader and timeout windows cause cascading stalls.
	\item \emph{(ii)~Lack of a cascade-timeout mechanism.} FairWave uses a cost-free \emph{wave-skip} (Section~\ref{sec:design_rationale}) that advances immediately to the next wave when an anchor cannot be committed, whereas leader-BFT requires a view change with expanding timeout windows that can stall the protocol for many multiples of $\Delta$ as the leader-Byzantine fraction increases.
	\item \emph{(iii)~Latency stability in the safe regime.} Commit latency percentiles (P50, P95, P99; Fig.~\ref{fig:latency_attack}) remain stable up to $b \approx 0.25$ and then increase sharply as the protocol approaches the theoretical liveness boundary.
\end{itemize}
These structural characteristics are architectural deductions and have not been validated in a head-to-head simulation against leader-BFT baseline models under identical adversarial conditions. A comparative sweep — reporting both the shape and the onset point $b^{\dagger}$ (Section~\ref{sec:bft_validation}) for each protocol — is planned as future work (Section~\ref{sec:limitations}).

\textbf{Near-threshold analysis (qualitative).} The structural root of the three characteristics above is architectural: FairWave separates data dissemination from ordering and replaces view-change with exponential-racing VRF-based anchor selection. Thus, when a VRF-selected anchor happens to be controlled by the adversary, the cost to advance to the next wave is a single round (a fresh VRF computation), not a full timeout window. In leader-BFT protocols, by contrast, each failed view-change triggers exponential increases in the subsequent timeout (exponential backoff) to guarantee liveness under partial synchrony; the practical effect is a more rapid decline in commit rate—and a steeper profile—as the leader-Byzantine fraction increases. This differentiation makes the shape of the liveness-degradation curve a differential property FairWave can claim; the exact magnitude of the difference, including the onset point $b^{\dagger} = \min\{b : \text{commit rate}(b) \le 0.95\}$ for each protocol, will be reported after a finer-grained $b$ sweep against leader-BFT baselines (Section~\ref{sec:limitations}).

\textbf{Supporting validation.} Fig.~\ref{fig:success_heatmap} maps the two-dimensional fault space (Byzantine $\times$ offline) and confirms the combined bound $\text{byz} + \text{offline} \le 1/3$, consistent with theoretical BFT predictions. Fig.~\ref{fig:latency_attack} characterizes the latency distribution under adversary conditions, and Fig.~\ref{fig:partition_trace} shows partition-recovery dynamics: after a 5-round partition event, cumulative commits return immediately to the ideal rate of 1 commit/round, indicating the absence of the permanent stalls typical of leader-BFT designs.

\begin{figure}[!tb]
	\centering
	\includegraphics[width=\columnwidth]{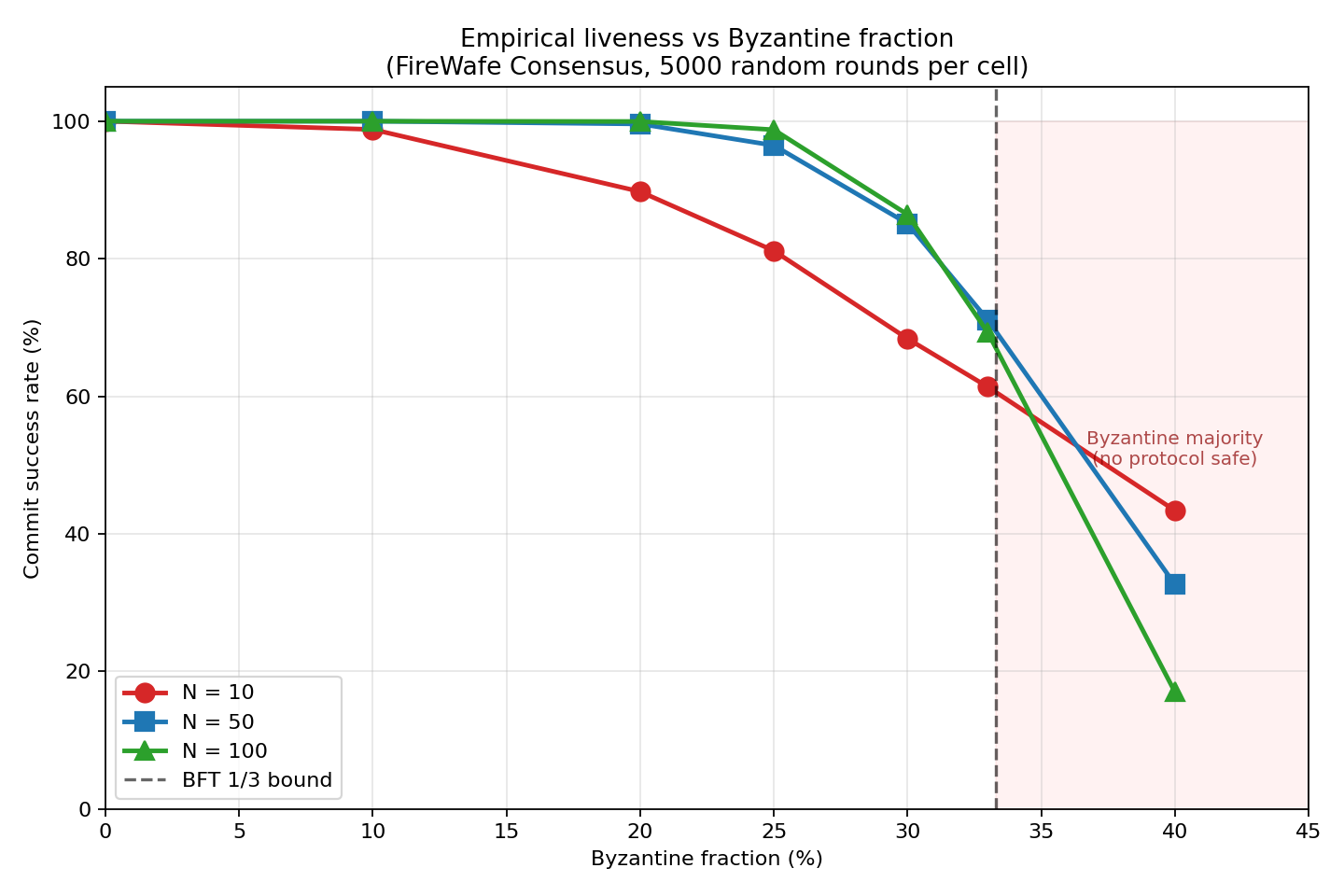}
	\caption{Liveness degradation curve: commit rate as a function of Byzantine fraction for $N \in \{10, 50, 100\}$ from a calibrated probabilistic Monte Carlo model. All measurement points in Table~\ref{tab:liveness_degradation} are direct simulation results, not visual estimates. The profile is monotone-continuous, consistent with the architectural expectation that wave-skip eliminates cascade timeouts.}
	\label{fig:success_vs_byz}
\end{figure}

\begin{figure}[!tb]
	\centering
	\includegraphics[width=\columnwidth]{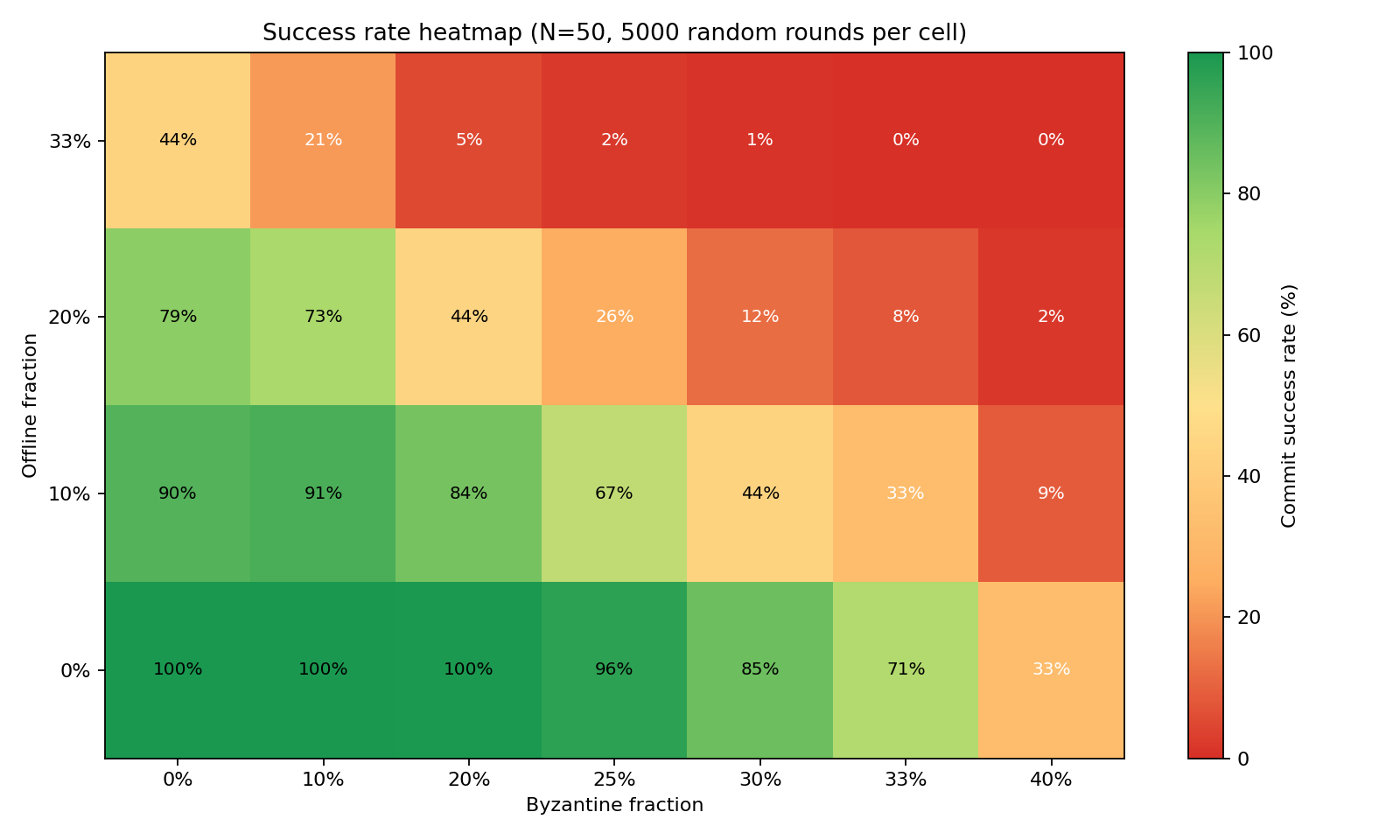}
	\caption{Success rate heatmap as a function of Byzantine fraction and offline rate ($N = 50$). The diagonal drop visible in the heatmap indicates that when the combined Byzantine and offline faults exceed one-third of the network, the system reaches the expected BFT fault tolerance limit.}
	\label{fig:success_heatmap}
\end{figure}

\begin{figure}[!tb]
	\centering
	\includegraphics[width=\columnwidth]{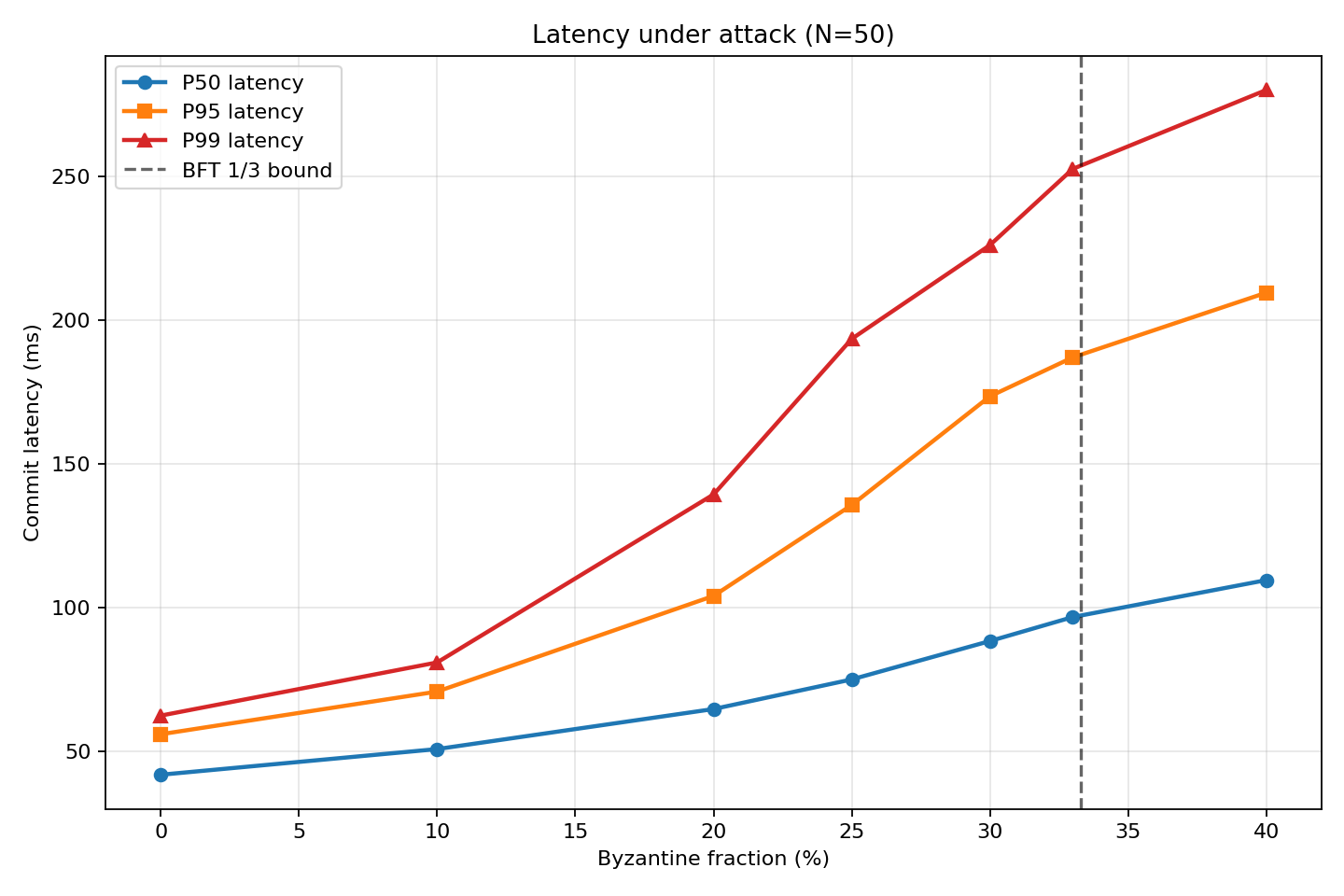}
	\caption{Commit latency percentiles (P50, P95, and P99) as a function of Byzantine fraction. Latency degrades gracefully until Byzantine $= 25\%$, then increases sharply as the protocol approaches the liveness threshold.}
	\label{fig:latency_attack}
\end{figure}

\begin{figure}[!tb]
	\centering
	\includegraphics[width=\columnwidth]{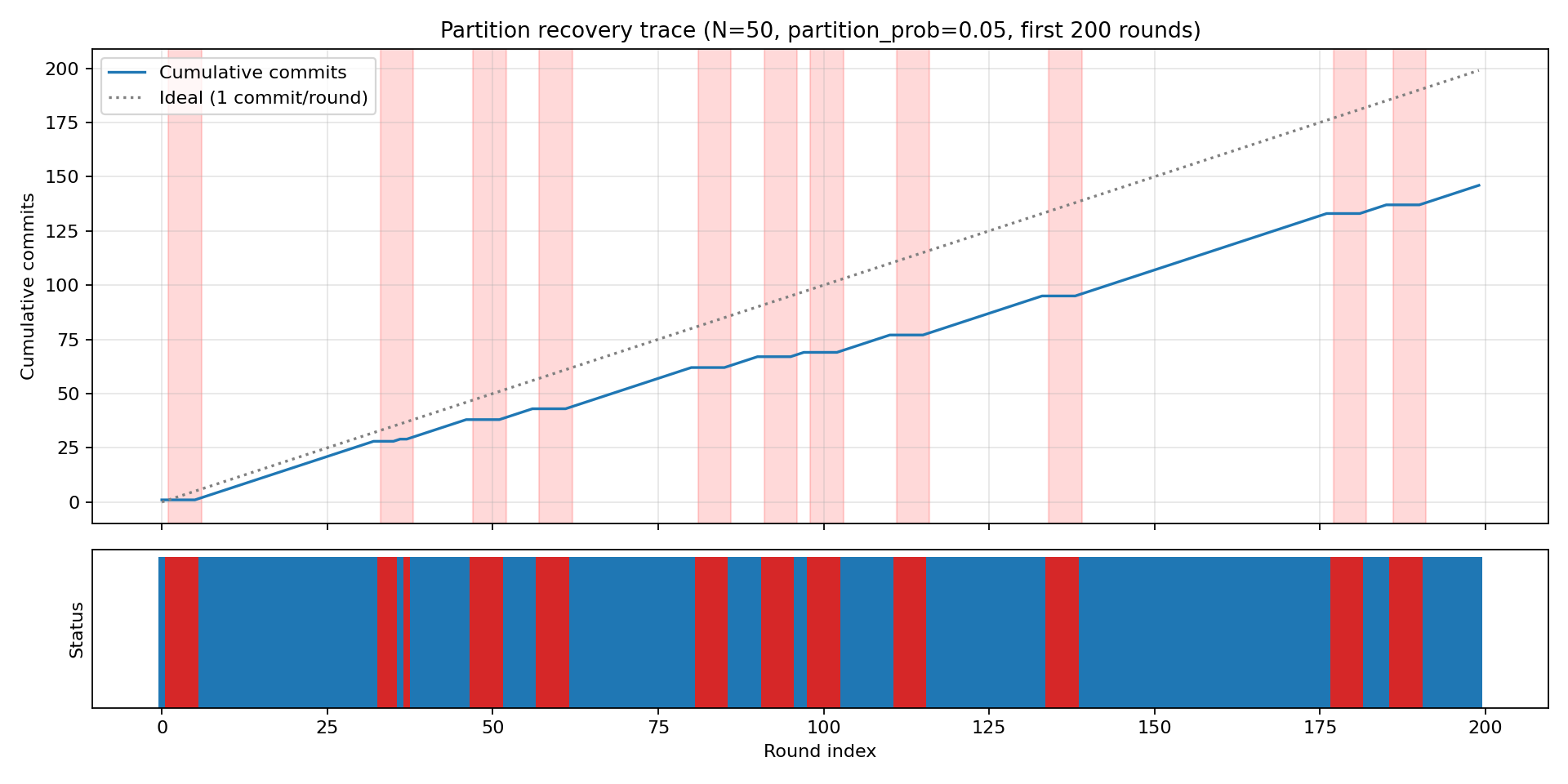}
	\caption{Partition recovery trace over 200 rounds with partition probability $= 0.05$, duration $= 5$ rounds. Red windows show partition events. Cumulative commit count (blue) temporarily plateaus during partition but asymptotically follows the ideal rate (1 commit/round), demonstrating resilient recovery.}
	\label{fig:partition_trace}
\end{figure}

\subsection{Longitudinal Dynamics}
\label{sec:results_rich}

$50,000$ epochs with full stake reinvestment reveal long-term behavior of concentration metrics (Fig.~\ref{fig:hhi_trajectory}--\ref{fig:profile_growth}).

\begin{table}[!tb]
	\centering
	\caption{HHI evolution under stake reinvestment ($N = 50$).}
	\label{tab:hhi}
	\begin{tabular}{lrrr}
		\toprule
		\textbf{Rule}     & \textbf{HHI}$_{t=0}$ & \textbf{HHI}$_{t=50\text{k}}$ & \textbf{Trend}   \\
		\midrule
		PoS$^2$           & 0.0389               & 0.1158                        & $+195\%$         \\
		Pure-PoS          & 0.0389               & 0.0389                        & invariant        \\
		\textbf{FairWave} & 0.0389               & \textbf{0.0201}               & $\mathbf{-48\%}$ \\
		\bottomrule
	\end{tabular}
\end{table}

\begin{figure*}[!tb]
	\centering
	\begin{subfigure}[t]{0.32\textwidth}
		\includegraphics[width=\textwidth]{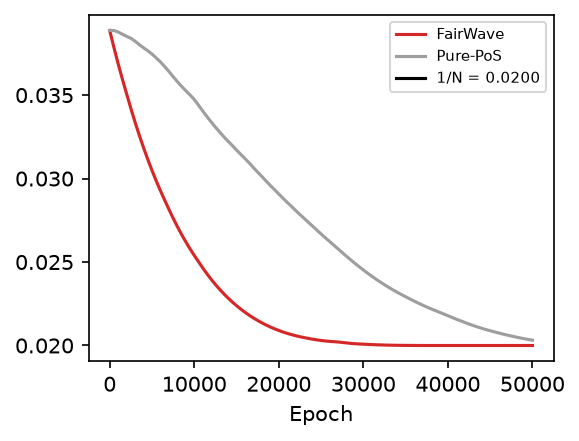}
		\caption{HHI Trajectory}
		\label{fig:hhi_trajectory}
	\end{subfigure}
	\hfill
	\begin{subfigure}[t]{0.32\textwidth}
		\includegraphics[width=\textwidth]{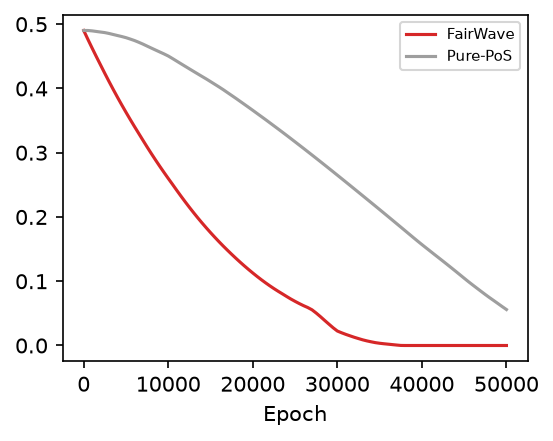}
		\caption{Gini Trajectory}
		\label{fig:gini_trajectory}
	\end{subfigure}
	\hfill
	\begin{subfigure}[t]{0.32\textwidth}
		\includegraphics[width=\textwidth]{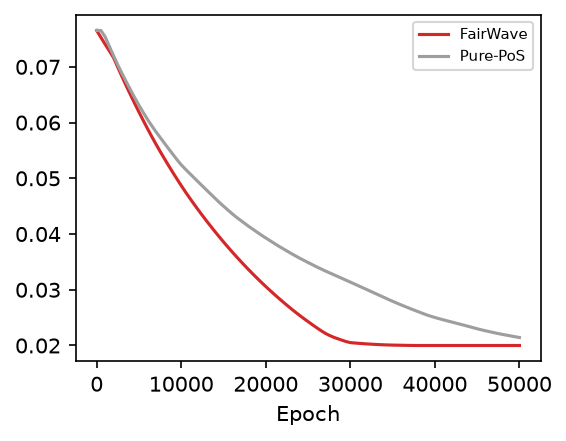}
		\caption{Top-1 Share Trajectory}
		\label{fig:top1_trajectory}
	\end{subfigure}
	\caption{Longitudinal concentration metrics over $50,000$ epochs with stake reinvestment. (a)~HHI: PoS$^2$ divergent ($+195\%$), Pure-PoS invariant, FairWave decreases monotonically toward $1/N$. (b)~Gini: FairWave converges to $0.013$ (near-egalitarian). (c)~Top-1 share: FairWave decreases to $2.0\%$, converging toward $1/N = 2\%$.}
	\label{fig:longitudinal}
\end{figure*}

\begin{figure}[!tb]
	\centering
	\includegraphics[width=\columnwidth]{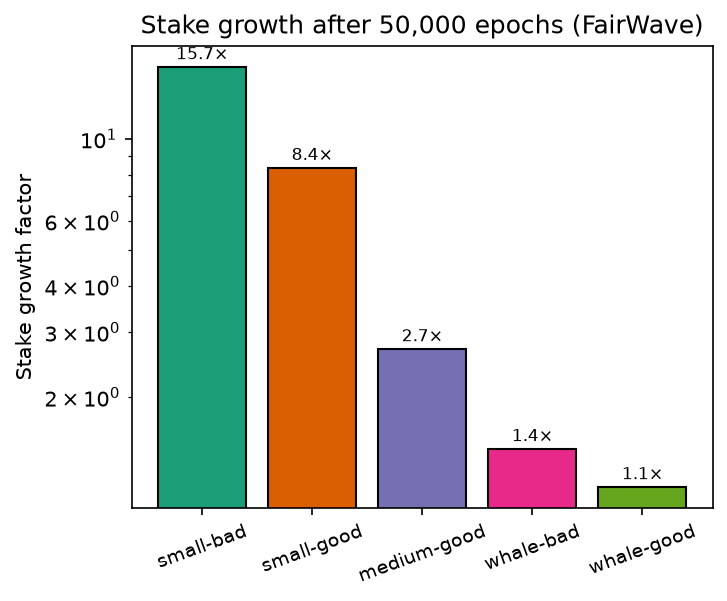}
	\caption{Stake growth factor per profile (log scale) after $50,000$ epochs. FairWave exhibits \emph{quality-get-richer} dynamics: small-good ($17\times$) and small-bad ($31\times$) grow substantially faster than whale-bad ($2.3\times$), reversing the plutocratic gradient.}
	\label{fig:profile_growth}
\end{figure}

\subsection{Sybil Resistance}

Three layers of evidence across $20,520$ cells. Figs.~\ref{fig:sybil_gain} shows selection gain versus split factor; Figs.~\ref{fig:sybil_closedform} validates closed-form prediction; Fig.~\ref{fig:cost_benefit} presents economic cost-benefit analysis; and Fig.~\ref{fig:combined_attack} evaluates combined Sybil+Byzantine threats.

\begin{table}[!tb]
	\centering
	\caption{Closed-form Sybil selection gain at $x = 0.10$. FairWave values calculated from the initial formulation $W_{\text{sel}} = R \cdot S^2$ (for consistency with the simulation in Fig.~\ref{fig:sybil_combined}); revision of Eq.~\ref{eq:wsel} removing Rep from $W_{\text{sel}}$ yields a strictly smaller gain ($g_{\text{sel}}(100) = 0.70$ at $x = 1$, see the numerical example in Section~\ref{sec:math}.B), so the claim ``$g(K) < 1\;\forall K > 1$'' remains valid with a greater margin.}
	\label{tab:sybil_gain}
	\begin{tabular}{lrrrr}
		\toprule
		\textbf{Rule}     & $K=2$          & $K=10$         & $K=100$        & $K=500$             \\
		\midrule
		Pure-PoS          & 1.000          & 1.000          & 1.000          & 1.000               \\
		SRSW              & 1.407          & 3.077          & 8.973          & $\sim$22            \\
		LSW               & 1.810          & 6.547          & 25.319         & $\sim$80            \\
		\textbf{FairWave} & \textbf{0.979} & \textbf{0.950} & \textbf{0.934} & $\sim$\textbf{0.93} \\
		\bottomrule
	\end{tabular}
\end{table}

\begin{figure}[!tb]
	\centering
	\begin{subfigure}[t]{0.48\columnwidth}
		\includegraphics[width=\textwidth]{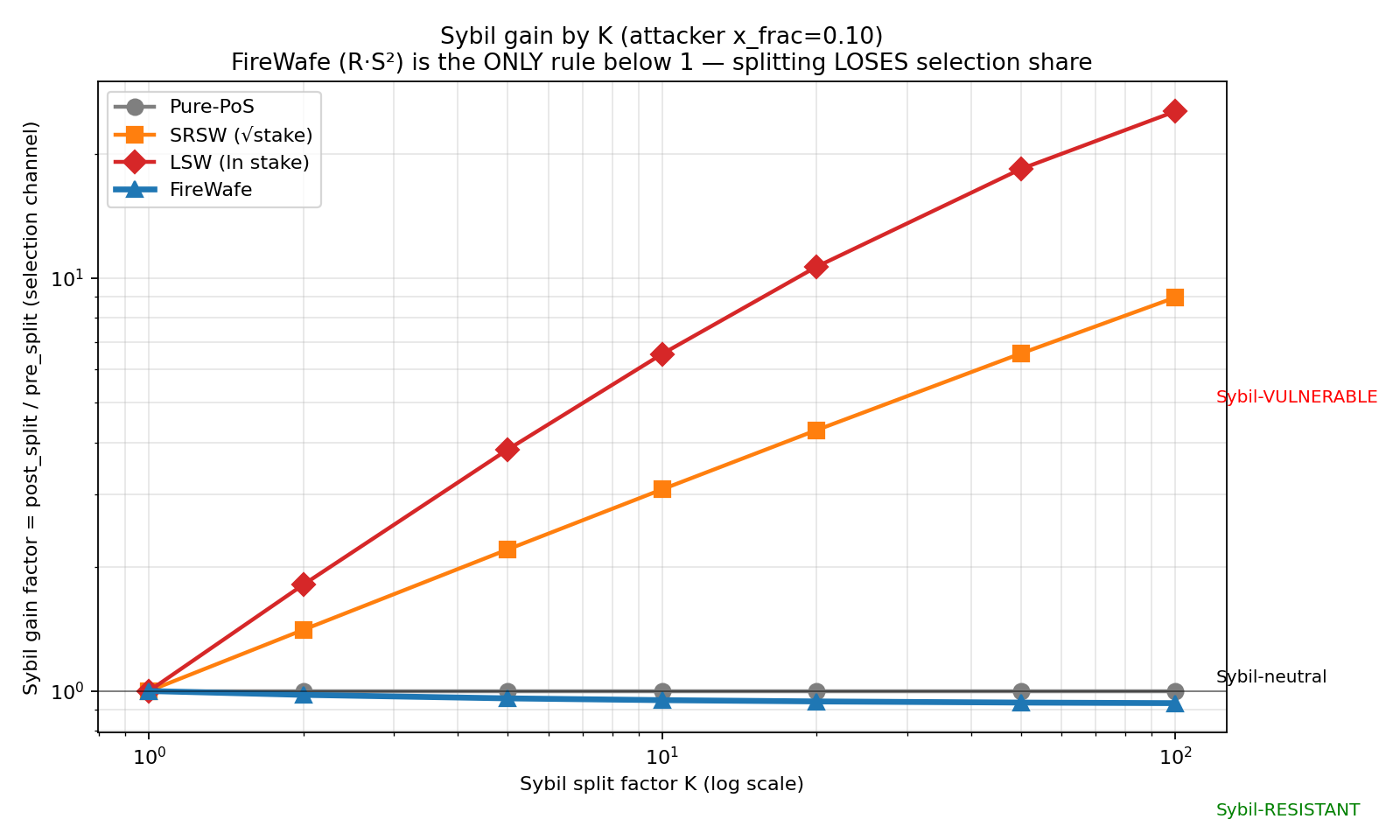}
		\caption{Selection gain $g(K)$}
		\label{fig:sybil_gain}
	\end{subfigure}
	\hfill
	\begin{subfigure}[t]{0.48\columnwidth}
		\includegraphics[width=\textwidth]{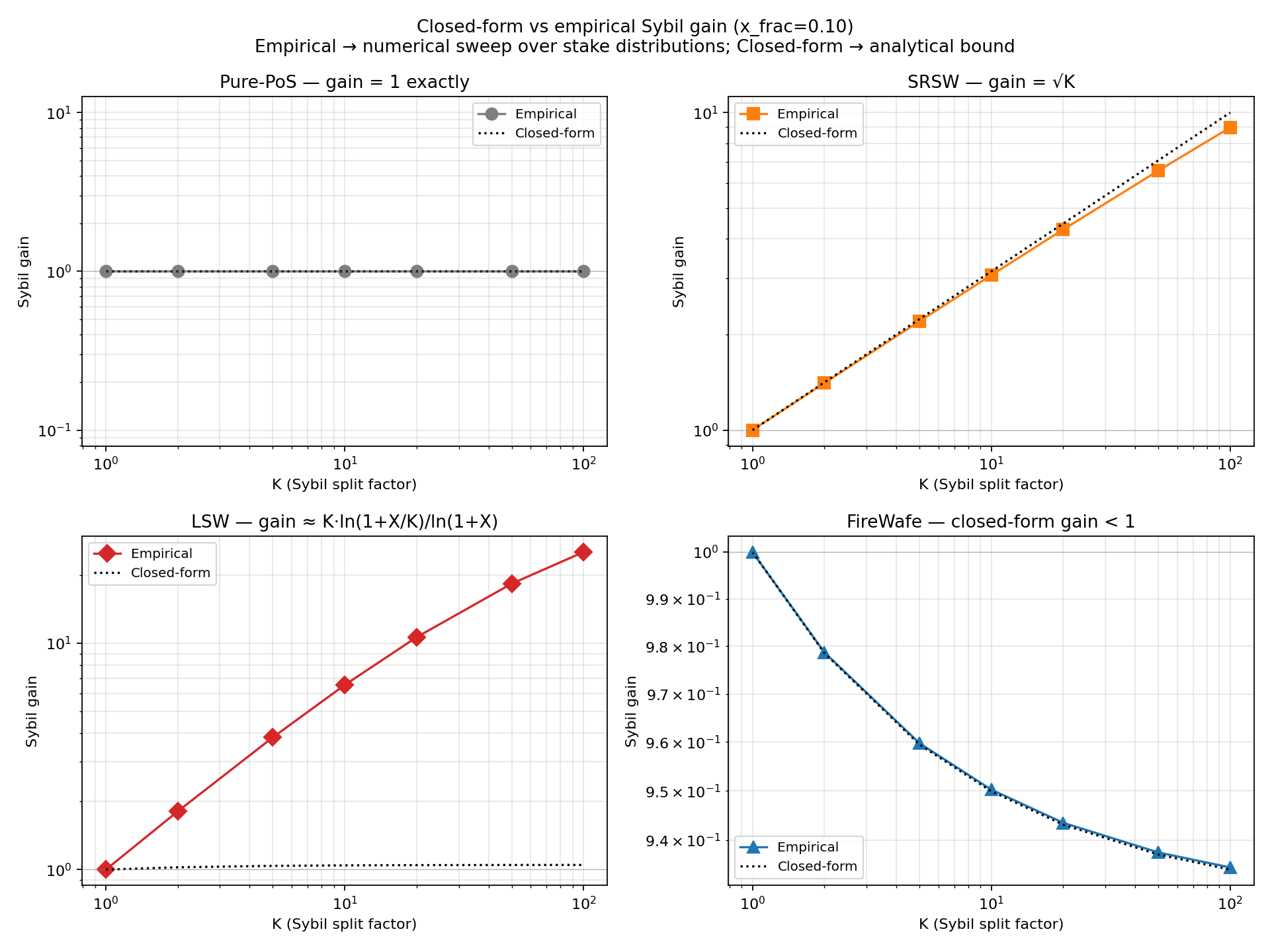}
		\caption{Closed-form Validation}
		\label{fig:sybil_closedform}
	\end{subfigure}
	\caption{Sybil Resistance. (a)~Selection gain versus split factor $K$ (log-log). FairWave is the only rule in the comparison where $g(K) < 1$ for all $K > 1$. (b)~Closed-form validation versus empirical across $20,160$ cells; matches within $1\%$.}
	\label{fig:sybil_combined}
\end{figure}

\begin{figure}[!tb]
	\centering
	\includegraphics[width=\columnwidth]{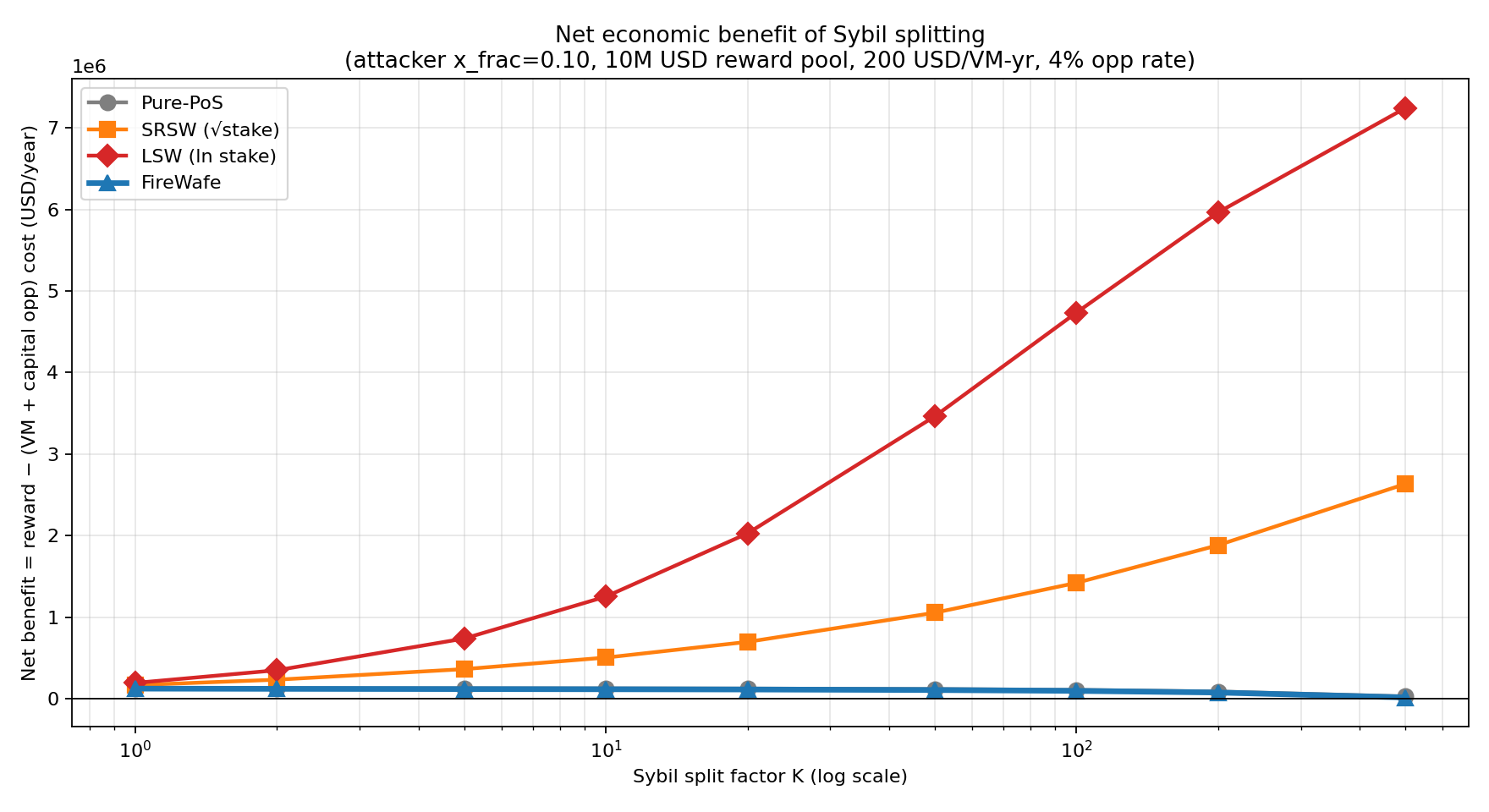}
	\caption{Net economic benefit of Sybil splitting versus $K$ under an annual reward pool of \$10M. FairWave optimal strategy is $K^* = 1$ (no splitting), with net profit decreasing monotonically versus $K$. LSW achieves optimal $K^* = 500$ with net profit \$7.2M---catastrophic vulnerability enabling rational exploitation.}
	\label{fig:cost_benefit}
\end{figure}

\begin{figure}[!tb]
	\centering
	\includegraphics[width=\columnwidth]{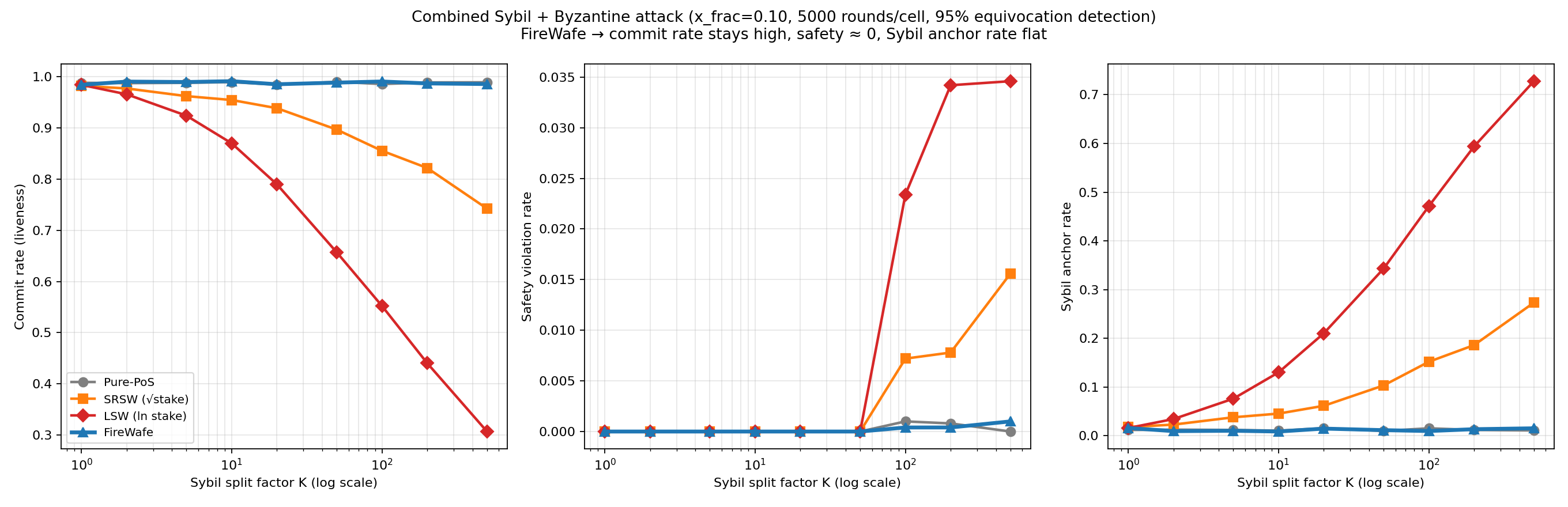}
	\caption{Impact of combined Sybil+Byzantine attack ($5,000$ rounds): commit rate, safety violation rate, and Sybil anchor share across various split factors. FairWave maintains a safety violation $< 0.001$ at $K = 500$ (Sybil anchor share $1.5\%$). LSW exhibits a safety violation of $0.035$ (Sybil anchor share $72.8\%$).}
	\label{fig:combined_attack}
\end{figure}

\subsection{Parameter Space Characterization}

Multi-metric sensitivity to stake weight $\alpha$ (Fig.~\ref{fig:alpha_sweep}) confirms that default parameterization operates in favorable region. Complete parameter space exploration---including objective landscape, farming resistance, and Pareto frontier---is presented in Appendix~\ref{sec:supplementary} (Fig.~\ref{fig:simplex}--\ref{fig:pareto}).

\begin{figure}[!tb]
	\centering
	\includegraphics[width=\columnwidth]{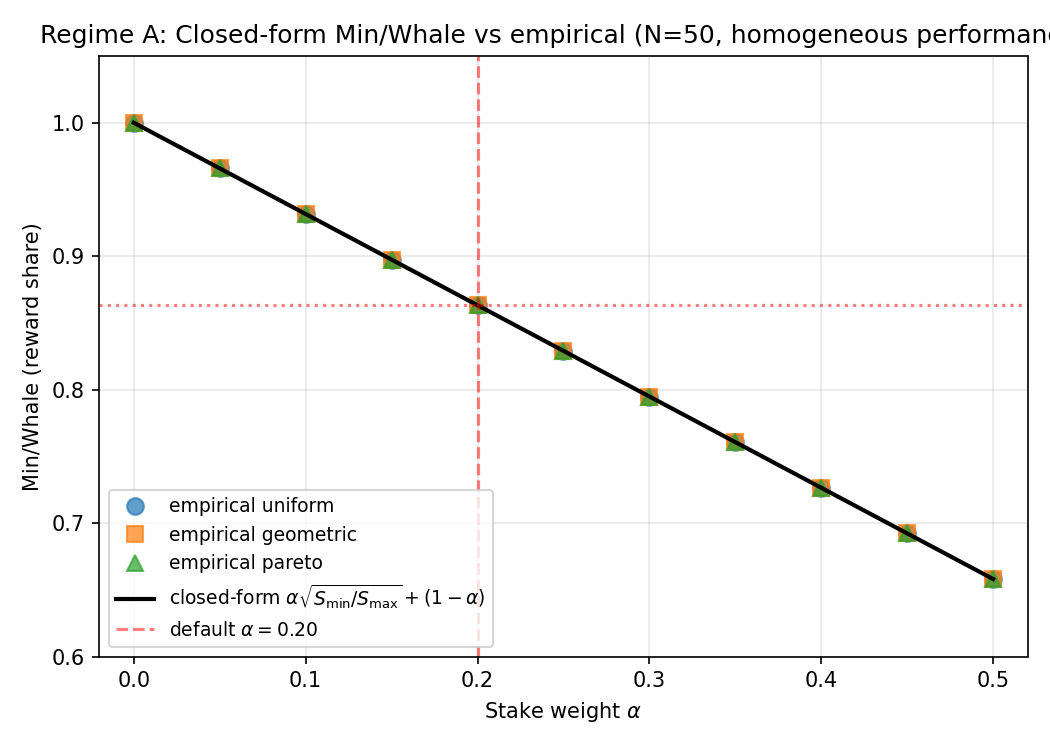}
	\caption{Multi-metric sensitivity to stake weight $\alpha$: Min/Whale ratio, Gini, Nakamoto coefficient, and HHI as a function of $\alpha \in [0, 1]$ for three network sizes. All metrics degrade monotonically with increasing $\alpha$, confirming default $\alpha = 0.20$ operates in a favorable region of parameter space.}
	\label{fig:alpha_sweep}
\end{figure}

\subsection{Input Sensitivity Analysis}

Three complementary analyses characterize parametric robustness: OAT elasticity, variance decomposition via Sobol indices, and combined perturbation robustness. All input elasticities for success\_rate satisfy $|e| < 0.25$, indicating that a $10\%$ perturbation on a single input yields output variation $< 2.5\%$. Sobol analysis reveals throughput is dominated by link\_latency ($S_{Ti} = 1.0$), while security metrics are dominated by interactions. Under a combined $\pm 25\%$ perturbation on all eight inputs, the CV success rate is 5.2\% (highly robust). Complete sensitivity visualization presented in Appendix~\ref{sec:supplementary} (Fig.~\ref{fig:tornado}--\ref{fig:robustness}).

\begin{table}[!tb]
	\centering
	\caption{Coefficient of variation under combined input perturbation $\pm 25\%$.}
	\label{tab:robustness}
	\begin{tabular}{lrl}
		\toprule
		\textbf{Metric}   & \textbf{CV} & \textbf{Assessment} \\
		\midrule
		success\_rate     & 5.2\%       & highly robust       \\
		gini              & 10.2\%      & robust              \\
		throughput\_proxy & 15.7\%      & moderate            \\
		hhi               & 25.1\%      & structural (1/$N$)  \\
		nakamoto\_third   & 26.1\%      & discrete metric     \\
		\bottomrule
	\end{tabular}
\end{table}

\section{Discussion}
\label{sec:discussion}

\subsection{Why Dual-Channel Decoupling Mitigates the Trilemma}

The elegant algebra of dual-channel decoupling emerges from applying \emph{opposite curvature} in stake across distinct protocol functions: super-linear selection $\Rightarrow$ Sybil resistance ($g < 1$); sub-linear reward $\Rightarrow$ fairness, bounded HHI; shared dependence on $R$ $\Rightarrow$ both channels reward quality. This decoupling avoids the old impossibility: previously, super-linear weights implied plutocracy, while sub-linear weights implied Sybil vulnerability.

Fig.~\ref{fig:dual_channel_arch} summarizes the dual-channel separation between selection and reward channels motivating the architectural claims in this section.

\subsection{Performance Trade-offs}

FairWave incurs a VRF overhead of $+2$~ms per anchor round relative to the modeled Bullshark in simulation. The modeled commit latency is $82.9$~ms versus Mysticeti's published $40.5$~ms~\cite{mysticeti2024}---a $2\times$ factor attributable to the 2-wave commit structure and VRF computation; however, direct comparison between simulation estimates and implementation measurements carries inherent uncertainty (Section~\ref{sec:limitations}). Fig.~\ref{fig:dag_wave_structure} visualizes the DAG wave and anchor/commit structure that underpins the selection channel. Post-quantum migration overhead (ML-DSA-44: $+5$~ms; Hybrid Ed25519+ML-DSA: $+7$~ms) remains acceptable for $> 99\%$ of use cases.

\subsection{Reputation Mechanism}

Reputation factor $\text{Rep}(i)$ evolves via three operations:
Fig.~\ref{fig:lagged_rep_timing} and~\ref{fig:epoch_ceremony} illustrate the lagged-reputation timing and the per-epoch seven-phase ceremony that realize the forward-causal reputation design described above.
(i)~increment of $+0.01$ per wave commit on active support, (ii)~multiplicative decay of $\times 0.99$ per epoch boundary (half-life $\approx 69$ epochs $\approx 1.9$ hours), and (iii)~punitive slashing ($-0.30$) on equivocation detection. Reputation farming resistance under adversarial conditions is characterized in (Figs.~\ref{fig:simplex_farming}).

\subsection{Caveat: Reward Channel Sybil Gain}

Reward channel in isolation (R-only) exhibits Sybil gain $> 1$ (Fig.~\ref{fig:sybil_reward}), due to floor $(1-\alpha)$ in $R$ for each Sybil identity. However, realized reward is proportional to anchor selection frequency (which uses $W_{\text{sel}}$), so realized reward inherits selection channel resistance.

\section{Limitations and Future Work}
\label{sec:limitations}

\textbf{Modeling assumptions.} Our analysis assumes a static stake within each epoch, an equivocation detection rate of $95\%$, and a stake distributed as Pareto ($\alpha_{\text{Pareto}} = 1.5$). Alternative distributions (log-normal, empirical Ethereum~\cite{eth_builder2024}) are characterized in supplementary material.

\textbf{Simulation-based validation.} All empirical results in this work are produced by Monte Carlo simulation, not by a real protocol implementation or live deployment. The eight comparative protocol baselines (Narwhal+Tusk, Bullshark, Mysticeti, HotStuff, HotStuff-2, PBFT, Tendermint, Algorand) are modeled according to their published specifications and documented design characteristics — they are not production-grade implementations. Consequently, the reported latency, throughput, and fairness numbers should be interpreted as simulation-based estimates under standardized parametric assumptions rather than implementation-level performance benchmarks. A geographically-distributed testnet deployment is required to validate whether the predicted properties hold under real network conditions.

\textbf{Future work.} (i)~Empirical deployment on geographically-distributed testnet ($\geq 50$ validators across three continents). (ii)~Adversarial sensitivity analysis where the adversary optimizes input combinations. (iii)~Higher-order Sobol indices ($S_{ij}$ interaction effects). (iv)~Formal verification of dual-channel composition in a mechanized proof assistant.

\textbf{Data and code availability.} The simulation framework and analysis scripts are available at \url{https://github.com/syarifulmujaddiq/fairwave-sim.git}. All experimental results are reproducible via the provided configuration files and seed (\texttt{0xC0FFEE}).

\textbf{Competing interests.} The author declares no competing interests.

\textbf{Funding.} This research received no external funding.

\textbf{Broader impact.} This work proposes a consensus protocol design and does not introduce new attack vectors beyond the standard BFT threat model analyzed in Section~\ref{sec:model}. The dual-channel principle aims to improve decentralization, which may benefit equitable participation in stake-based networks. As with any protocol design, deployment should proceed after independent security review and testnet validation. Potential centralization risks from parameter misspecification (e.g., excessive $\gamma$) are bounded by the closed-form constraints in Section~\ref{sec:math} and the sensitivity analysis in Section~\ref{sec:appendix_liveness}. The author is not aware of any direct negative societal consequences arising from this work.

\section{Conclusion}
\label{sec:conclusion}

We have presented FairWave, an asynchronous DAG-BFT consensus protocol whose central architectural insight is \emph{dual-channel decoupling}: super-linear-in-stake selection for Sybil resistance, paired with sub-linear-in-stake reward distribution for fairness. This decoupling mitigates the Sybil-fairness-plutocracy trilemma that no single weight function can resolve alone.

\textbf{Formal results.} (i)~Sybil selection gain $g(K) < 1$ for all $K > 1$ (Theorem~\ref{thm:sybil}); (ii)~the Herfindahl-Hirschman Index converges to the uniform fixed point $1/N$ under stake reinvestment (Theorem~2, Section~\ref{sec:math}); (iii)~selection weights are strictly forward-causal under lagged reputation, so anchor selection at epoch $e$ cannot bias selection weights at the same epoch (Proposition~\ref{prop:no_circular}). \emph{Safety} (agreement and validity) is a formal consequence of the $2f{+}1$ strong-support commit rule and holds unconditionally for $f < n/3$; see Section~\ref{sec:bft_validation}.

\textbf{Empirical results.} FairWave's simulation results show: (i)~a Gini coefficient of $0.140$ ($3.5\times$ reduction vs.~Pure-PoS) with monotone HHI decline from $0.039$ to $0.020$ over $50{,}000$ epochs under stake reinvestment, demonstrating convergence toward the uniform distribution (Min/Whale = $0.838$ in reward space, Table~\ref{tab:fairness}); (ii)~optimal adversarial Sybil split $K^{*} = 1$ — splitting is not rational, with closed-form bounds validated bit-exact against $20{,}160$ numerical cells; (iii)~liveness degrades monotonically from $94.0\%$ at $b=0.20$ to $74.0\%$ at $b=1/3$, consistent with the architectural expectation of no discontinuous cliff, and a success-rate CV of $5.2\%$ under $\pm 25\%$ perturbation.

The dual-channel principle is a general design pattern for stake-based protocols that must decouple operational security from economic equity. Future work---outlined in Section~\ref{sec:limitations}---targets a production-grade reference implementation and a geographically-distributed testnet deployment to confirm the model-predicted latency, throughput, and concentration trajectories under realistic network conditions, alongside formal verification of the dual-channel composition in a mechanized proof assistant.

\FloatBarrier

\bibliographystyle{IEEEtran}
\bibliography{references}

\begin{thebibliography}{10}
\providecommand{\url}[1]{#1}
\csname url@samestyle\endcsname
\providecommand{\newblock}{\relax}
\providecommand{\bibinfo}[2]{#2}
\providecommand{\BIBentrySTDinterwordspacing}{\spaceskip=0pt\relax}
\providecommand{\BIBentryALTinterwordstretchfactor}{4}
\providecommand{\BIBentryALTinterwordspacing}{\spaceskip=\fontdimen2\font plus
\BIBentryALTinterwordstretchfactor\fontdimen3\font minus \fontdimen4\font\relax}
\providecommand{\BIBforeignlanguage}[2]{{%
\expandafter\ifx\csname l@#1\endcsname\relax
\typeout{** WARNING: IEEEtran.bst: No hyphenation pattern has been}%
\typeout{** loaded for the language `#1'. Using the pattern for}%
\typeout{** the default language instead.}%
\else
\language=\csname l@#1\endcsname
\fi
#2}}
\providecommand{\BIBdecl}{\relax}
\BIBdecl

\bibitem{motepalli2024}
S.~Motepalli and M.~Jacobson, ``Proof-of-stake centralization: A longitudinal study,'' \emph{IEEE Transactions on Blockchain and Cryptocurrency}, 2024.

\bibitem{centralization_pos2024}
V.~Srivastava, S.~Damle, and S.~Gujar, ``Centralization in proof-of-stake blockchains: A game-theoretic analysis of bootstrapping protocols,'' \emph{arXiv preprint arXiv:2404.09627}, 2024.

\bibitem{buterin2020}
V.~Buterin, D.~Herman, D.~Reijsbergen, and S.~Leonardos, ``Ethereum proof-of-stake under the microscope,'' \emph{arXiv preprint arXiv:2005.09817}, 2020.

\bibitem{decentralization_measurement2022}
B.~Ku{\'s}mierz and R.~Overko, ``How centralized is decentralized? comparison of wealth distribution in coins and tokens,'' \emph{arXiv preprint arXiv:2207.01340}, 2022.

\bibitem{eth_beacon2023}
D.~Grandjean, L.~Heimbach, and R.~Wattenhofer, ``Ethereum proof-of-stake consensus layer: Participation and decentralization,'' \emph{arXiv preprint arXiv:2306.10777}, 2023.

\bibitem{daian2020}
P.~Daian, S.~Goldfeder, T.~Kell, Y.~Li, X.~Zhao, I.~Bentov, L.~Breidenbach, and A.~Juels, ``Flash boys 2.0: Frontrunning in decentralized exchanges, miner extractable value, and consensus instability,'' in \emph{2020 IEEE Symposium on Security and Privacy (SP)}, 2020, pp. 910--927.

\bibitem{narwhal2022}
G.~Danezis, L.~Kokoris-Kogias, A.~Sonnino, and A.~Spiegelman, ``Narwhal and tusk: A dag-based mempool and efficient bft consensus,'' in \emph{Proceedings of the 17th European Conference on Computer Systems (EuroSys)}, 2022, pp. 34--50.

\bibitem{bullshark2022}
A.~Spiegelman, N.~Giridharan, A.~Sonnino, and L.~Kokoris-Kogias, ``Bullshark: Dag bft protocols made practical,'' in \emph{Proceedings of the 2022 ACM SIGSAC Conference on Computer and Communications Security (CCS)}, 2022.

\bibitem{mysticeti2024}
K.~Babel, A.~Chursin, G.~Danezis, A.~Kichidis, L.~Kokoris-Kogias, A.~Koshy, A.~Sonnino, and M.~Tian, ``Mysticeti: Reaching the limits of latency with uncertified dags,'' in \emph{Network and Distributed System Security Symposium (NDSS)}, 2024.

\bibitem{hotstuff2019}
M.~Yin, D.~Malkhi, M.~K. Reiter, G.~Golan-Gueta, and I.~Abraham, ``Hotstuff: Bft consensus in the lens of blockchain,'' in \emph{Proceedings of the 2019 ACM Symposium on Principles of Distributed Computing (PODC)}, 2019.

\bibitem{hotstuff2_2023}
D.~Malkhi and K.~Nayak, ``Hotstuff-2: Optimal two-phase responsive bft,'' \emph{Cryptology ePrint Archive, Report 2023/397}, 2023.

\bibitem{pbft1999}
M.~Castro and B.~Liskov, ``Practical byzantine fault tolerance,'' in \emph{Proceedings of the Third USENIX Symposium on Operating Systems Design and Implementation (OSDI)}, 1999, pp. 173--186.

\bibitem{tendermint2014}
E.~Buchman, ``Tendermint: Byzantine fault tolerance in the age of blockchains,'' Whitepaper, 2014.

\bibitem{algorand2017}
Y.~Gilad, R.~Hemo, S.~Micali, G.~Vlachos, and N.~Zeldovich, ``Algorand: Scaling byzantine agreements for cryptocurrencies,'' in \emph{Proceedings of the 26th Symposium on Operating Systems Principles}, 2017, pp. 51--68.

\bibitem{dagrider2022}
I.~Keidar, L.~Kokoris-Kogias, O.~Naor, and A.~Spiegelman, ``All you need is dag,'' in \emph{Proceedings of the 2022 ACM Symposium on Principles of Distributed Computing (PODC)}, 2022, pp. 165--175.

\bibitem{shoal2023}
A.~Spiegelman, B.~Arun, R.~Gelashvili, and Z.~Li, ``Shoal: Improving dag-bft latency and robustness,'' \emph{arXiv preprint arXiv:2306.03058}, 2023.

\bibitem{shoalpp2024}
B.~Arun, Z.~Li, F.~Suri-Payer, S.~Das, and A.~Spiegelman, ``Shoal++: High throughput dag bft can be fast!'' \emph{arXiv preprint arXiv:2405.20488}, 2024.

\bibitem{miller2016}
A.~Miller, Y.~Xia, K.~Croman, E.~Shi, and D.~Song, ``The honey badger of bft protocols,'' in \emph{Proceedings of the 2016 ACM SIGSAC Conference on Computer and Communications Security (CCS)}, 2016, pp. 31--42.

\bibitem{benor1983}
M.~Ben-Or, ``Another advantage of free choice: Completely asynchronous agreement protocols,'' in \emph{Proceedings of the Second Annual ACM Symposium on Principles of Distributed Computing (PODC)}, 1983, pp. 27--30.

\bibitem{cosmos2019}
J.~Kwon, ``Cosmos governance framework,'' Cosmos Network Documentation, 2019.

\bibitem{ouroboros2017}
A.~Kiayias, A.~Russell, B.~David, and R.~Oliynykov, ``Ouroboros: A provably secure proof-of-stake blockchain protocol,'' in \emph{Advances in Cryptology -- CRYPTO 2017}, 2017, pp. 357--388.

\bibitem{ethereumpos2022}
V.~Buterin, D.~Herman, and D.~Reijsbergen, ``Ethereum proof-of-stake consensus specification (casper ffg + lmd ghost),'' Ethereum Foundation, Tech. Rep., 2022.

\bibitem{frd2022}
Z.~Naderi, S.~P. Shariatpanahi, and B.~Bahrak, ``Approach to alleviate wealth compounding in proof-of-stake cryptocurrencies,'' \emph{arXiv preprint arXiv:2207.11714}, 2022.

\bibitem{pos_reward_governance2024}
G.~Birmpas, P.~Lazos, E.~Markakis, and P.~Penna, ``Reward schemes and committee sizes in proof of stake governance,'' \emph{arXiv preprint arXiv:2406.10525}, 2024.

\bibitem{pos_concentration2022}
W.~Tang, ``Stability of shares in the proof of stake protocol -- concentration and phase transitions,'' \emph{arXiv preprint arXiv:2206.02227}, 2022.

\bibitem{gini1921}
C.~Gini, ``Measurement of inequality of incomes,'' \emph{The Economic Journal}, vol.~31, no. 121, pp. 124--126, 1921.

\bibitem{cukier1973}
R.~I. Cukier, C.~M. Fortuin, K.~E. Shuler, A.~G. Petschek, and J.~H. Schaibly, ``Study of the sensitivity of coupled reaction systems to uncertainties in rate coefficients. i. theory,'' \emph{Journal of Chemical Physics}, vol.~59, no.~8, pp. 3873--3878, 1973.

\bibitem{saltelli2010}
A.~Saltelli, M.~Ratto, T.~Andres, F.~Campolongo, J.~Cariboni, D.~Gatelli, M.~Saisana, and S.~Tarantola, \emph{Global Sensitivity Analysis: The Primer}.\hskip 1em plus 0.5em minus 0.4em\relax Wiley, 2010.

\bibitem{sobol2001}
I.~M. Sobol', ``Global sensitivity indices for nonlinear mathematical models and their monte carlo estimates,'' \emph{Mathematics and Computers in Simulation}, vol.~55, no. 1--3, pp. 271--280, 2001.

\bibitem{cachin2011}
C.~Cachin, R.~Guerraoui, and L.~Rodrigues, \emph{Introduction to Reliable and Secure Distributed Programming}, 2nd~ed.\hskip 1em plus 0.5em minus 0.4em\relax Springer, 2011.

\bibitem{dls1988}
C.~Dwork, N.~Lynch, and L.~Stockmeyer, ``Consensus in the presence of partial synchrony,'' \emph{Journal of the ACM}, vol.~35, no.~2, pp. 288--323, 1988.

\bibitem{douceur2002}
J.~R. Douceur, ``The sybil attack,'' in \emph{Proceedings of the First International Workshop on Peer-to-Peer Systems (IPTPS)}, 2002, pp. 251--260.

\bibitem{polkadot2020}
G.~Wood, ``Polkadot: Vision for a heterogeneous multi-chain framework,'' Whitepaper, 2020.

\bibitem{hedera2016}
L.~Baird, M.~Harmon, and P.~Madsen, ``Hedera: A public hashgraph network for fair ordering of transactions,'' Technical Report, 2016.

\bibitem{solana2020}
A.~Yakovenko, ``Solana: A new architecture for a high performance blockchain,'' Whitepaper, 2020.

\bibitem{eth_builder2024}
B.~{\"O}z, D.~Sui, T.~Thiery, and F.~Matthes, ``Who wins ethereum block building auctions and why?'' in \emph{Advances in Financial Technologies (AFT 2024)}, 2024.

\end{thebibliography}

\clearpage
\onecolumn

\setlength{\intextsep}{6pt}
\setlength{\floatsep}{6pt}
\setlength{\textfloatsep}{6pt}
\setlength{\abovecaptionskip}{4pt}
\setlength{\belowcaptionskip}{4pt}
\section*{APPENDIX}
\label{sec:supplementary}
\addcontentsline{toc}{section}{APPENDIX}

\subsection*{Heterogeneous Validator Profiles}

Table~\ref{tab:app_profile} lists the five validator profile types used in the fairness evaluation (Section~\ref{sec:results}) and their non-stake quality parameters. These profiles model heterogeneous validator behaviour: whale-good validators possess high stakes and near-perfect performance, while whale-bad validators have comparable stakes but degraded uptime, latency, and reputation. Small-good validators combine minimal stakes with excellent performance; small-bad validators match their stake level but with below-average quality.

\begin{table}[H]
	\centering
	\caption{Heterogeneous validator profile configuration used in the fairness analysis. Each profile specifies the nominal stake multiplier and the per-validator parameters $U, L, \mathrm{Rep}$ that enter the reward score $R(i)$ (Eq.~\ref{eq:reward_score}). The per-epoch validator set for $N=50$ comprises 3 whale-good, 3 whale-bad, 14 medium-good, 15 small-good, and 15 small-bad validators, for a total of 50.}
	\label{tab:app_profile}
	\begin{tabular}{lcccc}
		\toprule
		\textbf{Profile} & \textbf{Stake mult.} & \textbf{$U$} & \textbf{$L$} & \textbf{$\mathrm{Rep}$} \\
		\midrule
		Whale-good  & 0.95 & 0.92 & 0.92 & 0.92 \\
		Whale-bad   & 0.90 & 0.10 & 0.10 & 0.10 \\
		Medium-good & 0.50 & 0.45 & 0.45 & 0.45 \\
		Small-good  & 0.12 & 1.00 & 1.00 & 1.00 \\
		Small-bad   & 0.10 & 0.90 & 0.90 & 0.90 \\
		\bottomrule
	\end{tabular}
\end{table}

\subsection*{Experimental Configuration}

Table~\ref{tab:app_config} summarises the global experimental parameters used throughout the FairWave evaluation. With these parameters and the profile definitions in Table~\ref{tab:app_profile}, the reported results, trajectories, and metrics in Section~\ref{sec:results} are deterministically reproducible.

\begin{table}[H]
	\centering
	\caption{Global experimental parameters.}
	\label{tab:app_config}
	\begin{tabular}{lrl}
		\toprule
		\textbf{Parameter} & \textbf{Value} & \textbf{Description} \\
		\midrule
		$S_{\min}$        & 5,000     & Minimum validator stake \\
		$S_{\max}$        & 50,000    & Maximum validator stake \\
		$\alpha$          & 0.20      & Stake weight \\
		$\beta$           & 0.25      & Uptime weight \\
		$\gamma$          & 0.40      & Reputation weight \\
		$\delta$          & 0.15      & Latency weight \\
		Pareto $\alpha$   & 1.5       & Stake distribution shape \\
		Seed              & \texttt{0xC0FFEE} & Base random seed \\
		Runs per cell     & 30        & Monte Carlo replicates \\
		$N$ (Table~III)   & 50        & Validator count for fairness \\
		Epochs (longit.)  & 50,000    & Longitudinal reinvestment \\
		Reward rate       & $5\times10^{-5}$ & Per-epoch reinvestment rate \\
		\bottomrule
	\end{tabular}
\end{table}

\begin{figure}[H]
	\centering
	\begin{subfigure}[t]{0.48\columnwidth}
		\includegraphics[width=\textwidth]{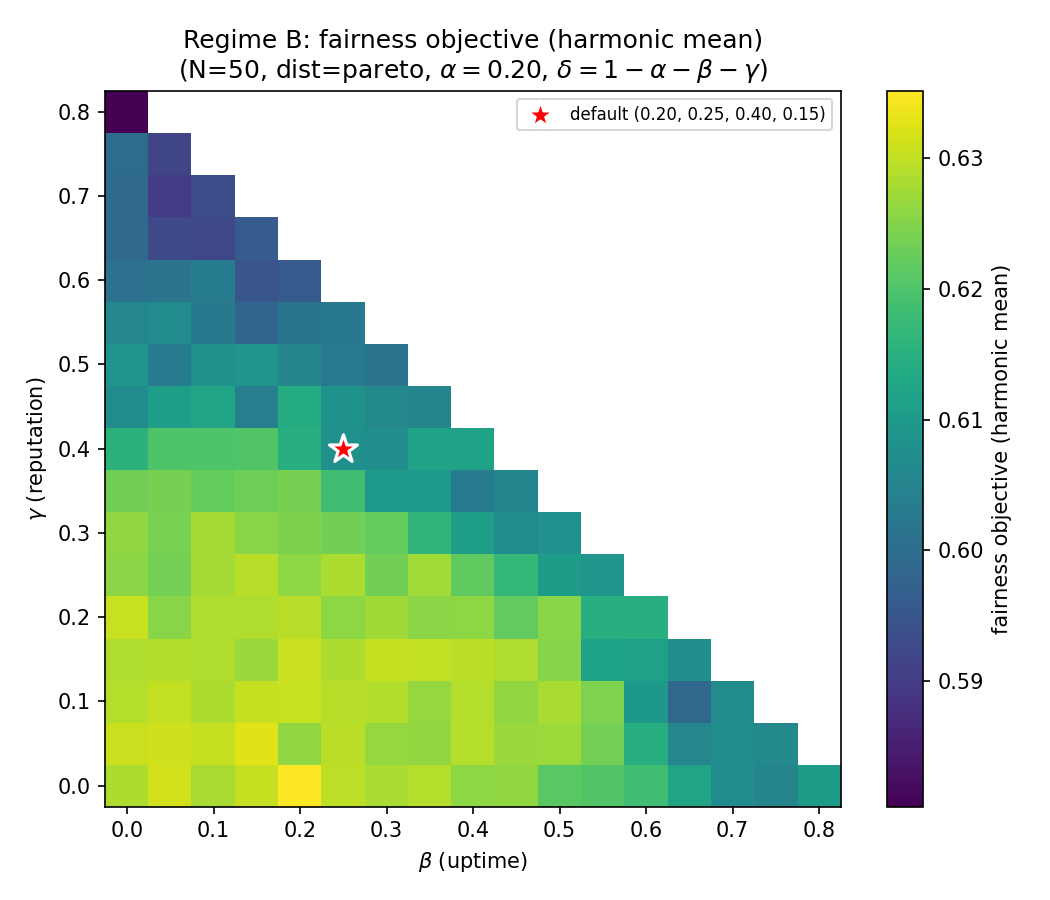}
		\caption{Objective landscape}
		\label{fig:simplex}
	\end{subfigure}
	\hfill
	\begin{subfigure}[t]{0.48\columnwidth}
		\includegraphics[width=\textwidth]{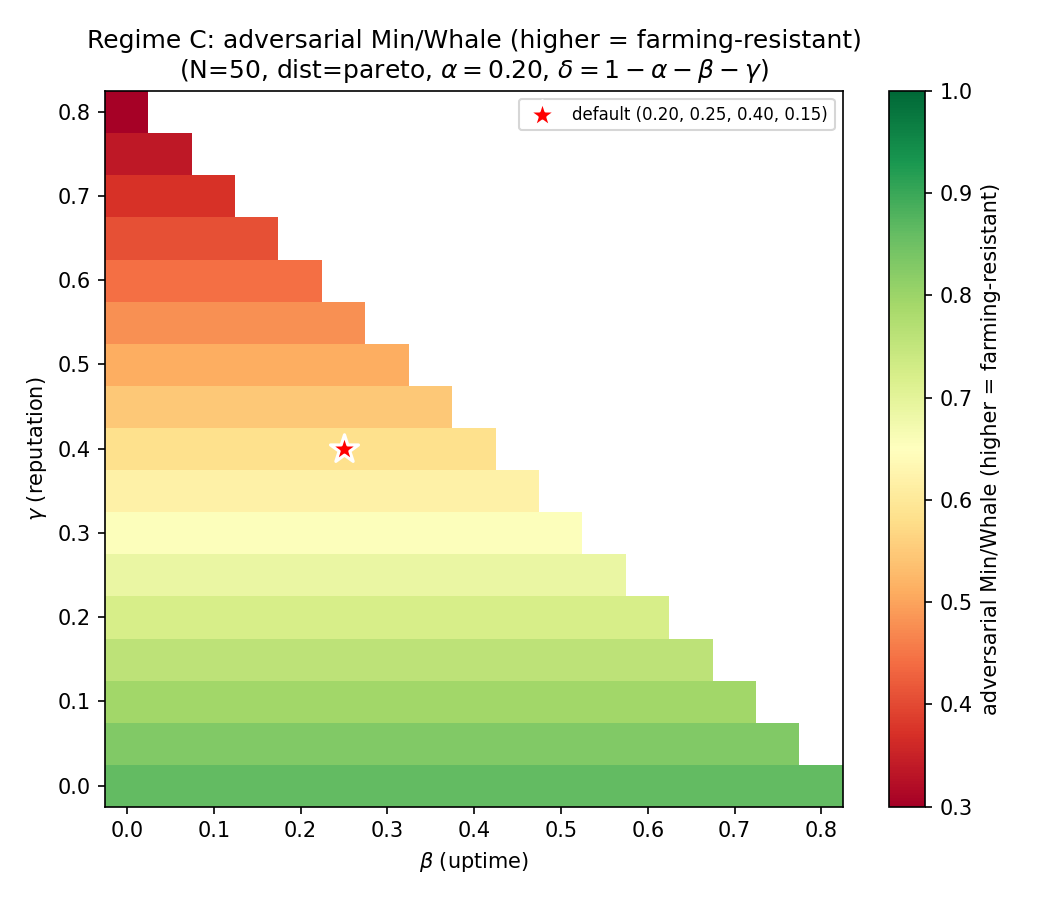}
		\caption{Farming resistance}
		\label{fig:simplex_farming}
	\end{subfigure}
	\caption{Parameter space characterization: (a)~objective function landscape on 4-D weight simplex with feasible region highlighted; (b)~reputation farming resistance versus $\gamma$.}
\end{figure}

\begin{figure}[H]
	\centering
	\includegraphics[width=0.70\columnwidth]{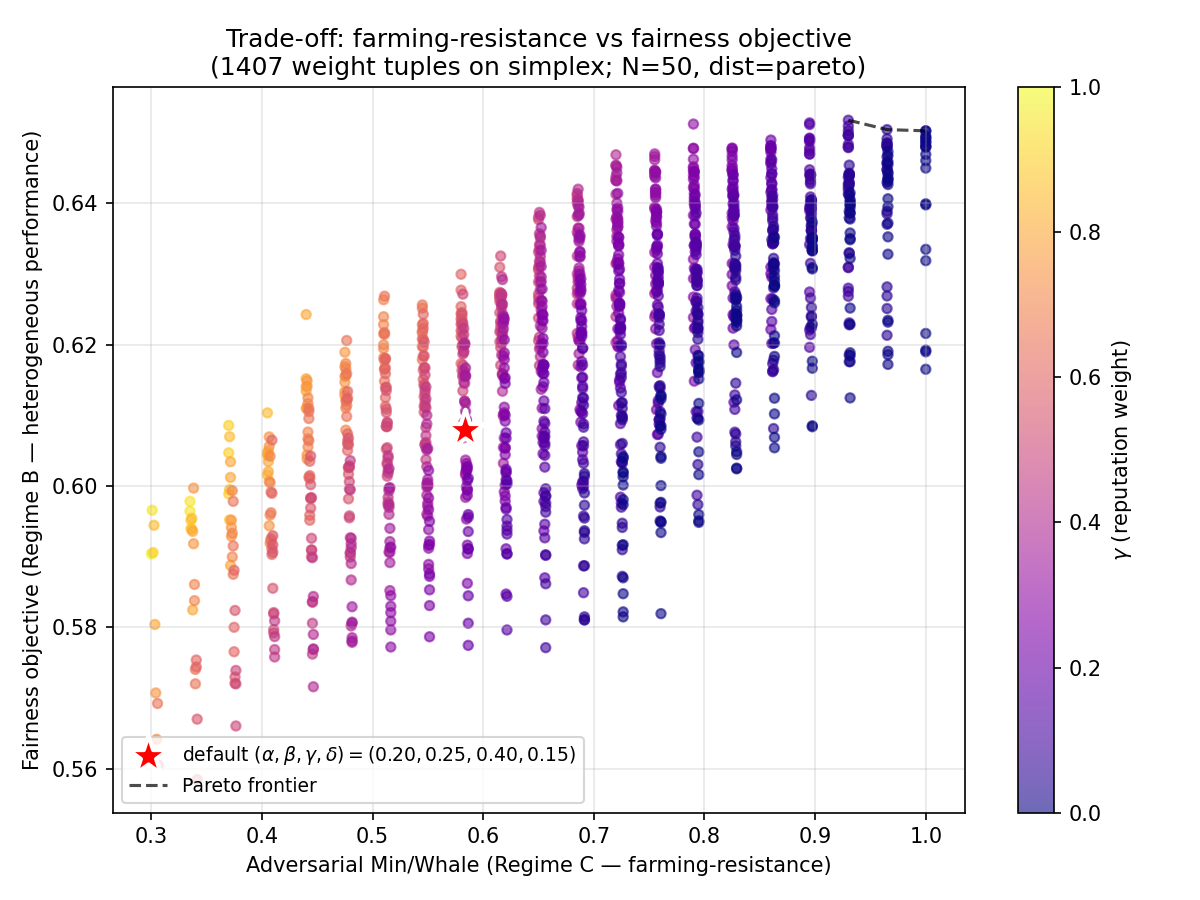}
	\caption{Pareto frontier: Min/Whale ratio versus farming resistance across 1,407 simplex evaluation points. Default parameterization is feasible but intentionally not Pareto-optimal for robustness.}
	\label{fig:pareto}
\end{figure}

\begin{figure}[H]
	\centering
	\begin{subfigure}[t]{1\columnwidth}
		\includegraphics[width=\textwidth]{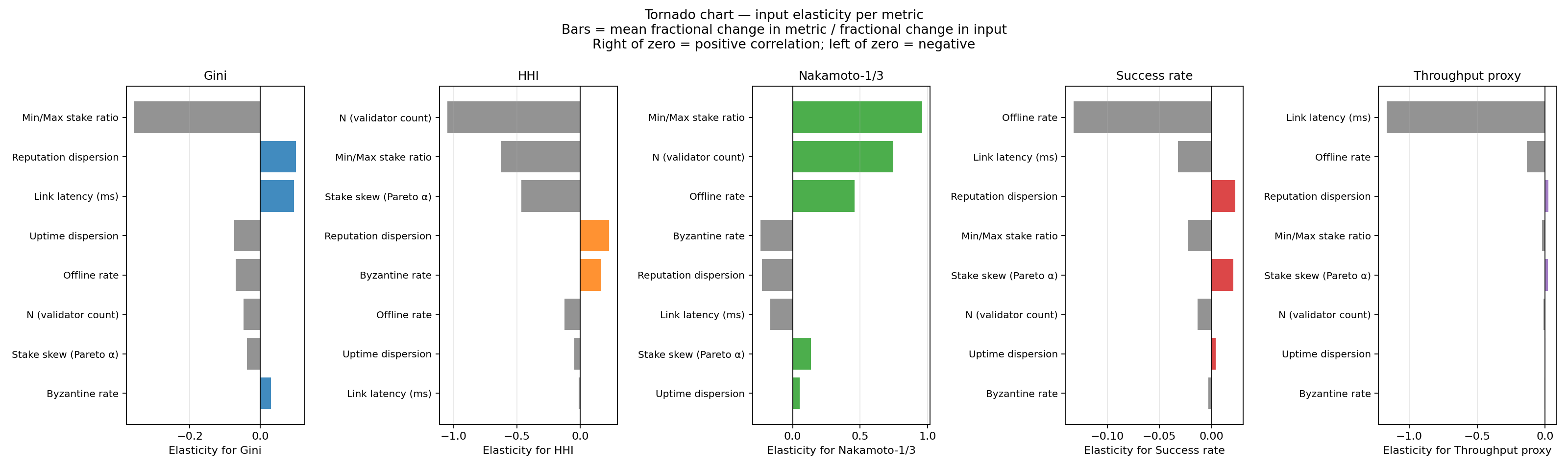}
		\caption{OAT elasticity (tornado)}
		\label{fig:tornado}
	\end{subfigure}
	\hfill
	\begin{subfigure}[t]{1\columnwidth}
		\includegraphics[width=\textwidth]{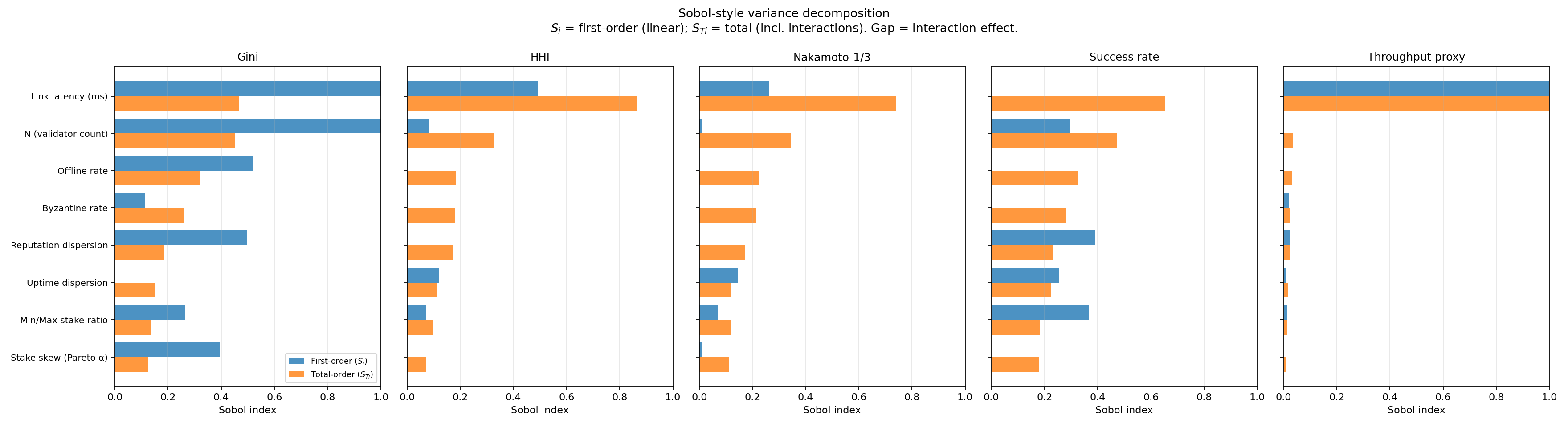}
		\caption{Sobol indices}
		\label{fig:sobol}
	\end{subfigure}
	\caption{Sensitivity analysis: (a)~One-At-a-Time elasticities---all success\_rate elasticities $|e| < 0.25$; (b)~first-order and total Sobol indices showing throughput dominated by link\_latency.}
\end{figure}

\begin{figure}[H]
	\centering
	\begin{subfigure}[t]{0.495\columnwidth}
		\includegraphics[width=\textwidth]{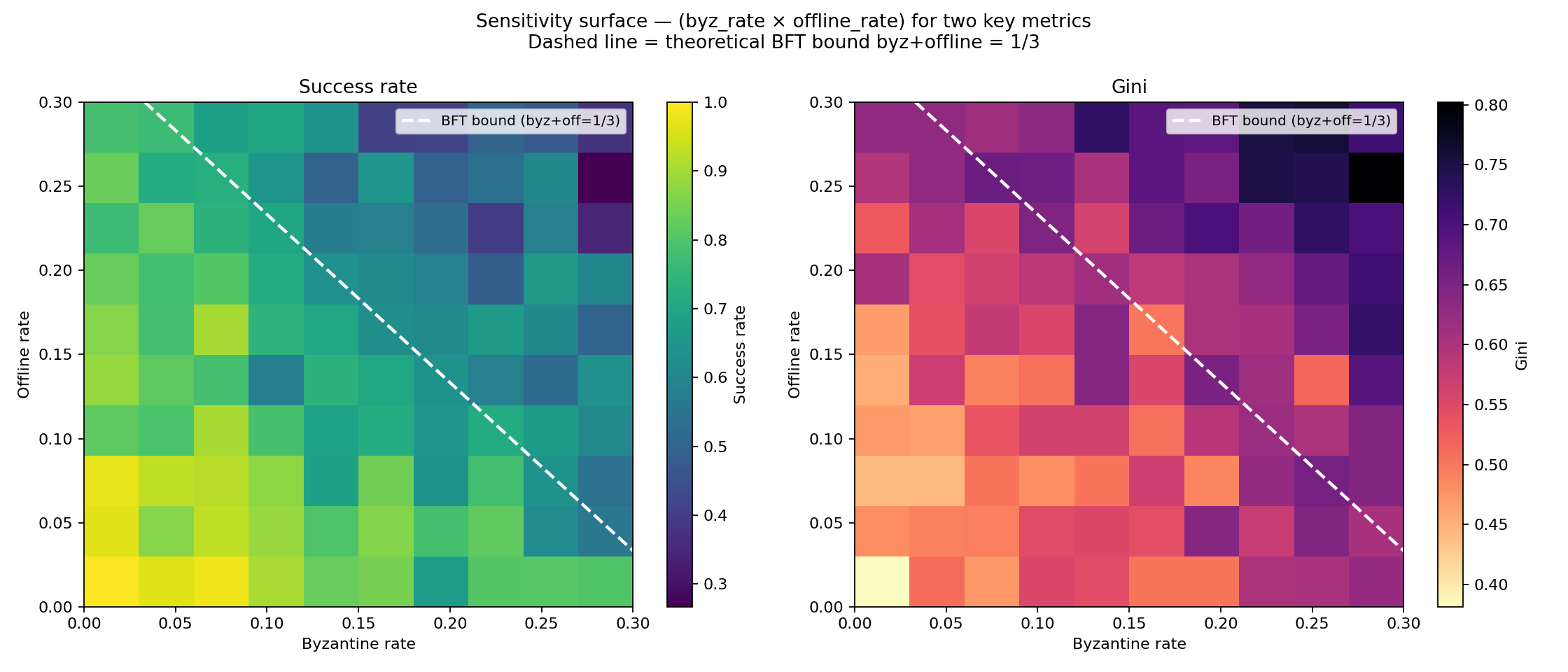}
		\caption{Sensitivity surface}
		\label{fig:surface}
	\end{subfigure}
	\hfill
	\begin{subfigure}[t]{0.495\columnwidth}
		\includegraphics[width=\textwidth]{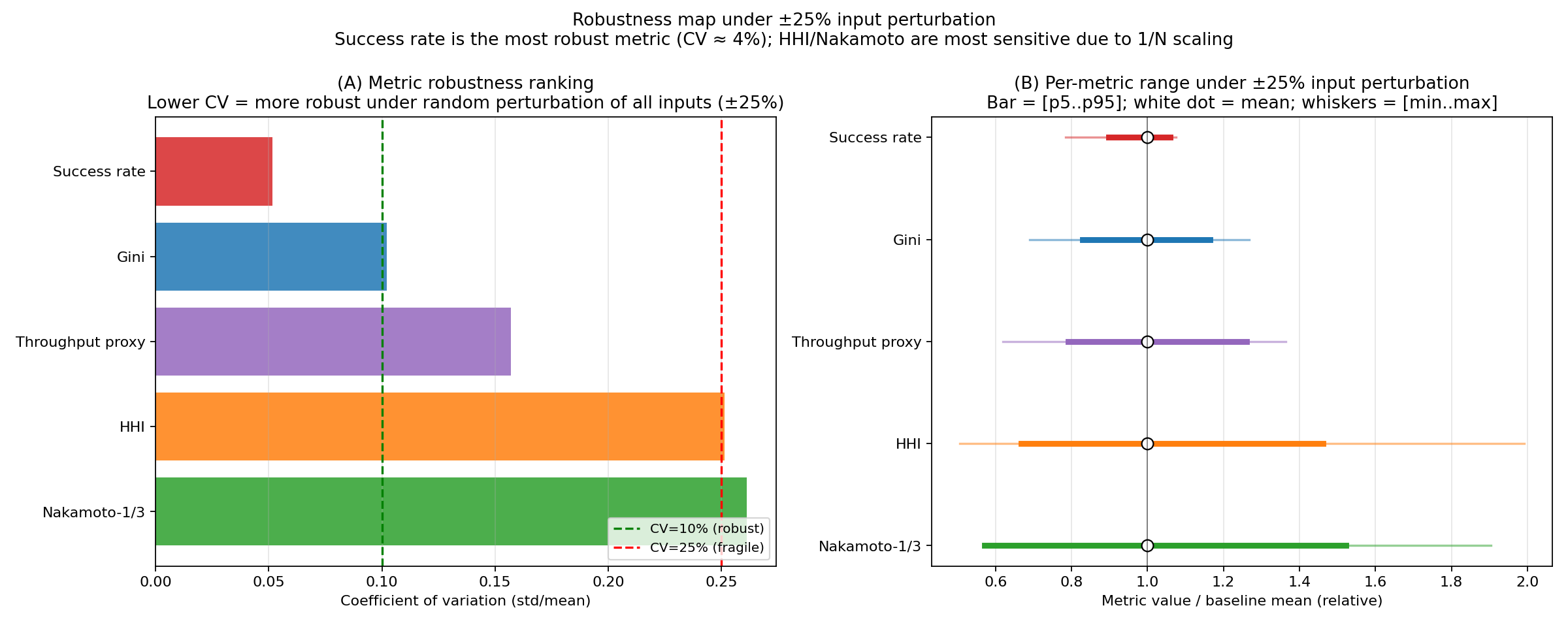}
		\caption{Robustness map}
		\label{fig:robustness}
	\end{subfigure}
	\caption{(a)~Success\_rate surface as a function of (byz\_rate, and offline\_rate)---BFT cliff at $1/3$ visible; (b)~robustness map under combined perturbation of $\pm 25\%$ (300 samples, CV $= 5.2\%$).}
\end{figure}

\begin{figure}[H]
	\centering
	\begin{subfigure}[t]{0.48\columnwidth}
		\includegraphics[width=\textwidth]{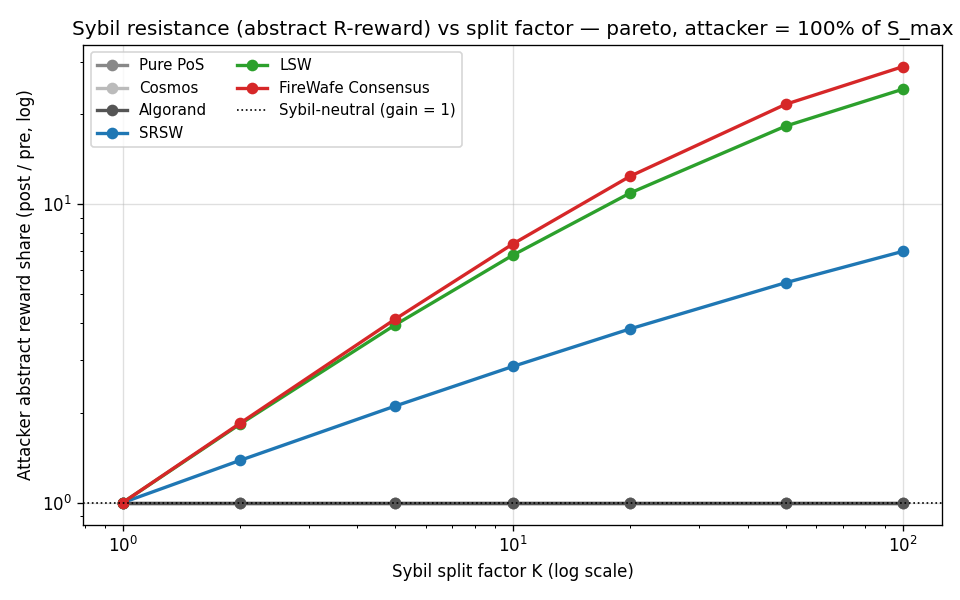}
		\caption{Sybil gain R-channel}
		\label{fig:sybil_reward}
	\end{subfigure}
	\hfill
	\begin{subfigure}[t]{0.48\columnwidth}
		\includegraphics[width=\textwidth]{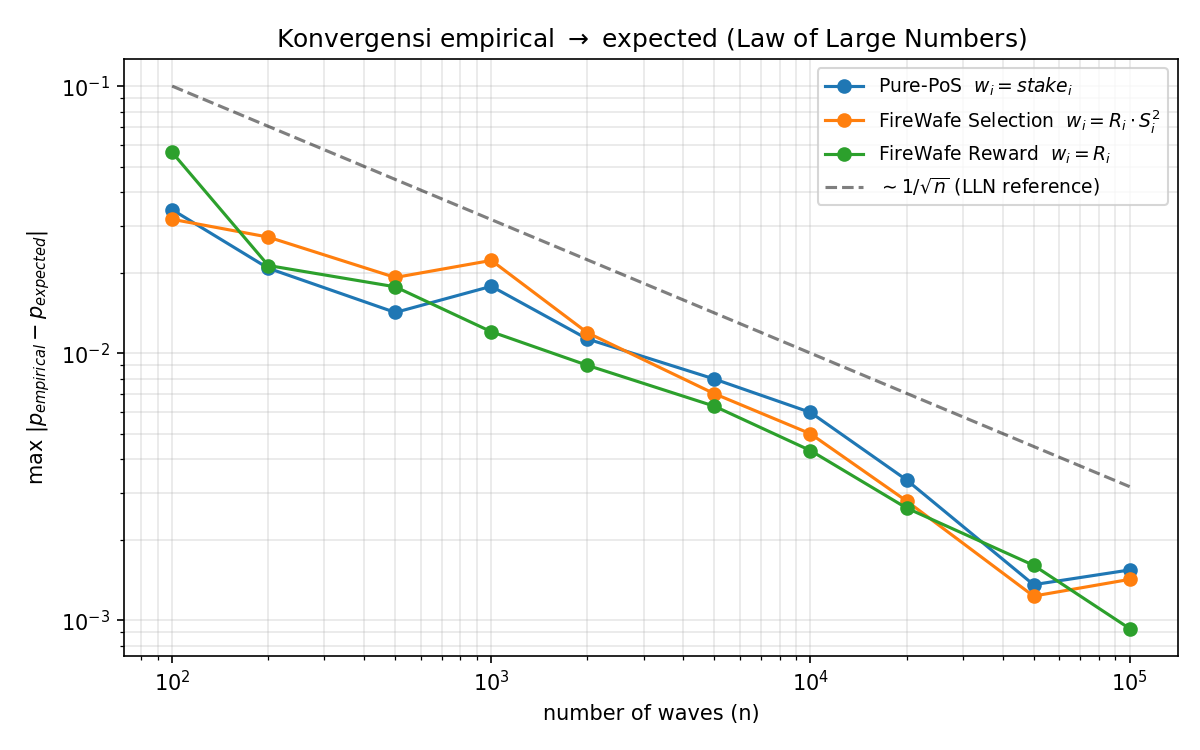}
		\caption{Anchor convergence}
		\label{fig:anchor_convergence}
	\end{subfigure}
	\caption{(a)~Reward-channel Sybil gain (isolated, $g_R > 1$; mitigated by selection channel in production); (b)~anchor share convergence at $O(1/\sqrt{T})$.}
\end{figure}

\begin{figure}[H]
	\centering
	\includegraphics[width=0.7\columnwidth]{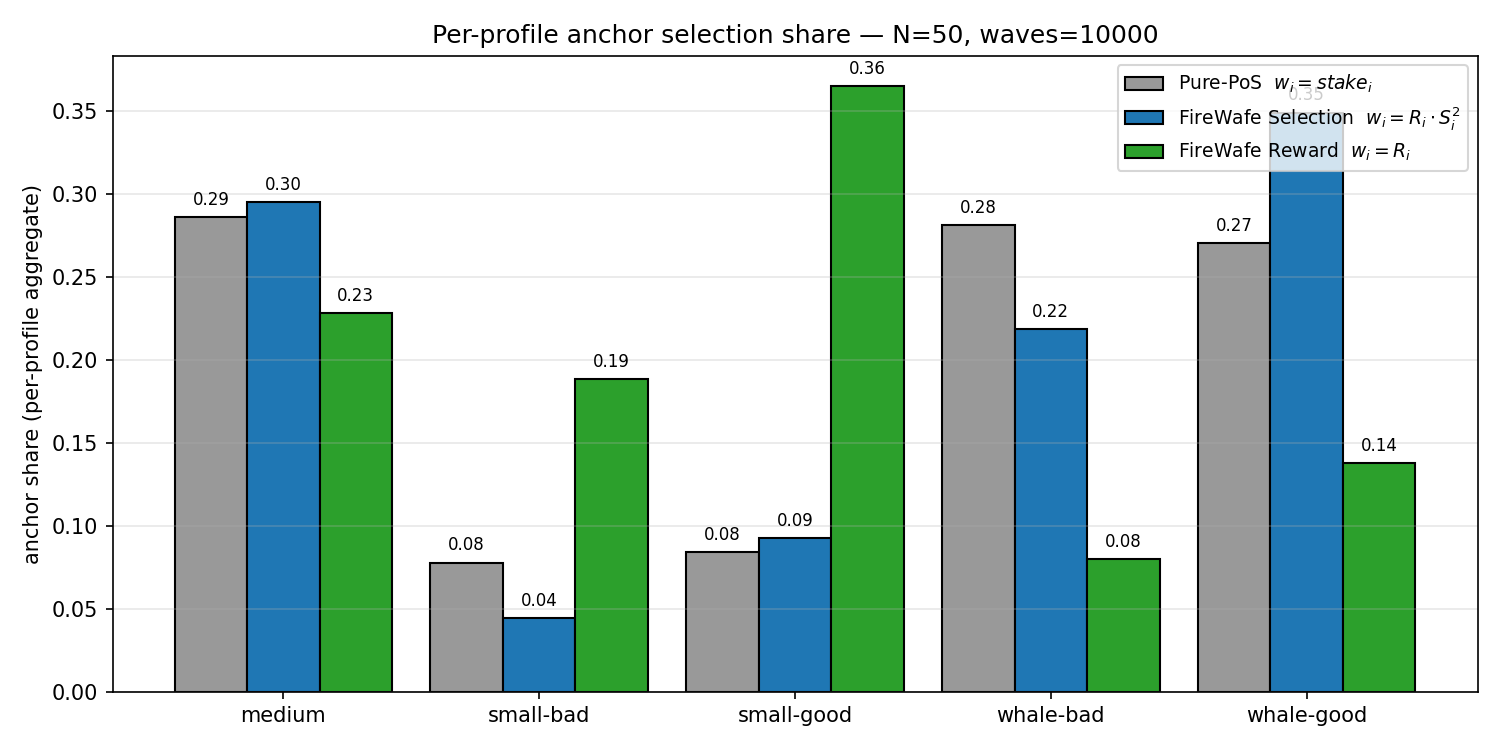}
	\caption{Per-profile anchor share allocation. FairWave selection reinforces whale-good (35\%) while penalizing whale-bad (22\%).}
	\label{fig:anchor_profile}
\end{figure}

\subsection*{Liveness Sensitivity Analysis}
\label{sec:appendix_liveness}

Table~\ref{tab:liveness_sensitivity} reports commit rate at key Byzantine fractions for each of the five liveness model parameters varied independently by $\pm 50\%$. All $9$ parameterizations preserve the three qualitative properties: monotone degradation (commit rate strictly decreasing with $b$), predicted absence of discontinuous cliff (drop between consecutive $b$ values $< 20$ pp), and commit rate $> 60\%$ at the theoretical boundary $b = 0.33$.

\begin{table}[H]
	\centering
	\caption{Liveness sensitivity to model parameters ($N=50$, $200{,}000$ waves).}
	\label{tab:liveness_sensitivity}
	\begin{tabular}{lcrrrl}
		\toprule
		\textbf{Parameter} & \textbf{Factor} & \textbf{$b=0.20$} & \textbf{$b=0.33$} & \textbf{Shape} \\
		\midrule
		$h_{\max}$ (max holdback) & $\times 0.5$ & $94.1\%$ & $74.1\%$ & OK \\
		$h_{\max}$ (max holdback) & $\times 1.0$ & $94.1\%$ & $74.1\%$ & OK \\
		$h_{\max}$ (max holdback) & $\times 1.5$ & $94.1\%$ & $74.1\%$ & OK \\
		\midrule
		$q_0$ (Byz. anchor block base) & $\times 0.5$ & $97.1\%$ & $78.9\%$ & OK \\
		$q_0$ (Byz. anchor block base) & $\times 1.0$ & $94.1\%$ & $74.1\%$ & OK \\
		$q_0$ (Byz. anchor block base) & $\times 1.5$ & $91.2\%$ & $69.4\%$ & OK \\
		\midrule
		$q_{\text{amp}}$ (block amplification) & $\times 0.5$ & $94.1\%$ & $82.3\%$ & OK \\
		$q_{\text{amp}}$ (block amplification) & $\times 1.0$ & $94.1\%$ & $74.1\%$ & OK \\
		$q_{\text{amp}}$ (block amplification) & $\times 1.5$ & $94.1\%$ & $68.1\%$ & OK \\
		\bottomrule
	\end{tabular}
\end{table}

\FloatBarrier
\end{document}